\newtheorem{theorem}{Theorem}[section]
\newtheorem{corollary}{Corollary}[theorem]
\newtheorem{lemma}{Lemma}[section]
\newtheorem{definition}{Definition}[section]
\def\Adv{\ensuremath{\mathcal{A}}}
\def \CharSymbols {\ensuremath{\{\perp, 0, 1\}}}
\newcommand{\Geom}{\mathsf{geom}}
\newcommand{\Unheard}{\mathsf{Unheard}}
\newcommand{\CompressedUnheard}{\mathsf{CompressedUnheard}}
\newcommand{\CharString}{\mathsf{CharString}}
\newcommand{\CompressedCharString}{\mathsf{CompressedCharString}}
\newcommand{\LeaderString}{\mathsf{LeaderString}}
\newcommand{\Advantage}{\mathsf{Advantage}}
\newcommand{\CompressedAdvantage}{\mathsf{CompressedAdvantage}}
\newcommand{\Margin}{\mathsf{Margin}}
\newcommand{\CompressedMargin}{\mathsf{CompressedMargin}}
\newcommand{\LatestHeard}{\mathsf{LatestHeard}}
\newcommand{\Reach}{\mathsf{Reach}}
\newcommand{\CompressedReach}{\mathsf{CompressedReach}}
\title{The Longest-Chain Protocol Under Random Delays}
\author{Suryanarayana Sankagiri$^*$, Shreyas Gandlur$^\dagger$, Bruce Hajek$^*$ \\
$^*$ University of Illinois at Urbana-Champaign \\
$^\dagger$ Princeton University}
\date{\today}
\begin{document}
\maketitle
 
\begin{abstract}
In the field of distributed consensus and blockchains, the synchronous communication model assumes that all messages between honest parties are delayed at most by a known constant $\Delta$. Recent literature establishes that the longest-chain blockchain protocol is secure under the synchronous model. However, for a fixed mining rate, the security guarantees degrade with $\Delta$. We analyze the performance of the longest-chain protocol under the assumption that the communication delays are random, independent, and identically distributed. This communication model allows for distributions with unbounded support and is a strict generalization of the synchronous model. We provide safety and liveness guarantees with simple, explicit bounds on the failure probabilities. These bounds hold for infinite-horizon executions and decay exponentially with the security parameter. In particular, we show that the longest-chain protocol has good security guarantees when delays are sporadically large and possibly unbounded, which is reflective of real-world network conditions.
\end{abstract}

\section{Introduction}\label{sec:introduction}

Over the past ten years, blockchains have generated tremendous interest by enabling decentralized payment systems. The term blockchain, first introduced by Satoshi Nakamoto in his design of Bitcoin \cite{nakamotobitcoin}, refers to the distributed data structure at the heart of these systems. Consensus protocols are used to ensure the consistency of blockchains among different parties. Although the data structure itself has remained fairly standard, associated consensus protocols have proliferated. We refer the reader to \cite{bano2019sok} and \cite{garay2020sok} for surveys on different blockchain consensus protocols. In this work, we restrict our attention to the {\em longest-chain protocol}, or Nakamoto consensus, which forms the backbone of various popular cryptocurrencies like Bitcoin and Ethereum. 

Many papers have formally studied this protocol's security under under a variety of modeling assumptions. These modeling assumptions vary, among other things, with respect to the nature of the leader election mechanism (Proof of Work-PoW \cite{garay2015bitcoin, pass2017analysis} versus Proof of Stake-PoS \cite{kiayias2017ouroboros, pass2017sleepy}), and the timing assumptions (continuous time \cite{li2020close, dembo2020everything} versus discrete time \cite{blum2020combinatorics, gazi2020tight}). Notwithstanding these modeling differences, some basic principles behind the security of the protocol have emerged. 

One fundamental principle is that the longest chain protocol is secure in the synchronous network model, under sufficient honest representation. In this model, a message sent at time $\tau$ will be delivered by time $\tau + \Delta$, where $\Delta$ is a system parameter. Under this assumption, \cite{dembo2020everything, gazi2020tight} show that the protocol is secure if and only if 
\begin{equation}\label{eq:adv_threshold}
    \beta < \frac{1-\beta}{1 + (1-\beta)f \Delta}
\end{equation}
where $\beta$ is the fraction of adversarial power and $f$ is the mining rate (or block production rate). The term $1/(1 + (1-\beta) f \Delta)$ can be thought of as a {\em discount factor} in the honest power, capturing the effect of the message delays. Succinctly put, \eqref{eq:adv_threshold} states that the security threshold of the protocol degrades with $f\Delta$.

A second principle is that when the tuple $(\beta, f, \Delta)$ satisfies \eqref{eq:adv_threshold}, the protocol satisfies both safety (all honest parties have consistent chains, except for the last few blocks) and liveness (new honest blocks are included in all parties' chains at a regular rate) security properties with high probability. In fact, the probability that these properties are violated decreases exponentially with a parameter $k$. Recent works (e.g., \cite{li2020close, blum2020combinatorics}) state security properties in a form such that the probability of violations remains negligible even for infinite horizon executions. The security statements in this form are more general, and imply bounds for the statements given in other works such as \cite{garay2015bitcoin, kiayias2017ouroboros}. The aforementioned principles hold for both PoW and PoS versions of the protocol.

In real-world conditions, worst-case message delays may be much larger than typical delays. Therefore, the longest-chain protocol may have better security guarantees than those suggested by analysis which sets $\Delta$ to the maximum possible delay. The use of the random delay model, as proposed in this paper, formalizes this intuition.

\subsection{Our Contributions}\label{sec:our_contributions}
This paper studies the security of the longest chain protocol in a network with random, possibly unbounded, delays. Briefly, each peer-to-peer communication is subject to an independent and identically distributed (i.i.d.) delay. Thus, different recipients of a broadcast may receive the message at different times. This communication model is a generalization of the synchronous model, and has not been studied in prior work on blockchain security.

Drawing inspiration from statistical physics, this paper states and distinguishes between two forms of security properties: \textit{intensive} and \textit{extensive}. Intensive security properties capture the security of localized portions of blockchains, whereas extensive security properties provide global security guarantees. Prior works typically state properties in only one of these forms; those works that state properties in both forms do not formally distinguish them. We show that guarantees for the intensive forms imply guarantees for the extensive forms.

Our main result, Theorem \ref{thm:main}, states that the longest-chain protocol satisfies the settlement and chain quality properties in the random delay model, except with probability that decays exponentially in a wait-time (or security parameter) $k$. These properties are intensive forms of safety and liveness, and pertain to an infinite-horizon execution. We provide explicit error bounds. As in the synchronous model, the security guarantees hold under appropriate bounds on the adversarial power and the mining rate.

Our work highlights the dual role of communication delays: these delays have both global and local effects. Delays in messages from past leaders to future leaders have global effect: they influence the growth of the longest chain and impact the security of all honest parties. We generalize the analysis tools developed in the Ouroboros line of papers \cite{kiayias2017ouroboros, david2018ouroboros, badertscher2018ouroboros, blum2020combinatorics} to handle these delays (e.g., see Section \ref{sec:special_honest_slots}, which describes a generalization of characteristic strings). In contrast, delays in messages from leaders to a given honest observer $h$ have local impact: they affect the length of the chain held by $h$. We define a new local metric called $\Unheard_h$ (see Section \ref{sec:unheard}) to handle these delays. Theorem \ref{thm:main} reflects this dual role of delays. The error bounds of the security statements include two terms: one is a bound on atypical behavior of the characteristic string; the other is a bound on atypical behavior of $\Unheard_h$ for every honest party $h$. Note that a given party may be a leader in one context and an observer in another. 

\subsection{Comparison with Prior Work}\label{sec:related_work}
\paragraph{Communication Model} We compare the random delay model of this work to the \textit{partially synchronous model} \cite{dwork1988consensus} and the \textit{sleepy model} \cite{pass2017sleepy} of communication. The partially synchronous model assumes that message delays are unbounded until an adversarially chosen time $T^*$, and are bounded thereafter. In this model, the longest-chain protocol is secure only after $O(T^*)$ time, as shown in \cite{neu2020ebb}. In the sleepy model, the adversary can put an honest party to sleep for an arbitrary period of time. Sleepy honest parties have unbounded delay (equivalently, they do not communicate), while the awake parties have bounded delays. The longest-chain protocol is secure in the sleepy model, provided the fraction of awake honest parties exceeds the fraction of corrupt parties (see \cite{pass2017sleepy}).

In both models, unbounded delays are localized--to select period(s) of time in the partially synchronous model and to select parties in the sleepy model. In comparison, the random delay model conveys a more homogeneous network setting.  Here, message delays from any honest party at any time may be large--across parties and time simultaneously. Although each of these models describe communication settings with sporadic large delays, they capture different facets of sub-optimal network behavior. Studying the same protocol in different models provides a better understanding of its real-world performance.

Two other works \cite{fanti2019barracuda, gopalan2020stability} study the longest-chain protocol under random, unbounded delay. These works model the network as a graph of inter-connected nodes, and assume that delays between two neighboring nodes in the network are exponentially distributed and i.i.d. However, neither of these works analyze an adversary trying to disrupt the security of the protocol. In this work, we model point-to-point communication instead of communication over a graph. We also allow for general delay distributions.

\paragraph{Security Analysis} Our statement, and proof, of the settlement (safety) property draw inspiration from Blum et al. \cite{blum2020combinatorics}. Blum et al. show that PoS longest-chain protocols satisfy safety with an error probability that decays exponentially in the wait-time $k$. Moreover, the proof in \cite{blum2020combinatorics} yields explicit expressions for the constants in the error bounds. For PoW models, \cite{li2020close} provides explicit, exponentially decaying error bounds for intensive security properties.

The analysis of \cite{blum2020combinatorics}, which is for the special case $\Delta = 0$, can be extended to any constant $\Delta$ as shown in \cite{david2018ouroboros, badertscher2018ouroboros}. Similarly, we adapt the analysis to the random delay model. We generalize the notion of $\Delta$-isolated slots in \cite{david2018ouroboros, badertscher2018ouroboros} to that of \textit{special honest slots}, retaining the property that blocks from these slots must be at different heights. The statement and proof of the chain quality property in this work is inspired by \cite{badertscher2018ouroboros}. Our work focuses on the intensive form of this property, which also applies to the extensive form, while the analysis in \cite{badertscher2018ouroboros} is only for the extensive property.

The works of Dembo et al. \cite{dembo2020everything} and Gazi et al. \cite{gazi2020tight} give a tight characterization of the security regime of the longest chain protocol via \eqref{eq:adv_threshold}. The security threshold is obtained by comparing the growth rate of the adversarial chain with that of the honest tree (the private attack). Obtaining an expression for the growth rate of the honest tree in the random delay model, and extending the results of \cite{dembo2020everything, gazi2020tight} to this model are directions for future research.

\section{The System Model}\label{sec:model}

\subsection{Preliminaries}\label{sec:preliminaries}
The protocol proceeds in discrete time slots that are indexed by $\mathbb{N}$ and runs for an infinite duration. We assume that clocks of all parties are perfectly synchronized. Blocks are treated as abstract data structures containing an integer timestamp, a hash pointer to a parent block with a smaller timestamp, a cryptographic signature of the block's proposer, some transactions and other relevant information. A special genesis block, with timestamp $0$ and no parent, is known to all parties at the start of the protocol.  We assume the existence of a leader election mechanism which selects a subset of parties in each time slot to be leaders for that slot. Only leaders can propose blocks with the corresponding timestamp. This mechanism is an abstraction of the mining process in PoW systems or the leader election protocol in PoS systems.

\paragraph{Parties in the protocol}
The parties in the protocol comprise of honest ones and a single adversary $\mathcal{A}$. (Replacing all corrupt parties by a single one is done for simplicity). The set of honest parties is represented by $\mathcal{H}$ and may be finite or infinite. Arbitrary honest parties are denoted by $h, h_1, h_2,$ etc. In our model, the adversary can never corrupt an honest party and the honest parties never go offline. Honest parties follow the longest-chain protocol, while the adversary can deviate from the protocol arbitrarily. The precise difference between honest and adversarial actions are given in Section \ref{sec:slot}.

\paragraph{Blockchains} From any block, a unique sequence of blocks leading up to the genesis block can be identified via the hash pointers. We call this sequence a \textit{blockchain}, or simply a \textit{chain}. The convention is that the genesis block is the first block of the chain, and the terminating block is called the \textit{tip}. The timestamps of blocks in a blockchain must strictly increase, going from the genesis to the tip. At any given slot, honest parties store a single chain in their memory. We use $\mathcal{C}^h_i$ to denote the chain held by an honest party $h$ at (the end of) slot $i$. We use $\mathcal{C}[i_1:i_2]$ to represent the portion of a chain $\mathcal{C}$ consisting of blocks with timestamps in the interval $\{i_1, \ldots, i_2\}$.

\paragraph{Blocktrees} The set of all blocks generated up to a given slot $i$ forms a directed tree. Let $\mathcal{F}_i$ be the directed graph $(V, E)$, where $V$ is the set of blocks generated up to slot $i$ and $E$ is the set of parent-child block pairs. These edges point from parent to child, in the opposite direction of the hash pointers. The genesis block is the root of the tree, with no parent. In addition, the timestamp of block $v$ is denoted by $\ell(v)$. Every blockchain $\mathcal{C}^h_i$ is a directed path in $\mathcal{F}_i$ that begins at the genesis block and ends at any other block. $\mathcal{F}_i$ includes blocks held privately by the adversary.

\subsection{Details of a Slot}\label{sec:slot}
Within a slot, the following events occur in the given order. This describes the prescribed honest protocol, and also specifies the adversary's powers.
\begin{itemize}
    \item (\textbf{Leader Election Phase}) All parties learn the slot leaders through the leader election mechanism.
    \item (\textbf{Honest Send Phase}) Honest leaders create a new block, append it to their chain, and broadcast this new chain to all parties. The communication network assigns random delay to each point-to-point message.
    \item (\textbf{Adversarial Send Phase}) $\mathcal{A}$ receives all chains sent (if any) in the Honest Send Phase, along with their respective message delays. $\mathcal{A}$ may then create some new blocks with timestamps of any slot for which it was elected a leader and may create multiple blocks with the same timestamp. It sends each new block (along with the preceding blockchain) to an  arbitrary subset of honest parties. 
    \item (\textbf{Deliver Phase}) Messages from honest parties slated for delivery in the current slot and $\mathcal{A}$'s messages from the current slot are delivered to the appropriate honest parties. $\mathcal{A}$ can also choose to deliver any honest messages ahead of schedule.
    \item (\textbf{Adopt Phase})  Each honest party updates its chain if it receives any chain strictly longer than the one it holds. If an honest party receives multiple longer chains, it chooses the longest one with $\mathcal{A}$ breaking any ties.
\end{itemize}

\subsection{Leader Election}\label{sec:leader_election}
We model the leader election mechanism such that the sets of leaders in different slots are independent and identically distributed subsets of $\mathcal{H}\cup \{ \mathcal{A}\}.$ For example, the leader election process in the first few slots may be: $\{h_1\}$, $\emptyset$, $\{\mathcal{A}\}$, $\{h_2, h_3\}$, $\emptyset$, $\{h_4, \mathcal{A}\}$, $\{h_1, h_5, \mathcal{A}\}.$ Let $\mathcal{L}_s$ denote the set of leaders in slot $s$.   The adversary cannot influence the leader election mechanism.
Let $A_s=1$ if $\mathcal{A} \in \mathcal{L}_s$ and $A_s=0$ otherwise, and let $N_s$ be the number of honest leaders in slot $\mathcal{L}_s$. Note that $(N_s, A_s)$ may have any possible joint distribution, but the process $\{(N_s, A_s)\}_{s \geq 1}$ is i.i.d. Let $(N, A)$ denote a representative random tuple of the aforementioned process. Define
\begin{itemize}
    \item $f \triangleq \mathbb{P}(A + N > 0)$. $f$ is the probability of a \textit{non-empty slot}, i.e., a slot with one or more leaders. In a sense, it is the mining rate of the protocol.
    \item $\alpha \triangleq \mathbb{P}(N = 1 \ \& \ A = 0 \, \vert \, A + N > 0)$. $\alpha$ is the probability of having a unique honest leader in a slot, given that the slot is a non-empty slot.
\end{itemize}

\subsection{The Communication Model}\label{sec:communication_network}
We now describe our model of the communication network, the \textit{random delay model}. Every message sent by one honest party to another is subject to a random delay, which can take any value in $\mathbb{Z}_+$. Note that a broadcast is a set of different point-to-point messages, each of which is subject to an independent delay. We adopt the convention that the minimum possible delay is zero; in this case, a message sent in a time slot is received by the end of that slot. The delays of different messages are i.i.d.; let $\Delta$ denote a random variable with this distribution, called the \textit{delay distribution}. The synchronous model is a special case with a constant $\Delta$.

For technical reasons, we require that the delay distribution has a \textit{non-decreasing failure rate function}. The failure rate function for the delay distribution $\Delta$, is defined as 
\[ \textsf{Failure Rate}(s) = \begin{cases}\mathbb{P}(\Delta = s \vert \Delta \geq s) & \text{if } \mathbb{P}(\Delta \geq s)  > 0\\
1 & \text{if } \mathbb{P}(\Delta \geq s)  = 0\end{cases}\]
A geometric random variable has a constant failure rate. A constant $\Delta$ has a failure rate function that is $0$ up to the constant and $1$ thereafter. Therefore, they are both admissible in our model. A consequence of a non-decreasing failure rate is that, for all $i \geq 0$ and $s \geq 0$ such that $\mathbb{P}(\Delta \geq s) > 0$, $\mathbb{P}(\Delta \geq s + i \vert \Delta \geq s) \leq \mathbb{P}(\Delta \geq i)$.

Given the power of the adversary to deliver honest messages earlier than scheduled, a system with a given delay distribution $\Delta$ can be subsumed by a system that has a different delay distribution $\Tilde{\Delta}$, provided the latter stochastically dominates the former. If $\Tilde{\Delta}$ satisfies the non-decreasing failure rate restriction, guarantees for a system with delay $\Delta$ can be given in terms of the distribution $\Tilde{\Delta}$.

The non-decreasing failure rate restriction is not a fundamental limitation of the model, but rather of the method of analysis. One technique of removing this restriction is to assume a slightly different form of the leader election process. This is described next.

\subsection{One-Time Leader Model} \label{sec:infinite_users}
Consider an alternate model of the leader election process in which each honest party can be chosen as a leader at most once. We call this model the \textit{one-time leader model} to distinguish it from the \textit{i.i.d. leader model} described in Section \ref{sec:leader_election}. Let the set of honest parties $\mathcal{H}$ be divided into two groups, leaders $\mathcal{M}$ and observers $\mathcal{O}$. The set of miners is countably infinite and are indexed $m_1, m_2, \ldots$. The set of observers may be finite or infinite. Leaders are chosen among parties in $\mathcal{M}$ in the order of their indexing. In this model, the sets of leaders in each slot is no longer independent. However, the tuples $\{(N_s, A_s)\}_{s \geq 1}$ are i.i.d., where $N_s$ and $A_s$ have the same interpretation as before. The parameters $f$ and $\alpha$ are also defined in the same manner as before.

This alternate leader election model allows us to extend the security analysis to any delay distribution. In particular, $\Delta$ can now be infinity with some probability. A delay of infinity for a message implies that the adversary can choose to deliver the message at any time of its choice, or never at all.

\section{The Desired Security Properties}\label{sec:desired_properties}

The security properties defined in this section, and the guarantees for them hold for both models introduced in Section \ref{sec:model}: the i.i.d. leader model with delays having a non-decreasing failure rate function and the one-time leader model with general delay distributions. Each security property refers to a desirable condition over an \textit{execution}. Formally, an execution of the protocol refers to a particular instantiation of the random components (i.e., leader election and communication delays) and the actions of the adversary. Whether a certain property holds or not in an execution depends on both these factors. The adversary's actions can be arbitrary and our theorems are stated for the worst-case scenario of all possible adversarial actions.

\subsection{Property Definitions}\label{sec:property_definitions}
We first define the settlement property, which is an intensive form of safety (see \cite{blum2020combinatorics} for the original definition).
\begin{definition}[Settlement]\label{def:settlement}
    In an execution, the \emph{settlement property with parameters $s, k \in \mathbb{N}$ and $\mathcal{I} \subseteq \mathcal{H}$} holds if, for any pair of honest parties $h_1, h_2 \in \mathcal{I}$ and slots $i_1, i_2$ such that $s + k \leq i_1 \leq i_2$, it holds that $\mathcal{C}^{h_1}_{i_1}[1:s] = \mathcal{C}^{h_2}_{i_2}[1:s]$. 
\end{definition}
We refer to the settlement property with parameters $s, k$, and $\mathcal{I}$ as the ($s, k, \mathcal{I}$)-settlement property for brevity. We use a similar convention for other properties too. The ($s, k, \mathcal{I}$)-settlement property, roughly speaking, means that parties in $\mathcal{I}$ will agree on the order of blocks mined up to slot $s$ after $k$ more slots.
We now state the common prefix property, an extensive form of safety.
\begin{definition}[Common Prefix]\label{def:common_prefix}
    In an execution, the \emph{common prefix property with parameters $T, k \in \mathbb{N}$ and $\mathcal{I} \subseteq \mathcal{H}$} holds if, for any pair of honest players $h_1, h_2 \in \mathcal{I}$ and slots $s, i_1, i_2$ such that $s \leq T$ and $s + k \leq i_1 \leq i_2$, it holds that $\mathcal{C}^{h_1}_{i_1}[1:s] = \mathcal{C}^{h_2}_{i_2}[1:s]$.
\end{definition}
The intensive and extensive forms of safety have a subtle difference, which we illustrate with an example. Let $T$ be some large number. The ($T, k, \mathcal{I}$)-settlement property means the parties in $\mathcal{I}$ agree forever after slot $T+k$ about the chain up to slot $T$. This immediately implies that all parties in $\mathcal{I}$ agree forever about the chain up to slot $s$, after slot $T+k$, for any $s \leq T$. This does not, however, imply that all parties in $\mathcal{I}$ agree forever about the chain up to time $s$, after slot $s+k$, for all $s$ with $s \leq T$. This latter statement is captured by the extensive form given by the common prefix property. Formally, the intensive and extensive forms of common prefix are related as follows:
\begin{lemma}\label{lem:settlement_CP}
Fix a set of honest players $\mathcal{I} \subseteq \mathcal{H}$ and parameters $T, k \in \mathbb{N}$. If the settlement property holds with parameters $s, k$ and $\mathcal{I}$ for all $s \leq T$, then the common prefix property holds with parameters $T, k$ and $\mathcal{I}$. 
\end{lemma}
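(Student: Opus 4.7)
The plan is to observe that the conclusion of the common prefix property, once the slot parameter $s$ is fixed, is syntactically identical to the conclusion of the settlement property at that same $s$. Thus the proof amounts to nothing more than unfolding both definitions and applying the hypothesis to each admissible value of $s$.

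Concretely, I would fix an arbitrary tuple $(h_1, h_2, s, i_1, i_2)$ satisfying the hypotheses of the $(T, k, \mathcal{I})$-common prefix property, i.e., $h_1, h_2 \in \mathcal{I}$, $s \leq T$, and $s + k \leq i_1 \leq i_2$. Since by assumption the $(s, k, \mathcal{I})$-settlement property holds for this particular value of $s$ (as $s \leq T$), Definition \ref{def:settlement} applied with this pair of parties and these two slots yields $\mathcal{C}^{h_1}_{i_1}[1:s] = \mathcal{C}^{h_2}_{i_2}[1:s]$, which is exactly what Definition \ref{def:common_prefix} requires. Because $(h_1, h_2, s, i_1, i_2)$ was arbitrary subject to the common prefix hypotheses, the conclusion holds for all such tuples.

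There is no real obstacle here; the statement is a direct reindexing of quantifiers. The content of the lemma is conceptual rather than technical: it formalizes the informal claim in the preceding paragraph that intensive (per-$s$) safety implies extensive (uniform over $s \leq T$) safety. The converse direction, which is the nontrivial content, is not asserted by the lemma, and indeed the discussion in the excerpt indicates that the two forms are genuinely distinct when one considers ``forever'' guarantees over infinite horizons.
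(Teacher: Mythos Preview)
Your proposal is correct and follows essentially the same approach as the paper: fix an arbitrary admissible tuple $(h_1, h_2, s, i_1, i_2)$ with $s \leq T$ and $s+k \leq i_1 \leq i_2$, and invoke the $(s,k,\mathcal{I})$-settlement hypothesis for that $s$ to obtain $\mathcal{C}^{h_1}_{i_1}[1:s] = \mathcal{C}^{h_2}_{i_2}[1:s]$. Your additional commentary about the conceptual (rather than technical) content of the lemma is apt and matches the paper's intent.
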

\begin{proof}
    Pick any pair of honest users, $h_1, h_2 \in \mathcal{I}$ and any slot $s \leq T$. Pick any $i_1, i_2$ satisfying $s+k \leq i_1 \leq i_2$. Consider the chains held by $h_1, h_2$ at slots $i_1, i_2$ respectively: $\mathcal{C}^{h_1}_{i_1}, \mathcal{C}^{h_2}_{i_2}$. We wish to show that $\mathcal{C}^{h_1}_{i_1}[1:s] = \mathcal{C}^{h_2}_{i_2}[1:s]$. But this follows from the settlement property with parameters $s, k$ and $\mathcal{I}$.
\end{proof}

We next state the chain quality property, first in its intensive form and then in its extensive form.
\begin{definition}[Intensive Chain Quality]\label{def:chain_quality_intensive}
    In an execution, the \emph{intensive chain quality property} with parameters $\mu \in (0, 1)$, $s, k \in \mathbb{N}$ and $\mathcal{I} \subseteq \mathcal{H}$ holds if, for any honest players $h \in \mathcal{I}$ and slot $i \geq s + k$, $\mathcal{C}^{h}_{i}[s+1:s+k]$ contains greater than $k f \mu $ honestly mined blocks. 
\end{definition}
\begin{definition}[Extensive Chain Quality]\label{def:chain_quality_extensive}
    In an execution, the \emph{extensive chain quality property} with parameters $\mu \in (0, 1)$, $T, k \in \mathbb{N}$ and $\mathcal{I} \subseteq \mathcal{H}$ holds if, for any honest players $h \in \mathcal{I}$ and slots $s, i$ such that $s \leq T$ and $i \geq s + k$, $\mathcal{C}^{h}_{i}[s+1:s+k]$ contains greater than $k f \mu $ honestly mined blocks. 
\end{definition}
The relation between the intensive and extensive versions of chain quality parallels that between the settlement and common prefix property noted in Lemma \ref{lem:settlement_CP}. We state the relation formally in Lemma \ref{lem:intensive_extensive_CQ}, but omit the proof. 
\begin{lemma}\label{lem:intensive_extensive_CQ}
Fix a set of honest players $\mathcal{I} \subseteq \mathcal{H}$ and parameters $\mu \in (0, 1)$, $T, k \in \mathbb{N}$. If the intensive chain quality property holds with parameters $\mu, s, k$ and $\mathcal{I}$ for all $s \leq T$, then the extensive chain quality property holds with parameters $\mu, T, k$ and $\mathcal{I}$. 
\end{lemma}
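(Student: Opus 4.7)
The proof parallels that of Lemma \ref{lem:settlement_CP} almost exactly, so the plan is to mimic it. Specifically, I would fix a set $\mathcal{I} \subseteq \mathcal{H}$ and parameters $\mu \in (0,1)$, $T, k \in \mathbb{N}$, and assume that the intensive chain quality property holds with parameters $\mu, s, k, \mathcal{I}$ for every $s \leq T$. The goal is to verify the extensive chain quality property with parameters $\mu, T, k, \mathcal{I}$ directly from the definition.

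To do this, I would pick an arbitrary honest player $h \in \mathcal{I}$ and arbitrary slots $s, i$ satisfying $s \leq T$ and $i \geq s + k$. The conclusion to establish is that $\mathcal{C}^{h}_{i}[s+1:s+k]$ contains more than $k f \mu$ honestly mined blocks. Since $s \leq T$, the hypothesis supplies the intensive chain quality property with parameters $\mu, s, k, \mathcal{I}$ for this particular $s$. Instantiating that property at the same player $h \in \mathcal{I}$ and the same slot $i \geq s + k$ yields exactly the required statement about $\mathcal{C}^{h}_{i}[s+1:s+k]$.

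Because the quantifiers line up without any rearrangement, there is no real obstacle here; the lemma is essentially a bookkeeping statement that converts a family of per-$s$ intensive guarantees (for $s \leq T$) into a single extensive guarantee ranging over $s \leq T$. The only thing to be careful about is to introduce the universally-quantified variables in the correct order (first fix $h$, $s$, $i$ from the extensive definition, then invoke the intensive property for that specific $s$), so that the argument is clearly a definitional unfolding rather than a swap of quantifiers. Given this, the formal write-up can be just a few lines long, mirroring the proof of Lemma \ref{lem:settlement_CP}, which is why the paper is justified in omitting it.
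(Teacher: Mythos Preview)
Your proposal is correct and matches the paper's intent: the paper explicitly omits the proof, noting it parallels Lemma~\ref{lem:settlement_CP}, and your argument does precisely that by fixing $h, s, i$ from the extensive definition and invoking the intensive property for that $s$.
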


\subsection{Main Result}\label{sec:main_results}

\begin{definition}[$\epsilon$-honest majority]\label{def:eps_honest_maj}
    Consider a blockchain protocol where the leader election process has parameters $\alpha$ and $f$; and the communication network's typical delay is represented by a random variable $\Delta$. Let $G \sim \Geom (f)$ be a random variable that is independent of $\Delta$. Let $p \triangleq \alpha\, \mathbb{P}(\Delta < G).$    Suppose the system's parameters are such that $p > 0.5$. Let $\epsilon$ be such that $p = (1+\epsilon)/2$. We say that such a protocol has \emph{$\epsilon$-honest majority}.
\end{definition}

Note that for any $\alpha > 0.5$ and $\Delta$ such that $\mathbb{P}(\Delta < \infty) = 0$, one can choose $f > 0$ such that $p > 0.5$. 

Our main result states that the intensive safety and liveness properties hold with high probability, irrespective of the behavior of the adversary. 
\begin{theorem}[Main Result]\label{thm:main}
    Consider a blockchain protocol with $\epsilon$-honest majority. Then for any $\mathcal{I} \subseteq \mathcal{H}$, $s\in \mathbb{N}$ and $k \in \mathbb{N}$,
    \iftoggle{arxiv}
    {
    \begin{equation}  \label{eq:main_settlement_bnd}
        \mathbb{P}((s, k, \mathcal{I})\text{-settlement property is violated})
        \leq p_{\textsf{settlement}} + |\mathcal{I}| p_{\textsf{unheard}} 
    \end{equation}
    }
    {
    \begin{multline}  \label{eq:main_settlement_bnd}
        \mathbb{P}((s, k, \mathcal{I})\text{-settlement property is violated})
        \\
        \leq p_{\textsf{settlement}} + |\mathcal{I}| p_{\textsf{unheard}} 
    \end{multline}
    }
    where
\begin{align*}
   p_{\textsf{settlement}} &= \exp{(-kf \epsilon^3/12)} + 3\exp{(-kf\epsilon^2/32)} \\
   p_{\textsf{unheard}}  & = \left[\frac{2}{ 1-(1/2)^{\epsilon/2}}\right] \exp(-kf \epsilon/16)
\end{align*}
Further, for any $\mu < \epsilon$,
    \iftoggle{arxiv}
    {
    \begin{equation}  \label{eq:main_quality_bnd}
        \mathbb{P}((\mu, s, k, \mathcal{I})\text{-intensive chain quality property is violated}) \leq p_{\textsf{CQ}} + |\mathcal{I}| \Tilde{p}_{\textsf{unheard}} 
    \end{equation}
    }
    {
    \begin{multline}  \label{eq:main_quality_bnd}
        \mathbb{P}((\mu, s, k, \mathcal{I})\text{-intensive chain quality property is violated}) \\ \leq p_{\textsf{CQ}} + |\mathcal{I}| \Tilde{p}_{\textsf{unheard}} 
    \end{multline}
    }
    where
\begin{align*}
   p_{\textsf{CQ}} &= 4 \exp(-kf (\epsilon-\mu)^2 /48)   \\
   \tilde{p}_{\textsf{unheard}} & =   \left[\frac{2}{ 1-(1/2)^{\epsilon/2}}\right] \exp(-kf(\epsilon - \mu) /8)
\end{align*}
\end{theorem}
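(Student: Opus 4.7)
The plan is to decompose the failure event into two parts: a global characteristic-string event (shared across observers) and a local $\Unheard_h$ event (one per $h \in \mathcal{I}$, union-bounded over the set). The first part controls whether the adversary can ever overtake an honest block on any honestly held chain; the second controls whether a given observer $h$ actually holds such a chain. The two terms $p_{\textsf{settlement}} + |\mathcal{I}| p_{\textsf{unheard}}$ (and analogously for chain quality) correspond precisely to this decomposition.

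For the global part I would generalize the $\Delta$-isolated slots of Badertscher et al.\ to \emph{special honest slots}: uniquely honest slots whose block reaches every future honest leader before that leader's next potential proposal. The combinatorial invariant to establish is that blocks from distinct special honest slots must occupy distinct heights on every honestly held chain. From this invariant one builds a generalized $\Margin$ (or $\Advantage$) quantity on the characteristic string whose positivity over every trailing window implies settlement of prefixes up to slot $s$. Bounding the failure probability then reduces to Chernoff estimates on (i) the number of special honest slots in a window of length $k$, whose mean is $kf\alpha\,\mathbb{P}(\Delta < G) = kf(1+\epsilon)/2$, and (ii) the number of adversarial slots, whose mean is at most $kf(1-\epsilon)/2$. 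Summing the per-window bounds over the (geometrically decaying) collection of windows that an adversary could target yields the two exponential terms $\exp(-kf\epsilon^3/12)$ and $3\exp(-kf\epsilon^2/32)$ in $p_{\textsf{settlement}}$. For chain quality the same framework is re-used with $kf\mu$ replacing the adversarial count, producing the $(\epsilon-\mu)^2/48$ exponent in $p_{\textsf{CQ}}$.

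For the local part I would, for each $h \in \mathcal{I}$, bound the tail of $\Unheard_h$ evaluated at every slot $\geq s+k$. Under the non-decreasing failure rate assumption the conditional tail of a message that has already been delayed is dominated by the unconditional tail of $\Delta$, so the probability that a message sent $j$ slots ago is still undelivered decays geometrically in $j$. Summing first over $j$, then over the infinitely many observation times $\geq s+k$, collapses to a geometric-in-geometric series whose closed form is the prefactor $2/(1-(1/2)^{\epsilon/2})$ multiplying the single-event bound $\exp(-kf\epsilon/16)$. A union bound over $h \in \mathcal{I}$ produces the $|\mathcal{I}|$ factor. Chain quality uses the identical template with slack $\epsilon - \mu$ rather than $\epsilon$, explaining the $\exp(-kf(\epsilon-\mu)/8)$ in $\tilde{p}_{\textsf{unheard}}$. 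The combination step is: if settlement fails at some $(h_1, h_2, i_1, i_2)$, then either the characteristic-string margin was non-positive on some trailing window (global bad event), or the honest chain with the safe prefix exists but $h_1$ or $h_2$ never received it due to large $\Unheard$ (local bad event).

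The main obstacle is the infinite-horizon quantification. Settlement and chain quality each quantify over arbitrarily late observation slots $i, i_1, i_2 \geq s+k$, so the bad event is an infinite disjunction. Turning this into the stated bounds requires two independent geometric sums to converge simultaneously: the sum of characteristic-string failure probabilities over trailing windows (handled by the stronger $\epsilon^2/32$ exponent absorbing the window-length sum while the weaker $\epsilon^3/12$ captures the typical-case deviation), and the sum of $\Unheard_h$ tails over future observation times (handled by the $2/(1-(1/2)^{\epsilon/2})$ prefactor). Coordinating these while preserving the independence between leader-election and delay randomness used by the $\Unheard_h$ analysis — and ensuring the ``special honest slot'' definition only references delays from past leaders to future leaders, leaving delays to observer $h$ cleanly isolated in $\Unheard_h$ — is the central bookkeeping challenge.
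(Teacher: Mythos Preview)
Your high-level decomposition matches the paper exactly: the split into a global $\Margin_s$/$\Advantage_s$ event on $\CharString$ and a local $\Unheard_h$ event union-bounded over $\mathcal{I}$, with special honest slots generalizing $\Delta$-isolated slots, is precisely the architecture used. The paper formalizes the combination step as the necessary conditions $\Margin_s(\CharString[1:i]) + \Unheard_\mathcal{I}[i] \geq 0$ (settlement) and $\Advantage_s(\CharString[1:i]) + \Unheard_\mathcal{I}[i] \geq 0$ (chain quality) for some $i \geq s+k$, which is exactly your ``either the margin was bad or the observer missed the safe chain'' dichotomy.

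Where your account diverges is in explaining the $\epsilon^3/12$ term. You attribute it to ``Chernoff estimates on the number of special honest slots'' and call it ``the typical-case deviation'', with the $\epsilon^2/32$ term ``absorbing the window-length sum''. This is not how the paper obtains it, and a direct Chernoff/window-counting argument would not produce an $\epsilon^3$ exponent at all --- Hoeffding on a biased walk with drift $\epsilon$ gives $\epsilon^2$. The $\epsilon^3$ term comes from invoking the Blum et al.\ analysis of the two-dimensional $(\Reach,\Margin)$ Markov chain, which yields $\mathbb{P}(\CompressedMargin_s[j]\geq 0 \text{ for some } j\geq k) \leq \exp(-k\epsilon^3/3)$; this is a nontrivial spectral/generating-function argument, not a concentration bound. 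The paper then combines this with a random-walk bound \emph{after} a stopping time to control the margin along an affine line $-cj + ck'/2$, and separately bounds $\CompressedUnheard_{h,s}$ against the complementary affine line $cj - ck'/2$. The three $\epsilon^2/32$ terms arise from (i) the compressed-time-scale reduction $\mathbb{P}(T^s_{k'}>k)$ with $k'=\lceil 3kf/4\rceil$, and (ii) the Hoeffding/Kingman random-walk bound --- not from summing over windows.

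So the bookkeeping you flag as the main obstacle is handled differently than you sketch: rather than summing failure probabilities over trailing windows, the paper passes to the compressed time scale, splits a single affine threshold between $\CompressedMargin_s$ and $\CompressedUnheard_{h,s}$, and bounds each side against its half of the line for all $j\geq k'$ simultaneously. Your geometric-in-geometric description of the $\Unheard$ analysis is closer to the mark (the paper shows $\Unheard_h[i]$ is stochastically dominated by a $\Geom(q)$ variable and then sums along the affine line), but the global side needs the Blum et al.\ margin machinery as a black box to get the stated constants.
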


As a corollary, we get the following result about the extensive safety and liveness properties. This statement follows from Theorem \ref{thm:main}, Lemmas \ref{lem:settlement_CP} and \ref{lem:intensive_extensive_CQ}, and the union bound. We omit a formal proof. 
\begin{corollary}\label{thm:corollary}
    Consider a blockchain protocol with $\epsilon$-honest majority. Then for any $\mathcal{I} \subseteq \mathcal{H}$, $T\in \mathbb{N}$ and $k \in \mathbb{N}$,
    \iftoggle{arxiv}
    {
    \begin{equation*}
        \mathbb{P}((T, k, \mathcal{I})\text{-common prefix property is violated}) \leq T\,(p_{\textsf{settlement}} + |\mathcal{I}| p_{\textsf{unheard}})
    \end{equation*}
    }
    {
    \begin{multline*}
        \mathbb{P}((T, k, \mathcal{I})\text{-common prefix property is violated}) \\ \leq T\,(p_{\textsf{settlement}} + |\mathcal{I}| p_{\textsf{unheard}})
    \end{multline*}
    }
Further, for any $\mu < \epsilon$,
\iftoggle{arxiv}
    {
    \begin{equation*}
        \mathbb{P}((\mu, T, k, \mathcal{I})\text{-extensive chain quality property is violated})\leq T\,(p_{\textsf{CQ}} + |\mathcal{I}| \Tilde{p}_{\textsf{unheard}})
    \end{equation*}
    }
    {
    \begin{multline*}
        \mathbb{P}((\mu, T, k, \mathcal{I})\text{-extensive chain quality property is violated}) \\ \leq T\,(p_{\textsf{CQ}} + |\mathcal{I}| \Tilde{p}_{\textsf{unheard}})
    \end{multline*}
    }
\end{corollary}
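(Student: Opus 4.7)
The plan is to derive the corollary directly from Theorem \ref{thm:main} combined with Lemmas \ref{lem:settlement_CP} and \ref{lem:intensive_extensive_CQ} via a union bound over the slot index $s$. First I fix any $\mathcal{I} \subseteq \mathcal{H}$ and $T, k \in \mathbb{N}$. By Lemma \ref{lem:settlement_CP}, whenever the $(s, k, \mathcal{I})$-settlement property holds for every $s \in \{1, \ldots, T\}$, the $(T, k, \mathcal{I})$-common prefix property also holds. Taking the contrapositive, the event that the common prefix property is violated is contained in the union, over $s \in \{1, \ldots, T\}$, of the events that the $(s, k, \mathcal{I})$-settlement property is violated.

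Applying the union bound and then substituting the settlement bound from Theorem \ref{thm:main} for each $s$ yields
\[ \mathbb{P}((T, k, \mathcal{I})\text{-common prefix property is violated}) \leq \sum_{s=1}^{T} \bigl(p_{\textsf{settlement}} + |\mathcal{I}|\, p_{\textsf{unheard}}\bigr) = T\,(p_{\textsf{settlement}} + |\mathcal{I}|\, p_{\textsf{unheard}}), \]
which is exactly the advertised bound. The key point that makes this work cleanly is that the right-hand side of \eqref{eq:main_settlement_bnd} has no dependence on $s$, so the sum over $s \leq T$ simply multiplies by $T$.

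The extensive chain quality statement is obtained by an identical reduction with Lemma \ref{lem:intensive_extensive_CQ} in place of Lemma \ref{lem:settlement_CP}. Fixing any $\mu < \epsilon$, a violation of the $(\mu, T, k, \mathcal{I})$-extensive chain quality property forces a violation of the $(\mu, s, k, \mathcal{I})$-intensive chain quality property for some $s \leq T$. The union bound together with \eqref{eq:main_quality_bnd}, which is again uniform in $s$, produces the factor of $T$ in front of $p_{\textsf{CQ}} + |\mathcal{I}|\, \tilde{p}_{\textsf{unheard}}$.

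Since Theorem \ref{thm:main} and both reduction lemmas are already in hand, the proof is a routine union-bound argument and there is no real obstacle; the work has been pushed entirely into establishing the intensive bounds. The only mild subtlety worth noting is that a cruder bound would replace $T$ by $T+1$ if one included $s = 0$, but the definitions exclude trivial prefix agreement at $s=0$, so the sum runs over $\{1, \ldots, T\}$ as above. One could in principle sharpen the linear-in-$T$ prefactor by exploiting dependence between violations at nearby slots, but this is not needed for the stated corollary.
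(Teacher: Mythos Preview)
Your proof is correct and follows exactly the approach indicated by the paper, which states that the corollary follows from Theorem \ref{thm:main}, Lemmas \ref{lem:settlement_CP} and \ref{lem:intensive_extensive_CQ}, and the union bound, and omits a formal proof.
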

The key difference between intensive and extensive properties can be seen by the guarantees on them. The probability of an intensive property, say the ($T, k, \mathcal{I}$)-settlement property, being violated is independent of $T$ (Theorem \ref{thm:main}). The probability of an extensive property, say the ($T, k, \mathcal{I}$)-common prefix property, being violated grows linearly with $T$ (Corollary \ref{thm:corollary}).

\begin{figure}[htbp]
    \centering
    \iftoggle{arxiv}
    {\includegraphics[width = 0.7\textwidth]{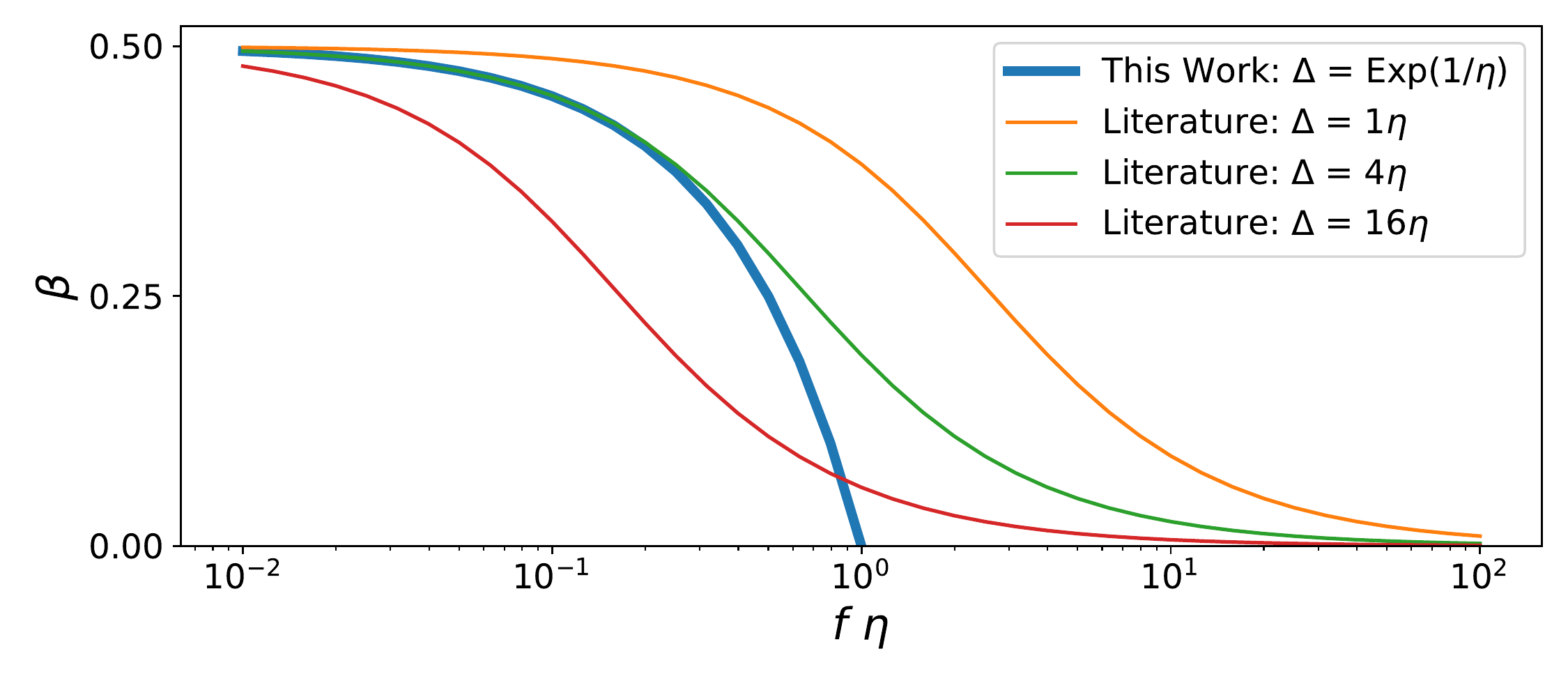}}
    {\includegraphics[width = \columnwidth]{formal-writeup/figures/beta.pdf}}
    \caption{Comparison of the security threshold guaranteed by this work for exponentially distributed delays with the tight threshold for deterministic delays  (i.e., \eqref{eq:adv_threshold}).}
    \label{fig:plot_exponential_delay}
\end{figure}

Theorem \ref{thm:main} and its corollary prove security properties under the $\epsilon$-honest majority assumption for some $\epsilon > 0.$ While the results are stated for discrete time, they imply corresponding results in continuous time by taking a limit, as described in \cite{gazi2020tight}. Also, in continuous time, the probability of more than one party mining at a time is zero.   Leaders are elected at times of a Poisson process of rate $f$, with a leader being the adversary with probability $\beta$ and honest with probability $1-\beta.$ The honest majority condition reduces to $(1-\beta)\mathbb{P}(\Delta < \mathsf{Exp}(f)) > \frac 1 2,$ where $\mathsf{Exp}(f)$ denotes an exponentially distributed random variable with rate parameter $f$ (mean $1/f$).  In case $\Delta$ has the $\mathsf{Exp}(1/\eta)$ distribution (with mean $\eta$), the honest majority condition becomes $\beta < \frac{1-\eta f} 2.$  Figure \ref{fig:plot_exponential_delay} displays the boundary of the security region we have established (i.e. $\beta = \frac{1-\eta f} 2$), and for comparison, the boundaries of the security region guaranteed for bounded delay by \eqref{eq:adv_threshold} for $\Delta\equiv \eta,$ $\Delta\equiv 4\eta,$ and $\Delta\equiv 16 \eta.$ Consider a delay distribution that is identical to $\mathsf{Exp}(1/\eta)$ from $[0, 4\eta]$, and concentrates the rest of the mass at $4 \eta$. Such a distribution can be stochastically dominated by both $\mathsf{Exp}(1/\eta)$, as well as the constant delay $4\eta$. The Figure shows that the adversarial tolerance guarantees provided by this work with $\mathsf{Exp}(1/\eta)$ delays are comparable to the best possible guarantees with constant delays $4 \eta$, for the range $f\eta < 0.2$.

\section{Definitions and Preliminary Results}\label{sec:definitions}

In this section, we define new terms pertaining to our model that are key to the proof of Theorem \ref{thm:main}. The two most important terms are $\CharString$ and $\Unheard$, defined in Sections \ref{sec:special_honest_slots} and \ref{sec:unheard} respectively. 

\subsection{Notation}\label{sec:notation}
All random processes in our model are discrete-time processes, indexed by $\mathbb{N}$, $\mathbb{Z}_{+}$ or $\mathbb{Z}$ (the relevant indexing will be specified when the process is defined). For a random process $\textsf{Process}$, the notation for the $i\textsuperscript{th}$ variable is $\textsf{Process}[i]$. The portion of the process from index $i_1$ to $i_2$, both inclusive, is denoted by $\textsf{Process}[i_1:i_2]$. If $i_2 < i_1$, this denotes an empty string. The process from index $i$ onward (including $i$) is denoted by $\textsf{Process}[i: \ ]$, and the process up to index $i$ (including $i$) is denoted by $\textsf{Process}[\ : i]$.

In our analysis, we often consider processes taking values in $\CharSymbols{}$. For such processes, define the following sets of time slots:
\begin{align*}
    \mathcal{N}_0(\textsf{Process}[i_1: i_2]) &\triangleq \{i \in \mathbb{N} : i_1 \leq i \leq i_2, \textsf{Process}[i] = 0\} \\
    \mathcal{N}_1(\textsf{Process}[i_1: i_2]) &\triangleq \{i \in \mathbb{N} : i_1 \leq i \leq i_2, \textsf{Process}[i] = 1\} \\
    \mathcal{N}(\textsf{Process}[i_1: i_2]) &\triangleq \{i \in \mathbb{N} : i_1 \leq i \leq i_2, \textsf{Process}[i] \neq \perp\}
\end{align*}
We denote the cardinality of these sets by using $N$ instead of $\mathcal{N}$. For example,
$N_0(\textsf{Process}[i_1 : i_2]) = |\mathcal{N}_0(\textsf{Process}[i_1 : i_2])|.$

\subsection{\textsf{LeaderString} and the compressed time scale}\label{sec:compressed_time_scale}
We start by defining a representation of the leader election process---$\LeaderString$---that we use in our analysis.
\begin{definition}[LeaderString]\label{def:leader_sequence}
\emph{$\LeaderString$} is a process taking values in $\CharSymbols$, defined as follows. For each $i \geq 1$,
\begin{equation}    \label{eq:leader_sequence_dist}
    \LeaderString[i] =  \left\{
    \begin{array}{lll}
    \perp & \text{if } N_i = 0,\, A_i = 0  &\mbox{(prob. } 1-f)  \\
    0 & \text{if } N_i = 1,\, A_i = 0 &\mbox{(prob. } \alpha f) \\
    1 & \text{if } N_i > 1 \text{ or } A_i = 1 &\mbox{(prob. } (1-\alpha)f )  \\
    \end{array}  \right.  
\end{equation}
\end{definition}
By the properties of the leader election process in Section \ref{sec:leader_election}, $\LeaderString$ is an i.i.d. process with the probabilities shown in \eqref{eq:leader_sequence_dist}.
We call a slot $i$ \textit{empty} if $\LeaderString[i] = \,\perp$ (and \textit{non-empty} otherwise). We call a slot $i$ \textit{uniquely honest} if $\LeaderString[i] = 0$. 

Let $(\LeaderString[i]:  i \leq 0)$ be a sequence of i.i.d. random variables with the same distribution as given in \eqref{eq:leader_sequence_dist}.  With this extension, the set of non-empty slots forms a \emph{stationary renewal process} with lifetime distribution $\Geom(f).$   Given the locations of all the renewal points, the labels at the renewal points are i.i.d. Bernoulli random variables with $\mathbb{P}(0) = \alpha$.
Let $1\leq T_1 < T_2 < \ldots$ denote the non-empty slots of $\LeaderString$ from slot $1$ onward. Similarly, let $0\geq T_0 > T_{-1} > T_{-2} > \ldots$ index the non-empty slots before or up to slot zero, going backwards in time. For any $j \neq 1$, $T_j - T_{j-1}$ has distribution $\Geom(f)$, while $T_1-T_0 = T_1 + (1-T_0) - 1,$  so that $T_1-T_0$ is the sum of two independent $\Geom(f)$ random variables minus one. In the terminology of renewal theory, $T_1 - T_0$ is the {\em sampled lifetime} sampled at time 0.

Suppose slot $T_j$ is uniquely honest, for some $j \in \mathbb{N}$. The leader of the slot, denoted by $h_j$, broadcasts a message to all other honest parties, each of which have independent delays. Let $\textsf{delay}(T_j \rightarrow h)$ denote the delay from the leader of $T_j$ to an honest party $h \in \mathcal{H}$. Strictly speaking, $\textsf{delay}(T_j \rightarrow h)$ has distribution $\Delta$ for all $h \neq h_j$, and is equal to $0$ for $h = h_j$. For the sake of homogeneity, however, we pretend that honest leaders send themselves a message that is subject to random delay. We, thus, extend the notation $\textsf{delay}(T_j \rightarrow h)$ to all slots $T_j, j \in \mathbb{Z}$ and assign independent delay random variables to them. Then $\{\textsf{delay}(T_j \rightarrow h): h \in \mathcal{H}, j \in \mathbb{Z}\}$ are i.i.d. delay random variables. 

The renewal points of $\LeaderString$ defines a new time scale: the clock ticks by one whenever a new non-empty slot occurs. Call this event-driven time scale the \textit{compressed time scale}. The following notation is used to define processes on the compressed time scale. For $s \geq 0$ and $j \geq 1$, let
\begin{equation}\label{eq:def_T_s_j}
    T^s_j \triangleq \min\{i: N(\LeaderString)[s+1:s+i] = j\}
\end{equation}
In other words, $T^s_j$ is the $j\textsuperscript{th}$ renewal point strictly after time $s.$ Clearly, $T^0_j = T_j$. Note that, for any $s \geq 0,$ $T^s_1$ and the random variables $\{T^s_j - T^s_{j-1}\}_{j \geq 2}$ are i.i.d. with distribution $\Geom(f).$ Given any process $\textsf{Process}$ on the original time scale, denote its time-shifted, compressed version relative to reference slot $s$ as $\textsf{CompressedProcess}_s$, defined by
\begin{align}
\textsf{CompressedProcess}_s[0] &\triangleq \textsf{Process}[s]   \nonumber \\ 
\label{eq:def_compressed_process_s}
    \textsf{CompressedProcess}_s[j] & \triangleq \textsf{Process}[s+T^s_j] \  \quad \text{for} \ j \geq 1
\end{align}
For example, $\textsf{CompressedLeaderString}_s [1 : \ ]$ is an i.i.d. $0$-$1$ valued process with probability of $0$ equal to $\alpha$. In other words, it is a Bernoulli process with parameter $1-\alpha$.

\subsection{Special Honest Slots and \textsf{CharString}}\label{sec:special_honest_slots}

We now introduce a new concept called \textit{special honest slots}. We also introduce our definition of \emph{the characteristic string}, denoted by $\CharString$.
Special honest slots are a subset of uniquely honest slots and play a role similar to {\em $\Delta$-isolated slots} in  \cite{david2018ouroboros}. Namely, the blocks mined in special honest slots must be at distinct heights in $\mathcal{F}_i$, irrespective of the actions of the adversary. The process $\CharString$ is defined such that it marks special honest slots with symbol $0$, other non-empty slots with symbol $1$, and empty slots with symbol $\perp$. Thus, defining $\CharString$ is equivalent to identifying special honest slots among uniquely honest slots.  We first define $\CharString$ in the one-time leader model and then in the i.i.d. leader model.

\subsubsection{\textsf{CharString} for one-time leader model}
The process $\CharString$ is a process indexed by $\mathbb{Z}$, taking values in $\CharSymbols{}$. We describe its construction, conditioned on the entire leader election process being known. Let $\CharString[i] = \perp$ for all $i$ such that $\LeaderString[i] = \perp$ and let $\CharString[i] = 1$ for all $i$ such that $\LeaderString = 1.$ It remains to select special honest slots among uniquely honest slots. We first do so for negative time, where special honest slots have no real interpretation. For $j \leq 0$, if $\LeaderString[T_j] = 0$, randomly set $\CharString[T_j]=0$ with probability $\mathbb{P}(\Delta<T_j-T_{j-1}|T_j-T_{j-1})$, and let $\CharString[T_j]=1$ otherwise. The aforementioned choices are conditionally independent across all $j\leq 0.$

For positive time, special honest slots are labeled sequentially as follows. For $j \geq 1$, let $T_{j^*}$ denote the last special honest slot at or before slot $T_{j-1}$. Define slot $T_j$ to be special honest if it is uniquely honest and $R_j < T_j - T_{j-1}$, where $R_j \triangleq \textsf{delay}(T_{j^*}\to h_j)$. Note that $h_j$ receives the message from the previous special honest slot at time $T_{j^*} +  R_j$, which, if $T_j$ is special honest, satisfies $T_{j^*} +  R_j < T_{j^*} +  T_j - T_{j-1} \leq T_j.$  Thus, the condition for $T_j$ to be a special honest slot is sufficient, but not necessary, for $h_j$ to have received the message from the previous special honest slot.

\subsubsection{Internal representation and refreshed residuals}

The definitions in this section are used in the following to define special honest slots for the i.i.d. leader model. Consider a probability mass function (pmf) $\textsf{f}$ on $\mathbb{Z}_+$, and let $X$ be a random variable with this pmf ($\mathbb{P}(X = i) = \textsf{f}[i]$). The {\em failure rate function} of the distribution, $\textsf{FailureRate}$, is defined by 
\[\textsf{FailureRate}[i] \triangleq \frac{\textsf{f}[i]}{\sum_{j \geq i} \textsf{f}[i]} = \frac{\mathbb{P}(X = i)}{\mathbb{P}(X \geq i)} \quad \text{for each} \ i \geq 0\]
with the convention that $\textsf{FailureRate}[i]=1$ if $\mathbb{P}(X \geq i)=0$.  

A random variable with pmf $\textsf{f}$ can be constructed as follows. Let $D=\min\{i\geq 0 : \textsf{U}[i]\leq \textsf{FailureRate}[i]\}$ where $\textsf{U}= (\textsf{U}[0], \textsf{U}[1], \ldots )$ be a sequence of independent random variables that are each uniformly distributed on the interval $[0,1]$. We call $(\textsf{FailureRate},\textsf{U})$ the {\em internal representation} of $D$. If $D_1$ and $D_2$ are random variables with independent internal representations, then $D_1$ and $D_2$ are independent as well.

Given $d\geq 0$, define the {\em refreshed residual} of $D$ at elapsed time $d$ by $\mathsf{refresh}_d(D)=\min\{i\geq 0: \textsf{U}[i+d]\leq \textsf{FailureRate}[i]\}.$
Although $\mathsf{refresh}_d(D)$ depends on the internal representation of $D$, the internal representation is suppressed in the notation.

\begin{lemma} \label{lem:refresh_property}
Let $D$ be a $\mathbb{Z}_+$-valued random variable with an internal representation and let $d\geq 0.$  The following hold.
\begin{description}
    \item (a) $\mathsf{refresh}_d(D) \stackrel{d.}{=} D$.
    \item (b) The random variable $\min \{d, D\}$ is independent of $\mathsf{refresh}_d(D)$.  More generally, if  $0 = d_0 < d_1 < \cdots < d_n$ then for each $j \in [n]$,  $\min \{d_j, \mathsf{refresh}_{d_{j-1}}(D)\}$ and $\mathsf{refresh}_{d_n}(D)$ are mutually independent.
    \item (c) If $D$ has a non-decreasing failure rate function, $D\leq d + \mathsf{refresh}_d(D).$
\end{description}
\end{lemma}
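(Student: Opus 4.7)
The strategy is to work throughout with the internal representation: every random variable in the statement is a deterministic function of a specific subset of the i.i.d.\ uniforms $\{\textsf{U}[i]\}_{i \geq 0}$, so all three claims reduce to bookkeeping about which uniforms are involved.

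For (a), I would note that $\mathsf{refresh}_d(D)$ is produced by the same measurable map as $D$, but fed with the shifted sequence $(\textsf{U}[d], \textsf{U}[d+1], \ldots)$. Since the uniforms are i.i.d., this shifted sequence has the same joint distribution as $(\textsf{U}[0], \textsf{U}[1], \ldots)$, giving $\mathsf{refresh}_d(D) \stackrel{d.}{=} D$.

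For (b), the key claim is a measurability statement. On one hand, $\min\{d, D\}$ lies in $\sigma(\textsf{U}[0], \ldots, \textsf{U}[d-1])$: if some index $i < d$ satisfies $\textsf{U}[i] \leq \textsf{FailureRate}[i]$, then the smallest such $i$ equals $D$, which is $<d$ and determined by the first $d$ uniforms; otherwise $D \geq d$, so $\min\{d,D\} = d$, still determined by those same uniforms. On the other hand, $\mathsf{refresh}_d(D)$ lies in $\sigma(\textsf{U}[d], \textsf{U}[d+1], \ldots)$ directly from the definition. Since these $\sigma$-algebras are generated by disjoint subsets of independent uniforms, the two random variables are independent. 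The general version is the same argument applied blockwise: $\min\{d_j-d_{j-1}, \mathsf{refresh}_{d_{j-1}}(D)\}$ lies in $\sigma(\textsf{U}[d_{j-1}], \ldots, \textsf{U}[d_j-1])$ (reading the first argument of the min as the length of the $j$-th block, which is the natural cap that confines all dependence to that block), and $\mathsf{refresh}_{d_n}(D)$ lies in $\sigma(\textsf{U}[d_n], \textsf{U}[d_n+1], \ldots)$. These $n+1$ index sets are pairwise disjoint, so the $n+1$ random variables are mutually independent.

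For (c), set $r = \mathsf{refresh}_d(D)$. The defining condition of the refresh gives $\textsf{U}[d+r] \leq \textsf{FailureRate}[r]$; monotonicity of the failure rate and $r \leq d+r$ give $\textsf{FailureRate}[r] \leq \textsf{FailureRate}[d+r]$; chaining these yields $\textsf{U}[d+r] \leq \textsf{FailureRate}[d+r]$. Thus $d+r$ satisfies the defining condition for $D$, so $D \leq d+r$.

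The main obstacle is part (b) in its general form, where one must obtain \emph{mutual} rather than merely pairwise independence of $n+1$ random variables. The content of that step is entirely in identifying the minimal $\sigma$-algebra housing each $\min\{\cdot, \mathsf{refresh}_{d_{j-1}}(D)\}$ and checking that the cap on the first argument of the min is exactly what forces its dependence into a disjoint block of uniforms; once this bookkeeping is settled, mutual independence is immediate from independence of the uniforms. Parts (a) and (c) are then one- or two-line consequences.
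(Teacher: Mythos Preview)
Your approach is essentially identical to the paper's: parts (a) and (c) match directly, and part (b) uses the same disjoint-blocks-of-uniforms argument the paper invokes in one line. Your proof of (c) is a slightly cleaner rephrasing of the paper's chain of inequalities, but the content is the same.

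One point worth making explicit: in the general part of (b) you replaced the cap $d_j$ in the statement by $d_j - d_{j-1}$, framing this as a ``reading.'' It is in fact a correction of a typo. As literally written, the statement is false: with $d_0=0$, $d_1=1$, $d_2=3$ and constant failure rate $1/2$, the variable $\min\{d_2,\mathsf{refresh}_{d_1}(D)\}$ depends on $\textsf{U}[3]$, as does $\mathsf{refresh}_{d_2}(D)$, and one checks they are not independent. The disjoint-blocks argument---which is what the paper's one-line proof claims---only goes through with the cap $d_j-d_{j-1}$, exactly as you wrote. So your instinct was right; just state it as a correction rather than an interpretation.
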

\begin{proof} Statement (a) follows from $\textsf{U} \stackrel{d.}{=} \textsf{U}[d: \ ].$  The first statement in (b) follows from the facts that $\min \{d, D\}$ is determined by $\textsf{U}[0 : d-1]$ and $\mathsf{refresh}_d(D)$ is determined by $\textsf{U}[d: \ ]$.  The generalization in (b) similarly follows: the indicated random variables are functions of disjoint subsets of $U$.  (c) is proved as follows.
\begin{align*}
D & \leq \min\{i\geq d: \textsf{U}[i] \leq \textsf{FailureRate}[i]\}\\
& =  d + \min\{i\geq 0: \textsf{U}[i+d] \leq \textsf{FailureRate}[i+d]\} \\
& \leq d + \min\{i\geq 0: \textsf{U}[i+d] \leq \textsf{FailureRate}[i]\} \\
& = d +  \mathsf{refresh}_d(D).
\end{align*}
\end{proof}

\subsubsection{\textsf{CharString} for i.i.d. leader model} \label{sec:SHS}
In this section we define $\CharString$ in the i.i.d. leader model. Without loss of generality, we assume all message delays have independent internal representations. The definition of $\CharString$ is the same as in the one-time leader model, except that the variables $R_j$ are defined differently. For each $j\geq 1,$ let $R_j \stackrel{\triangle}{=} \mathsf{refresh}_{T_{j-1} - T_{j^*}}(\mathsf{delay}(T_{j^*} \to h_j))$. Just as before, define $T_j$ to be a special honest slot (i.e. $\CharString[j]=0$) if $\LeaderString[j]=0$ and $ R_j < T_j - T_{j-1}.$ Note that $h_j$ receives the message from the previous special honest slot at time $T_{j^*} +  \mathsf{delay}(T_{j^*} \to h_j)$. If $T_j$ is special honest, then by Lemma \ref{lem:refresh_property}(c),
\iftoggle{arxiv}
{
\begin{equation*}
T_{j^*} +  \mathsf{delay}(T_{j^*} \to h_j)  \leq T_{j^*} +  (T_{j-1}-T_{j^*}) +  R_j < T_{j-1} + T_{j} - T_{j-1} = T_j.
\end{equation*}
}
{
\begin{multline*}
T_{j^*} +  \mathsf{delay}(T_{j^*} \to h_j)  \leq T_{j^*} +  (T_{j-1}-T_{j^*}) +  R_j \\ < T_{j-1} + T_{j} - T_{j-1} = T_j.
\end{multline*}
}
Thus, just as for the one-time leader model, the condition for $T_j$ to be a special honest slot is sufficient, but not necessary, for $h_j$ to have received the message from the previous special honest slot.

\subsubsection{The distribution of \textsf{CharString}}

The second lemma in this section characterizes the distribution of the random process $\CharString.$   Some preliminaries are given first. All results in this section hold for both the i.i.d. leader model and the one-time leader model.

For $j\geq 2$, define the following information set (i.e. $\sigma$-algebra generated by the set of random variables shown):
\begin{align*}
\mathsf{Info}_j = \sigma
\left\{
\begin{array}{c}
\mbox{leader election process from slot 1 up to slot } T_{j-1}   \\
\mathsf{CharString}[:T_{j-1}] , ~ h_j
\end{array}
\right\}
\end{align*}
Note that the leader election process specifies the identities of the leaders of each slot.
\begin{lemma}  \label{lem:CharStringProperties} 
For any $j\geq 2,$
$ \mathsf{Info}_{j},$   $R_j,$ and $T_j-T_{j-1}$ are mutually independent, $T_j- T_{j-1}$ has the $\Geom(f)$ probability distribution, and $R_j$ has the same distribution as $\Delta.$ 
\end{lemma}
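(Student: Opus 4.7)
The plan is to establish the three claims by leveraging two independent sources of randomness: the leader election process $\LeaderString$ and the collection of independent delay internal representations, one for each $(T_a, h)$ pair. The marginal distribution of $T_j - T_{j-1}$ and its independence from the other quantities will come from the renewal structure of $\LeaderString$, while the marginal distribution of $R_j$ and its independence from $\mathsf{Info}_j$ will follow from Lemma \ref{lem:refresh_property}.

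For $T_j - T_{j-1}$: The non-empty slots form a renewal process, so conditioning on the entire leader election process up to slot $T_{j-1}$ leaves $T_j - T_{j-1}$ as a fresh $\Geom(f)$ random variable. Further conditioning on $h_j$ (the leader at slot $T_j$) does not affect this, since $h_j$ is determined by the independent leader election outcome at slot $T_j$ and does not reveal when $T_j$ occurs. Because the leader election process is by assumption independent of every delay internal representation, $T_j - T_{j-1}$ is also independent of $R_j$ and of the delay-dependent components of $\CharString[:T_{j-1}]$.

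For $R_j$: By Lemma \ref{lem:refresh_property}(a), $R_j \stackrel{d.}{=} \mathsf{delay}(T_{j^*} \to h_j) \sim \Delta$, which gives the marginal. For the independence of $R_j$ from $\mathsf{Info}_j$, I would condition on the entire leader election process, which fixes all $T_i, h_i,$ and $T_{i^*}$ for $i\leq j$ and reduces $\mathsf{Info}_j$ to a deterministic function of $\CharString[:T_{j-1}]$. The latter is built from the indicators $\mathbf{1}\{R_i < T_i - T_{i-1}\}$ for $i \leq j-1$, each of which is measurable with respect to $\min\{T_i - T_{i-1}, R_i\}$. For indices $i$ with $(T_{i^*}, h_i) \neq (T_{j^*}, h_j)$, the delay $\mathsf{delay}(T_{i^*} \to h_i)$ has an internal representation independent of that of $\mathsf{delay}(T_{j^*} \to h_j)$, so independence of the corresponding summand from $R_j$ is immediate. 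For indices (if any) with $(T_{i^*}, h_i) = (T_{j^*}, h_j)$, the random variables $\min\{T_i - T_{i-1}, R_i\}$ are expressed via the same internal representation as $R_j$, but only at indices strictly below $T_{j-1} - T_{j^*}$ (since $T_i \leq T_{j-1}$). Applying Lemma \ref{lem:refresh_property}(b) with an ordered sequence of elapsed times drawn from $\{T_{i-1} - T_{j^*}\}$ terminating at $d_n = T_{j-1} - T_{j^*}$ then gives independence from $R_j = \mathsf{refresh}_{T_{j-1} - T_{j^*}}(\mathsf{delay}(T_{j^*} \to h_j))$.

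The main technical obstacle is the bookkeeping in the i.i.d.\ leader model when the same honest party is elected leader at multiple $T_i$'s while the most recent special honest slot $T_{j^*}$ remains unchanged, causing a single delay internal representation to feed into several $R_i$'s. The crucial observation resolving this is that, across all such $i \leq j-1$, only indices strictly below $T_{j-1} - T_{j^*}$ of the shared internal representation are ever exposed to $\mathsf{Info}_j$, which is exactly the range where Lemma \ref{lem:refresh_property}(b) guarantees independence from the refreshed residual defined at elapsed time $T_{j-1} - T_{j^*}$.
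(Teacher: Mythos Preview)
Your proposal is correct and follows essentially the same route as the paper's proof, which simply cites the construction of $\CharString$ together with Lemma~\ref{lem:refresh_property}(a),(b); you have unpacked those citations into the explicit bookkeeping argument (disjoint index ranges of the shared internal representation when $(T_{i^*},h_i)=(T_{j^*},h_j)$), which is exactly the content the paper leaves implicit.
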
 
\begin{proof}
In the one-time leader model, the lemma is true by the construction of $\CharString.$  The lemma is true in the i.i.d. leader model by
 the construction of $\CharString$ and Lemma \ref{lem:refresh_property} (a) and (b). 
\end{proof}

The main result of this section is the following lemma.
\begin{lemma}  [Renewal structure of $\CharString$] \label{lem:renewal_prop_CharString}
The sequence of non-empty slots of $\CharString$ forms a stationary renewal process with lifetime distribution $\Geom(f).$  Conditioned on the renewal times $(T_j: j\in \mathbb{Z}),$ the labels $(\CharString[T_j]: j\in \mathbb{Z})$ are independent and for all $j\in \mathbb{Z},$
\begin{align} \label{eq:CharStringConditional}
\mathbb{P}(\CharString[T_j]=0|T_{j''}:j''\in \mathbb{Z}) =
\alpha \mathbb{P}(\Delta < T_j - T_{j-1}|T_j - T_{j-1})
\end{align}
\end{lemma}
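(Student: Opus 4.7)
The plan is to decompose the lemma into (i) the renewal-process structure and (ii) the conditional label distribution. For (i), observe that the construction of $\CharString$ yields $\CharString[i]=\perp$ exactly when $\LeaderString[i]=\perp$, so the non-empty slots of $\CharString$ coincide with those of $\LeaderString$. Section \ref{sec:compressed_time_scale} already exhibits the latter as a stationary renewal process with $\Geom(f)$ lifetime, which is the first assertion.

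For (ii), I would treat $j\leq 0$ and $j\geq 1$ in turn. For $j\leq 0$, the construction in Section \ref{sec:special_honest_slots} is explicit: conditionally on the renewal times, $\CharString[T_j]=0$ iff $\LeaderString[T_j]=0$ and an independent Bernoulli variable with success probability $\mathbb{P}(\Delta<T_j-T_{j-1}\mid T_j-T_{j-1})$ returns success. Given the renewal times, the $(\LeaderString[T_j])_{j\leq 0}$ are i.i.d.\ with $\mathbb{P}(\LeaderString[T_j]=0)=\alpha$ and are independent of the coin flips, so \eqref{eq:CharStringConditional} and mutual independence across $j\leq 0$ follow at once. For $j\geq 1$, I would proceed by induction on $j$. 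Note that $\{\CharString[T_j]=0\}=\{\LeaderString[T_j]=0\}\cap\{R_j<T_j-T_{j-1}\}$; the first factor has conditional probability $\alpha$ given the renewal times by the same i.i.d.\ labeling property of $\LeaderString$. By Lemma \ref{lem:CharStringProperties}, $R_j$ and $T_j-T_{j-1}$ are independent of $\mathsf{Info}_j$ (which encodes $\CharString[T_{j'}]$ for $j'<j$), and $R_j\stackrel{d.}{=}\Delta$, $T_j-T_{j-1}\sim\Geom(f)$. Hence the conditional probability of the second factor given the history and $T_j-T_{j-1}$ is $\mathbb{P}(\Delta<T_j-T_{j-1}\mid T_j-T_{j-1})$. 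Multiplying yields \eqref{eq:CharStringConditional} and the independence of $\CharString[T_j]$ from prior labels, preserving the inductive hypothesis. Cross-independence between the $j\leq 0$ and $j\geq 1$ regimes holds because the auxiliary coin flips and the delay variables used in the two regimes are drawn from disjoint, independent sources.

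The main obstacle is the $j\geq 1$ step in the i.i.d.\ leader model, where $R_j=\mathsf{refresh}_{T_{j-1}-T_{j^*}}(\mathsf{delay}(T_{j^*}\to h_j))$ and $T_{j^*}$ depends on all earlier labels. Lemma \ref{lem:refresh_property}(a)--(b) is precisely what restores the $\Delta$-marginal for $R_j$ and decouples it from the portion of the delay already frozen into the history; once that lemma is absorbed into Lemma \ref{lem:CharStringProperties}, the remainder reduces to a routine conditional probability computation.
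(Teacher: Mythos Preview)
Your proposal is correct and follows essentially the same route as the paper: reduce part (i) to the known renewal structure of $\LeaderString$, handle $j\leq 0$ directly from the explicit coin-flip construction, and for $j\geq 1$ run an induction driven by Lemma~\ref{lem:CharStringProperties} (with Lemma~\ref{lem:refresh_property} absorbing the i.i.d.\ leader subtlety). The paper packages the induction as a single statement $\mathcal{S}_j$ and closes with an explicit remark that the $\mathbb{Z}$-indexed claim follows from the finite-dimensional ones, but these are cosmetic differences rather than a different argument.
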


\begin{proof}
The first sentence is true because $\CharString$ has the same set of non-empty slots as $\LeaderString$. Equation \eqref{eq:CharStringConditional} is true by construction for $j\leq 1$. Consider the following statement for $j\geq 1:$

${\mathcal{S}}_j:$  
The sequence of non-empty slots up to time $T_j$,  $(T_{j''}: j'' \leq j),$ forms a stationary renewal process with lifetime distribution $\Geom(f)$  and conditioned on such process,  the labels $(\CharString[T_{j'}]: j'\leq j)$ are conditionally independent, and for any $j' \leq j$,
\iftoggle{arxiv}
{
\begin{equation*}
    \mathbb{P}(\CharString[T_{j'}]=0|T_{j''}:j''\leq j) = \alpha \mathbb{P}(\Delta < T_{j'} - T_{j'-1}|T_{j'} - T_{j'-1})
\end{equation*}
}
{
\begin{multline*}
    \mathbb{P}(\CharString[T_{j'}]=0|T_{j''}:j''\leq j) =
 \\ \alpha \mathbb{P}(\Delta < T_{j'} - T_{j'-1}|T_{j'} - T_{j'-1})
\end{multline*}
}
It is shown next that ${\mathcal{S}}_j$ is true for all $j\geq 1$ by induction on $j.$  The base case $j=1$ is true by the construction of $\CharString.$  Suppose ${\mathcal{S}}_{j-1}$ is true for some $j\geq 2.$
Note that $\textsf{Info}_j$ includes the information in $(T_{j''}: j'' \leq j-1),$ so
Lemma \ref{lem:CharStringProperties} shows that the next lifetime is independent of the previous ones and the probability the renewal point at the end of the next lifetime is labeled 0 depends on the lifetime in the appropriate way.   Therefore, ${\mathcal{S}}_j$ is true, completing the proof by induction that ${\mathcal{S}}_j$ holds for all $j\geq 1.$

The statement of Lemma \ref{lem:renewal_prop_CharString} pertains to the joint distribution of $((T_j, \CharString[T_j]) : j\in \mathbb{Z}),$ which by definition is a statement about any finite sub-collection of the variables involved.  For any finite sub-collection of the variables,  the truth of ${\mathcal{S}}_j$ for $j$ sufficiently large implies that the finite sub-collection of variables have the joint distribution specified by the lemma,  completing the proof of the lemma.
\end{proof}

Properties of $\CompressedCharString$ follow as a corollary of Lemma \ref{lem:renewal_prop_CharString}.
\begin{lemma}   \label{lem:CCS}
For any $s\geq 0$, 
\[\mathbb{P}(\mathsf{CompressedCharString}_s[1]=0 \vert \CharString[\ :s]) \geq p.\] Further, for any $s\geq 0$ and $j \geq 2$,
\[\mathbb{P}(\mathsf{CompressedCharString}_s[j]=0 \vert \mathsf{CharString}[\ :T^s_{j-1}]) = p.\]
Here, $p$ is the parameter defined in Definition \ref{def:eps_honest_maj}.
\end{lemma}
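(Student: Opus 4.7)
The plan is to derive both statements as consequences of Lemma \ref{lem:renewal_prop_CharString}, treating the first one as reflecting the \emph{inspection paradox} for the sampled lifetime straddling slot $s$, and the second as a direct renewal argument. In both cases I would use the formula from Lemma \ref{lem:renewal_prop_CharString} for the conditional probability that a renewal point of $\CharString$ is labeled $0$ given the renewal times, and then average over the (conditional) distribution of the relevant lifetime.

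For part (2) (the easier half), I would condition on $\CharString[\ :s+T^s_{j-1}]$, which determines everything about the leader-election and label processes up to and including the $(j-1)$-st non-empty slot after $s$. By the renewal structure of the non-empty slots, the next inter-renewal gap $T^s_j - T^s_{j-1}$ has distribution $\Geom(f)$ and is independent of the past, and by Lemma \ref{lem:renewal_prop_CharString} the conditional probability that the corresponding label is $0$, given the lifetime $L = T^s_j - T^s_{j-1}$, equals $\alpha \mathbb{P}(\Delta<L\mid L)$. Averaging over $L\sim \Geom(f)$ with $G$ independent of $\Delta$ gives $\alpha\mathbb{P}(\Delta<G)=p$, as required.

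For part (1), the issue is that the lifetime containing $s+T^s_1$ is a sampled (rather than typical) lifetime, since conditioning on $\CharString[\ :s]$ fixes the preceding renewal $T^s_0 \leq s$. Let $d = s - T^s_0 \geq 0$, which is measurable with respect to $\CharString[\ :s]$. By the memoryless property of the geometric distribution, the residual $T^s_1$ has distribution $\Geom(f)$ conditional on $\CharString[\ :s]$, so the sampled lifetime is $L = d + T^s_1 \stackrel{d}{=} d + G$. By Lemma \ref{lem:renewal_prop_CharString}, conditionally on the renewal times, the labels are independent and the label at $s+T^s_1$ is $0$ with probability $\alpha\mathbb{P}(\Delta<L\mid L)$, so
\[
\mathbb{P}(\CompressedCharString_s[1]=0\mid\CharString[\ :s])
= \alpha \sum_{k\geq 1} \mathbb{P}(\Delta < d+k)\,\mathbb{P}(G=k).
\]
Since $\{\Delta<k\}\subseteq\{\Delta<d+k\}$ for every $d\geq 0$ and $k\geq 1$, this sum is at least $\alpha\sum_{k\geq 1}\mathbb{P}(\Delta<k)\mathbb{P}(G=k) = \alpha\mathbb{P}(\Delta<G) = p$.

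The only subtlety, which I expect to be the main thing to spell out carefully, is justifying that conditioning on the full string $\CharString[\ :s]$ reduces to conditioning on $d$ alone for the purpose of computing this probability. This follows because, given the renewal times, Lemma \ref{lem:renewal_prop_CharString} says the past labels are conditionally independent of the future labels, and by the renewal/memoryless property the entire future $(T^s_j - T^s_{j-1})_{j\geq 1}$ depends on the past only through $d$. Note that, unlike part (2), no non-decreasing failure rate hypothesis on $\Delta$ is needed; monotonicity of $\mathbb{P}(\Delta<\cdot)$ is automatic.
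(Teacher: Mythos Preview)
Your proposal is correct and rests on the same ingredient as the paper's proof, namely Lemma \ref{lem:renewal_prop_CharString}. The paper's own proof is much terser: it observes that the stationarity established in Lemma \ref{lem:renewal_prop_CharString} makes the conditional probabilities the same for all $s\geq 0$, and then asserts that the two statements follow for $s=0$ directly from that lemma. Your argument unpacks exactly what this terse invocation is hiding: the memorylessness of $\Geom(f)$ (which is what stationarity amounts to here) gives the residual $T^s_1\sim\Geom(f)$, and for part (1) the sampled lifetime $d+T^s_1$ stochastically dominates a fresh $\Geom(f)$, yielding the inequality via monotonicity of $\ell\mapsto\mathbb{P}(\Delta<\ell)$. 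Two small notational notes: the paper does not define $T^s_0$, so you should instead refer to the last renewal $T_{j^*}\leq s$; and your final remark about the failure-rate hypothesis is slightly misplaced, since neither part of this lemma uses it (that hypothesis enters earlier, in the construction of special honest slots for the i.i.d.\ leader model).
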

\begin{proof}
Lemma \ref{lem:renewal_prop_CharString} implies that the right-hand sides of the two statements to be proved are the same for all $s\geq 0,$  and the statements follow for $s=0$ by Lemma \ref{lem:renewal_prop_CharString} as well.
\end{proof}

\subsection{The \textsf{Unheard} process}\label{sec:unheard}  Every honest party suffers some delay in receiving messages from special honest slots, and therefore may not have heard of all the special honest broadcasts. To prove security guarantees for a certain party $h \in \mathcal{H}$, we consider the delays suffered by $h$ alone in receiving messages from the leaders of special honest slots. Likewise, if we wish to prove security guarantees for a group of honest parties $\mathcal{I} \subset \mathcal{H}$, then we consider the delays suffered by all the parties in $\mathcal{I}$, but not other honest parties. The only other relevant delays for the security guarantees for $\mathcal{I}$ are the delays among the honest leaders, and these are appropriately incorporated into the definition of special honest slots.

\begin{definition} (LatestHeard and Unheard) For an honest party $h$ and $i\geq 1,$ let $\LatestHeard_h[i]$ denote the special honest slot with greatest index that $h$ has heard by the end of slot $i$. That is,
\iftoggle{arxiv}
{
\begin{equation*}
    \LatestHeard_h[i] = \max\{i': 1\leq i' \leq i, \CharString[i']=0, i' + \mathsf{delay}[i'\to h] \leq i\},
\end{equation*}
}
{
\begin{multline*}
    \LatestHeard_h[i] = \\ \max\{i': 1\leq i' \leq i, \CharString[i']=0, i' + \mathsf{delay}[i'\to h] \leq i\},
\end{multline*}
}
with the convention that the maximum of an empty set is  $-\infty.$
Let $\Unheard_h[i]$ denote the number of special honest slots after the slot containing the most recent special honest broadcast heard by $h$ by slot $i$. That is, 
\iftoggle{arxiv}
{
\begin{equation*}
    \Unheard_h[i] =
    |i'': \max\{0,\LatestHeard_h[i]\} < i'' \leq i, \CharString[i'']=0\}|.
\end{equation*}
}
{
\begin{multline*}
    \Unheard_h[i] = \\
    |i'': \max\{0,\LatestHeard_h[i]\} < i'' \leq i, \CharString[i'']=0\}|.
\end{multline*}
}
Additionally, for any honest party $h,$ let $\CompressedUnheard_{h,s}$ be the compressed  process corresponding to process $\Unheard_h$ and reference slot $s$,
as in Section \ref{sec:compressed_time_scale}.
Finally, given a set of honest parties $\mathcal{I}$, let
\[\LatestHeard_\mathcal{I}[i]= \min_{h\in \mathcal{I}} \LatestHeard_h[i],\]
\[\Unheard_\mathcal{I}[i]  = \max_{h\in \mathcal{I}}\Unheard_h[i],\]
\[\CompressedUnheard_{\mathcal{I},s}[j]  = \max_{h\in \mathcal{I}}\CompressedUnheard_{h,s}[j].\]
\end{definition}

For example, $\Unheard_h[i] = 2$ means that by the end of slot $i$, $h$ had not heard the last two special honest slots occurring before or at $i$, and it either heard the third most recent special honest slot before slot $i$ or there were only two special honest slots during $[1:i].$ 
\begin{restatable}[]{lemma}{distUnheard}\label{lem:dist_unheard} 
Let $q =  \mathbb{P}(\Delta \leq \Geom(f)).$  Then the following statements hold: \\
(a)  For any $i\geq 1$, $\mathbb{P}(\Unheard_h[i] >  a) \leq (1-q)^a$ for all integers $a\geq 0.$  \\
(b)  For any $s\geq 1$, and $j\geq 1$, $\mathbb{P}(\CompressedUnheard_{h,s}[j] >  a) \leq (1-q)^a$  for all integers $a\geq 0.$
\end{restatable}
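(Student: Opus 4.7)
The plan is to reduce the tail bound on $\Unheard_h[i]$ to a product of $a$ ``miss probabilities'' each of magnitude at most $1-q,$ using that delays to $h$ are i.i.d.~$\Delta$ together with the conditional structure of special honest slots provided by Lemma \ref{lem:renewal_prop_CharString}. First I would observe that $\Unheard_h[i] > a$ holds iff there are at least $a+1$ special honest slots $S_1 < \cdots < S_m$ in $[1,i]$ and each of the top $a+1$ satisfies $\mathsf{delay}(S_k \to h) > i - S_k$ for $k = m-a, \ldots, m.$ For $k<m,$ the inequality $S_{k+1} \leq i$ gives $i - S_k \geq S_{k+1} - S_k \geq G_{j_{k+1}},$ where $G_j := T_j - T_{j-1}$ is the $j$th non-empty-slot interarrival and $j_l$ is the index of $S_l$ in the non-empty-slot sequence. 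Dropping the (trivially $\leq 1$) factor for $k=m,$
\[
\mathbb{P}(\Unheard_h[i] > a) \leq \mathbb{P}\!\left(\bigcap_{k=m-a}^{m-1} \{\mathsf{delay}(S_k \to h) > G_{j_{k+1}}\}\right).
\]

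Next I would condition on the full special-honest labelling together with the gap sequence. The delays $\{\mathsf{delay}(T_j \to h)\}_j$ are i.i.d.~$\Delta$ and independent of this conditioning (directly in the one-time leader model, and modulo a refresh-coupling that Lemma \ref{lem:refresh_property}(b) lets one decouple in the i.i.d.~leader model), so the conditional probability factorises as $\prod_{k=m-a}^{m-1}\mathbb{P}(\Delta > G_{j_{k+1}}).$ To take the outer expectation, Lemma \ref{lem:renewal_prop_CharString} implies that, conditional on the special-honest labelling, the gaps at distinct positions are independent, and the gap at any special-honest position $j$ has the $\Geom(f)$ distribution size-biased by the factor $\mathbb{P}(\Delta < G_j \mid G_j).$ For that biased marginal, the FKG/Chebyshev-sum covariance inequality applied to the decreasing function $g \mapsto \mathbb{P}(\Delta > g)$ and the increasing function $g \mapsto \mathbb{P}(\Delta < g)$ gives
\[
\mathbb{E}[\mathbb{P}(\Delta > G)\,\mathbb{P}(\Delta < G)] \leq (1-q)\,\mathbb{P}(\Delta < \Geom(f)),
\]
which rearranges to $\mathbb{E}[\mathbb{P}(\Delta > G_j) \mid \sigma_j = 1] \leq 1 - q.$ Taking the product over $a$ conditionally-independent positions yields the bound $(1-q)^a,$ completing part (a). Part (b) is then immediate from (a) applied at the random time $i = s + T^s_j,$ since $\CompressedUnheard_{h,s}[j] = \Unheard_h[s + T^s_j]$ and the bound in (a) is uniform in $i.$

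The main obstacle is the size-bias accounting in the covariance step: $G_{j_{k+1}}$ sits at an index that is itself a function of the gap sequence, and one must verify that conditioning on ``this position is a special honest slot'' biases $G$ \emph{upward} in distribution, not downward. The FKG-type inequality applied to the specific monotone pair $(\mathbb{P}(\Delta > G),\,\mathbb{P}(\Delta < G))$ is precisely what turns the biased marginal into a usable $1-q$ bound per position; without it, a naive attempt to pull $\mathbb{P}(\Delta > G_{j_{k+1}})$ out factor-by-factor could yield something strictly larger than $1-q.$ A secondary technical point is the refresh coupling in the i.i.d.~leader model, where $R_j$ and $\mathsf{delay}(\cdot \to h)$ may share an underlying delay variable when $h$ coincides with the leader $h_j$; Lemma \ref{lem:refresh_property}(b) is what preserves the required independence of the delay increments relevant to unheardness.
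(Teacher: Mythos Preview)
Your approach diverges from the paper's in the lower bound you use for the distance $i - S_k$. You bound it by $G_{j_{k+1}}$, the inter-renewal gap ending at the next special honest slot; since you are conditioning on that slot being special honest, this gap is size-biased, and you then appeal to a Chebyshev/FKG covariance inequality to recover $1-q$ per factor. The paper instead observes that the $j$th most recent special honest slot (going backward from $i$) is at distance at least $D_1+\cdots+D_j \geq D_j$, where $D_1, D_2,\ldots$ are the backward renewal lifetimes from $i$. The point is that these $D_j$ are \emph{unconditionally} i.i.d.\ $\Geom(f)$, independent of the labels and of the delays to $h$; no size-biasing arises, no covariance inequality is needed, and the product of $a$ miss-probabilities is immediately $(1-q)^a$.

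There is a real gap in your product step. You claim the $a$ factors are ``conditionally independent'' after conditioning on labels, each with mean at most $1-q$. But the indices $j_{m-a+1},\ldots,j_m$ entering your product are the last $a$ special-honest positions inside $[1,i]$, and which positions these are depends on $m$, which in turn depends on the cumulative sums of the very gaps you are averaging over (through $T_{j_m}\leq i$). This selection biases the relevant gaps \emph{downward} (small gaps make more special honest slots fit before $i$), hence biases each $\mathbb{P}(\Delta > G_{j_l})$ \emph{upward} --- the wrong direction for your bound. Your FKG step handles only the ``this slot is special honest'' bias, not this additional $m$-selection bias, so the product bound $(1-q)^a$ for your right-hand side is not established. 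The paper's use of the fixed backward indices $D_1,\ldots,D_a$ sidesteps exactly this problem.

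Your deduction of part (b) is also incomplete: the bound in (a) is for deterministic $i$, and $s+T_j^s$ is a random time depending on the renewal process, so ``uniform in $i$'' does not let you substitute it in. The paper redoes the backward-lifetime argument from the renewal point $s+T_j^s$, noting that the backward lifetimes from there are again i.i.d.\ $\Geom(f)$ except for the one straddling slot $s$, which has the (stochastically larger) sampled-lifetime distribution; since larger lifetimes only help $h$ hear, the $(1-q)^a$ bound carries over.
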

The proof of this lemma is given in Appendix \ref{app:lemma_unheard}. The following lemma is a consequence of Lemma \ref{lem:dist_unheard} and the union bound.

\begin{lemma}   \label{lem:Uheard_line_bnd}
For any $k' \in \mathbb{N}$, $B \geq 0$ and $c \geq 0$,
\iftoggle{arxiv}
{
\begin{equation}\label{eq:RHS_comp}
    \mathbb{P}(\textsf{CompressedUnheard}_{h,s}[j]  \geq B + c(j-k')\text{ for some } j \geq k')    
    \\ \leq  \left[\frac{1}{(1-q)(1-(1-q)^c)}\right] \exp(-Bq)  
\end{equation}
}
{
\begin{multline}\label{eq:RHS_comp}
    \mathbb{P}(\textsf{CompressedUnheard}_{h,s}[j]  \geq B + c(j-k')\text{ for some } j \geq k')    
    \\ \leq  \left[\frac{1}{(1-q)(1-(1-q)^c)}\right] \exp(-Bq)  
\end{multline}
}
\end{lemma}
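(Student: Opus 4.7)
The plan is to apply Lemma \ref{lem:dist_unheard}(b) to each $j \geq k'$ separately and then combine the resulting tail bounds via a union bound followed by a geometric series sum.

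First, I would convert the integer-indexed tail bound of Lemma \ref{lem:dist_unheard}(b) into a bound valid for an arbitrary real threshold $x \geq 0$. Since $\CompressedUnheard_{h,s}[j]$ takes values in $\mathbb{Z}_+$, we have $\mathbb{P}(\CompressedUnheard_{h,s}[j] \geq x) = \mathbb{P}(\CompressedUnheard_{h,s}[j] > \lceil x \rceil - 1)$. Applying the lemma with $a = \max\{0, \lceil x \rceil - 1\}$ and using $\lceil x \rceil \geq x$ together with $1 - q \leq 1$ yields
$$\mathbb{P}(\CompressedUnheard_{h,s}[j] \geq x) \leq (1-q)^{x-1},$$
with the understanding that the bound is vacuous (exceeds $1$) when $x \leq 1$ but useful for larger $x$.

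Next, I would apply this estimate with $x = B + c(j - k')$, take a union bound over $j \geq k'$, and sum the resulting geometric series (valid for $c > 0$; the case $c = 0$ is immediate since the right-hand side of the lemma is infinite when $c = 0$):
$$\mathbb{P}\bigl(\CompressedUnheard_{h,s}[j] \geq B + c(j - k') \text{ for some } j \geq k'\bigr) \leq \sum_{j = k'}^{\infty} (1-q)^{B + c(j-k') - 1} = \frac{(1-q)^{B-1}}{1 - (1-q)^{c}}.$$
Finally, the elementary inequality $1 - q \leq e^{-q}$ gives $(1-q)^{B-1} = (1-q)^B/(1-q) \leq e^{-Bq}/(1-q)$, which produces the stated bound $\exp(-Bq)/[(1-q)(1-(1-q)^c)]$. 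The argument is essentially routine; no step poses a substantive obstacle. The only slight care needed is the ceiling step that converts Lemma \ref{lem:dist_unheard}(b)'s integer-threshold bound into one valid for arbitrary real thresholds, and the separate handling of the degenerate case $c = 0$.
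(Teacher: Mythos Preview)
Your proposal is correct and follows essentially the same approach as the paper: convert Lemma~\ref{lem:dist_unheard}(b) to a real-threshold tail bound $\mathbb{P}(\CompressedUnheard_{h,s}[j] \geq t) \leq (1-q)^{t-1}$, apply the union bound over $j \geq k'$, sum the geometric series, and finish with $(1-q)^B \leq e^{-Bq}$. Your treatment is slightly more careful than the paper's in that you justify the ceiling step explicitly and note the degenerate case $c=0$.
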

\begin{proof}
Lemma \ref{lem:dist_unheard} implies
$
\mathbb{P}(\CompressedUnheard_{h,s}[j] \geq t) \leq (1 - q)^{t-1} \mbox{ for } t \in \mathbb{R}_+
$
Substituting $B + c(j-k')$ for $t$ and using the union bound by summing over the possible values of $j-k'$ yields
that the left-hand side of \eqref{eq:RHS_comp} is bounded from above by
\begin{align*}
\sum_{d=0}^{\infty}(1 - q)^{B+cd-1}  =  \left[\frac{1}{(1-q)(1-(1-q)^c)}\right] (1-q)^B.
\end{align*}
Since $(1-q)^B \leq \exp(-B q)$, the bound in equation \eqref{eq:RHS_comp} follows.
\end{proof}

\section{Lemmas on Deterministic Properties}\label{sec:deterministic_lemmas}

In this section, we deduce some necessary conditions for violations of settlement and chain quality. The main tool is the notion of a {\em fork}, which describes some constraints on the possible blocktrees in an execution. We then define {\em reach} and {\em margin}, which are functions of a characteristic string and its associated fork. These metrics, first introduced in the Ouroboros line of work \cite{kiayias2017ouroboros, david2018ouroboros, blum2020combinatorics}, prove useful for analyzing settlement and chain quality violations, and we adapt the Ouroboros definitions to our setting. We introduce the basic terminology used for analyzing forks in Sections \ref{sec:forks}-\ref{sec:viable_tines_balanced_forks}. Sections \ref{sec:settlement_balanced_forks}-\ref{sec:settlement_margin} then focus on the settlement and Section \ref{sec:intensive_chain_quality} focuses on chain quality.

\subsection{Forks}\label{sec:forks}

Recall from Section \ref{sec:preliminaries} that $\mathcal{F}_i$ is a labeled, directed tree, representing the set of all blocks produced until the end of slot $i$. 
$\mathcal{F}_i$ depends on two factors: the adversary's actions and the random components of the protocol beyond the adversary's control. The characteristic string separates these two factors, by capturing all components beyond the adversary's control. The characteristic string, thus, imposes constraints on the possible $\mathcal{F}_i$ that the adversary can construct. These constraints are aptly described by the notion of a fork, which we define next.
\begin{definition}[Fork]\label{def:fork}
Let $w \in \CharSymbols{}^*$ be a finite string. A \emph{fork with respect to $w$} is a directed, rooted tree $F = (V,E)$ with a labeling $\ell:V \rightarrow \{0\} \cup \mathcal{N}(w)$ that satisfies the following properties.
\begin{itemize}
    \item each edge of $F$ is directed away from the root
    \item the root $r \in V$ is given the label $\ell(r) = 0$ 
    \item the labels along any directed path are strictly increasing
    \item each index $s \in \mathcal{N}_0(w)$ is the label of exactly one vertex of $F$
    \item the function $\mathbf{d} : \mathcal{N}_0(w) \rightarrow \mathbb{N}$, defined so that $\mathbf{d}(s)$ is the depth in $F$ of the unique vertex $v$ for which $\ell(v) = s$, satisfies the following monotonicity property: if $s_1 < s_2$, then $\mathbf{d}(s_1) < \mathbf{d}(s_2)$
\end{itemize}
We use the notation $F \vdash w$ if $F$ is a fork with respect to $w$.
\end{definition}
We now show that in any execution, irrespective of the adversary's actions and the instantiations of the random components, $\mathcal{F}_i$ is a fork with respect to $\CharString[1:i]$ (i.e., it satisfies the five properties listed in Definition \ref{def:fork}). The first three properties follow from the basic properties of blockchains described in Section \ref{sec:preliminaries}.  The fourth property is immediate given that special honest slots are a subset of uniquely honest slots, and that every honest leader proposes exactly one block when it is chosen as a leader. The last property is implied by the fillowing two facts.  First, every honest leader builds a chain that is strictly longer than any of the chains it has heard previously. Second, every special honest slot's leader has heard of the previous special honest slot's broadcast in a previous slot (see Section \ref{sec:special_honest_slots}). 

All honestly held chains, $\mathcal{C}^h_s$, $s \leq i$, are considered to be \textit{tines} in $\mathcal{F}_i$, where tines are defined as follows:

\begin{definition}[Tine]\label{def:tine}
Let $w \in \CharSymbols{}^*$ be a finite string. Let $F \vdash w$ be a fork. A \emph{tine} $t$ of $F$ is a directed path starting from the root. This is denoted by $t \in F$. For any tine $t$ define \emph{$\text{length}(t)$} to be the number of edges in the path, and for any vertex $v$ define its depth to be the length of the unique tine that ends at $v$. also define $\ell(t)$ to be the label of the vertex at the end of $t$.
\end{definition}
In Section \ref{sec:viable_tines_balanced_forks}, we further characterize honestly held chains by defining \emph{viable tines}.

We end this section with an important note about terminology. If $w \in \CharSymbols{}^*$ is a finite string and $F=(V,E)$ is a fork with respect to $w$, we call a slot $i$ an {\em adversarial slot} if $w[i]=1$ and we call a vertex in $V$ an \emph{adversarial block} if its label is an adversarial slot. In particular, consider $\CharString$. We treat a slot $i$ with $\CharString[i] = 1$ as adversarial, \emph{even if} $\mathsf{LeaderString}[i] = 0$. In other words, we treat uniquely honest slots that are not special honest as adversarial.

\subsection{Reach and Margin}\label{sec:reach_margin}

In this subsection, we define the terms reach and margin. These were previously described in earlier works (\cite{kiayias2017ouroboros, david2018ouroboros, blum2020combinatorics}). For a single point of comparison, we refer to \cite{blum2020combinatorics}. Our definitions are different from those in \cite{blum2020combinatorics} in two minor respects. First, our definitions are with respect to characteristic strings in $\CharSymbols{}^*$, instead of $\{0, 1\}^*$ as in \cite{blum2020combinatorics}.  Second, in the following definition, $t_1 \nsim_s t_2$ or being $s$-disjoint means the tines do not share any nodes with label greater than or equal to $s$, whereas in \cite{blum2020combinatorics} it means the times do not share any nodes with label (strictly) greater than $s.$   The version we use is more natural for considering violations of the $s,k$ settlement property.

In what follows, $i \in \mathbb{N}$ and  $w \in \CharSymbols{}^i$ are arbitrary.

\begin{definition}[The $\sim$ relation]\label{def:disjoint_tines}
Let $F \vdash w$. For two tines $t_1$ and $t_2$ of $F$, write $t_1 \sim t_2$ if $t_1$ and $t_2$ share an edge; otherwise write $t_1 \nsim t_2$ and refer to them as \emph{disjoint tines}. For any $s \leq i$,  write $t_1 \sim_s t_2$ if $t_1$ and $t_2$ share a node with a label greater than or equal to $s$; otherwise,  write $t_1 \nsim_s t_2$ and call such tines \emph{$s$-disjoint}.
\end{definition}

\begin{definition}[Closed fork]\label{def:closed_fork}
A fork $F \vdash w$ is closed if every leaf in $F$ is special honest. In other words, every leaf in $F$ has a label from the set $\mathcal{N}_0(w)$. 
\end{definition}

\begin{definition}[Closure of a fork]\label{def:closure}
Given a fork $F \vdash w$, the closure of $F$, $\overline{F} \vdash w$ is a closed fork obtained from $F$ by trimming all trailing adversarial blocks from all tines of $F$.
\end{definition}

\begin{definition}[Gap, Reserve, Reach]\label{def:gap_reserve_reach}
For a closed fork $F \vdash w$ and its unique longest tine $\hat{t}$,  define the \emph{gap of a tine $t\in F$} by 
 $ \text{gap}(t) \triangleq \text{length}(\hat{t}) -  \text{length}(t).$
Define the \emph{reserve of $t$}, denoted $\text{reserve}(t)$, to be the number of adversarial indices in $w$ that appear after the terminating vertex of $t$. In other words, if $v$ is the last vertex of $t$, then
$\text{reserve}(t) \triangleq \left \vert \{ i > \ell(v)\,|\, w[i] = 1\}\right \vert.$
These quantities are used to define the \emph{reach of a tine $t$}: 
$\text{reach}(t) \triangleq \text{reserve}(t) - \text{gap}(t).$
\end{definition}

For the intuition behind these definitions, we refer the reader to \cite{kiayias2017ouroboros, blum2020combinatorics}.

\begin{definition}[Reach of a fork or string]\label{def:max_reach}
For a closed fork $F \vdash w$, define $\Reach(F,w)$ to be the largest reach attained by any tine of $F$ (i.e., $\Reach(F,w) \triangleq \max_{t \in F} \text{reach}(t)$). 
We overload this notation to denote the maximum reach over all closed forks with respect to a finite-length characteristic string $w$:
\[\Reach(w) \triangleq \max_{F \vdash w,\, F \text{closed}} \Reach(F,w).\]
\end{definition}
Note that $\Reach(F,w)$ is non-negative, because the longest tine of any fork always has non-negative reach.

\begin{definition}[Margin of a fork or string]\label{def:margin}
For a closed fork $F \vdash w$ and $s < i$, define the
margin of $(F,w)$ relative to $s$ by:
\emph{\[\Margin_s(F,w) \triangleq \max_{t_1 \nsim_s t_2} \min\{\textnormal{reach}(t_1), \textnormal{reach}(t_2)\}.\]}
Once again, we overload notation to denote the relative margin of a string.
\emph{\[\Margin_s(w) \triangleq \max_{F \vdash w,\,F \text{closed}} \Margin_s(F,w).\]}
\end{definition}

For an infinite string $w \in \CharSymbols{}^\mathbb{N}$, $\Reach$ and $\Margin_s$ obey the recursive formulae we state below in equations (\ref{eq:reach_recursive}) and (\ref{eq:margin_recursive}). These are similar to those in Lemmas 2 and 3 in \cite{blum2020combinatorics}, with minor differences accounting for the two factors mentioned at the beginning of this section. The inclusion of $\perp$'s is inconsequential, as we show here. Suppose, for some $i$, $w[i] = \perp$. Then $F \vdash w[1:i-1] $ if and only if $ F \vdash w[1:i]$. It follows from Definitions \ref{def:gap_reserve_reach}, \ref{def:max_reach} and \ref{def:margin} that $\Reach(w[1:i]) = \Reach(w[1:i-1])$ and $\Margin_s(w[1:i]) = \Margin_s(w[1:i-1])$. For the complete proof of the following recursions, we refer the reader to \cite{kiayias2017ouroboros, blum2020combinatorics}. For the sake of defining the recursions, we define these quantities for an empty string as well. 

Let $\Reach(w[1:0]) = 0$, and for $i \geq 1$,
\begin{equation}\label{eq:reach_recursive}
    \Reach(w[1:i]) = \begin{cases}
    \Reach(w[1:i-1]) &\text{if } w[i] = \,\perp \\
    \Reach(w[1:i-1]) + 1 &\text{if } w[i] = 1 \\
    \textsf{(Reach}(w[1:i-1])-1 )_+ &\text{if } w[i] = 0 
    \end{cases}
\end{equation}

Let $s \in \mathbb{N}$ and $i \in \mathbb{Z}_+$. Then $\Margin_s(w[1:i]) = \Reach(w[1:i]) \mbox{ for } i < s$, and for $i \geq s$,
\iftoggle{arxiv}
{
\begin{equation}\label{eq:margin_recursive}
    \Margin_s(w[1:i]) =  \begin{cases}
    \Margin_s(w[1:i-1]) &\text{if } w[i] = \,\perp \\
    \Margin_s(w[1:i-1]) + 1 &\text{if } w[i] = 1 \\
    \Margin_s(w[1:i-1]) &\text{if } w[i] = 0 \text{ and }\\ &   \Reach(w[1:i-1])  >  \Margin_s(w[1:i-1]) = 0 \\
    \Margin_s(w[1:i-1]) - 1 &\text{if } w[i] = 0 \text{ and }\\ &   \{(\Reach(w[1:i-1]) = 0 \text{ or } \Margin_s(w[1:i-1]) \neq 0)\}
    \end{cases}
\end{equation}
}
{
\begin{multline}\label{eq:margin_recursive}
    \Margin_s(w[1:i]) = \\ \begin{cases}
    \Margin_s(w[1:i-1]) &\text{if } w[i] = \,\perp \\
    \Margin_s(w[1:i-1]) + 1 &\text{if } w[i] = 1 \\
    \Margin_s(w[1:i-1]) &\text{if } w[i] = 0 \text{ and }\\ &   \Reach(w[1:i-1]) \\ & >  \Margin_s(w[1:i-1]) = 0 \\
    \Margin_s(w[1:i-1]) - 1 &\text{if } w[i] = 0 \text{ and }\\ &   \{(\Reach(w[1:i-1]) = 0 \\ & \text{ or } \Margin_s(w[1:i-1]) \neq 0)\}
    \end{cases}
\end{multline}
}
The fact that $\Margin_s(w[1:i]) = \Reach(w[1:i])$ for $i < s$ is not explicitly shown in \cite{blum2020combinatorics}, so we prove it here. It suffices to
show that $\Margin_s(F) = \Reach(F) $ for $ F \vdash w[1:i], F \text{ closed},$ and $i < s$. The desired property holds because any tine in $F$ would not have blocks with a label greater than or equal to $s$, and is therefore $s$-disjoint with itself. The second largest reach among all $s$-disjoint pairs of tines, i.e., $\Margin_s(F)$, is therefore equal to the largest reach among all tines $\Reach(F)$.

\subsection{Viable Tines and Balanced Forks}\label{sec:viable_tines_balanced_forks}

We now introduce the terms \textit{viable tines} and \textit{balanced forks}, which are borrowed from \cite{kiayias2017ouroboros, blum2020combinatorics} but modified appropriately to suit our analysis.
\begin{definition}[Viable Tine]\label{def:viable_tine}
    Let $i \in \mathbb{N}$ and $w \in \CharSymbols{}^i$ be given. Let $F \vdash w$ be a fork and let $t$ be a tine of $F$. Say that $t$ is \emph{viable} if for all \emph{$s \in \mathcal{N}_0(w), \ \mathbf{d}(s) \leq \text{length}(t)$}. 
    
    Similarly, $t$ is \emph{$l$-viable} if for all \emph{$s \in \mathcal{N}_0(w[1 : l]), \ \mathbf{d}(s) \leq \text{length}(t)$}.
\end{definition}
Note that viability of a tine $t$ is defined in the context of a fixed fork $F$ and characteristic string $w$ with  $F \vdash w $ ($i$ is implicit; it's the length of $w$). When specializing to $\mathcal{F}_i \vdash \CharString[1:i]$, $l$-viable tines have the following interpretation. For an honest party $h$, let $l = \LatestHeard_h[i]$. Then $\mathcal{C}^h_i$ is an $l$-viable tine in $\mathcal{F}_i$.
We note some useful facts concerning viable tines. These facts are used in the proofs of the subsequent lemmas.
\begin{itemize}
    \item Given $i \in \mathbb{N}, w \in \CharSymbols{}^i$ and $F \vdash w$, a viable tine in $F$ is equivalent to an $i$-viable tine. If a tine is $l_1$-viable, it is also $l_2$-viable for every $l_2 < l_1$.
    \item If $t_1$ is an $l$-viable tine in $F$, and $t_2 \in F$ is a tine that is at least as long as $t_1$, then $t_2$ is also an $l$-viable tine.
    \item If $t \in F$ is at least as long as the longest tine in $\Bar{F}$, $t$ is viable in $F$.
\end{itemize}

\begin{definition}[Balanced Forks]\label{def:balanced_forks}
    Let $i \in \mathbb{N}, w \in \CharSymbols{}^i$, and $s \in \mathbb{N}$ such that $s \leq i$. A fork \emph{$F \vdash w$} is \emph{$s$-balanced} if it contains two tines $t_1, t_2$ s.t. both tines are viable and $t_1 \nsim_s t_2$. Similarly, $F$ is \emph{($s, l$)-balanced} if it contains two tines $t_1, t_2$ s.t. both tines are $l$-viable and $t_1 \nsim_s t_2$.
\end{definition}

In principle, we could allow for $s > i$ in the above definition. However, all forks $F \vdash w$ are $s$-balanced if $s > i$. This is because the longest tine in a fork is always viable, and it is $s$-disjoint with itself if $s > i$. Similarly, for any $l < s$, any fork is ($s, l$)-balanced. For any $l$, there is always an $l$-viable tine composed of blocks with labels $\leq l$ (the longest tine ending at a vertex with label in $\mathcal{N}_0(w[1:l])$. Such a tine is $s$-disjoint with itself.

Next, we introduce the notion of fork prefixes as they appear frequently in our proofs.

\begin{definition}[Fork Prefixes]\label{def:fork_prefixes}
    Let $i \in \mathbb{N}, w \in \CharSymbols{}^i$ and $i' \in \mathbb{N}$ such that $i' \leq i$ be given. For two forks $F \vdash w$, $F' \vdash w[1:i']$, say that $F'$ is a prefix of $F$ if $F'$ is a consistently labeled sub-graph of $F$. This is written as $F' \sqsubseteq F$.
\end{definition}
For every tine $t \in F$, there is a unique tine $t' \in F'$ with the vertices of $t'$ being the vertices of $t$ that are in $F'.$  Note that $\mathcal{F}_{i'} \sqsubseteq \mathcal{F}_{i}$ for any $i' < i$. In addition, for any $w \in \CharSymbols{}^*$ and any $F \vdash w$, $\Bar{F} \sqsubseteq F$. If $F'$ is a prefix of $F$,  say $F$ is a suffix of $F'$.

The notion of disjoint tines carries across forks that are prefixes of each other. Suppose $i \in \mathbb{N}, w \in \CharSymbols{}^i$ and $s \in \mathbb{N}$ such that $s \leq i$ are given.
Let $F \vdash w$ be a fork containing two tines $t_1, t_2$ such that $t_1 \nsim_s t_2$. For some $i' \leq i$, let $F' \vdash w[1:i']$ be a prefix of $F$, and let $t'_1$, $t'_2$ be tines corresponding to $t_1$ and $t_2$ respectively. Then $t'_1 \nsim_s t'_2$. A slightly technical point to note is that this statement holds irrespective of whether $i' \geq s$ or $i < s$; in the latter case, it is trivial as any tine $t$ such that $\ell(t) < s$ satisfies $t \nsim_s t$.

\subsection{Settlement and Balanced Forks}   \label{sec:settlement_balanced_forks}
We first introduce some terminology to reason about events concerning the settlement property for a given execution.   Given $s, k \geq 1$, and a subset $\mathcal{I}$ of honest parties, we define the event:
\[\mathcal{E}_{\text{settlement}} \triangleq \{\forall \ h_1, h_2 \in \mathcal{I}, \ \forall i_1, i_2 \geq s+k, \mathcal{C}^{h_1}_{i_1}[1:s] = \mathcal{C}^{h_2}_{i_2}[1:s]\}\] 
and, for $i\geq 1$, we define the event:
\iftoggle{arxiv}
{
\begin{equation}
    \mathcal{E}_{i\text{-settlement}} \triangleq \{\forall \ h_1, h_2 \in \mathcal{I}, \ \mathcal{C}^{h_1}_{i}[1:s] = \mathcal{C}^{h_2}_{i}[1:s]\}  \cap \{\forall \ h \in \mathcal{I}, \mathcal{C}^{h}_{i}[1:s] = \mathcal{C}^{h}_{i+1}[1:s]\} \label{eq:def_i_settlement}
\end{equation}
}
{
\begin{multline}
    \mathcal{E}_{i\text{-settlement}} \triangleq \{\forall \ h_1, h_2 \in \mathcal{I}, \ \mathcal{C}^{h_1}_{i}[1:s] = \mathcal{C}^{h_2}_{i}[1:s]\}  \\ \cap \{\forall \ h \in \mathcal{I}, \mathcal{C}^{h}_{i}[1:s] = \mathcal{C}^{h}_{i+1}[1:s]\} \label{eq:def_i_settlement}
\end{multline}
}
From these definitions, we deduce that
\iftoggle{arxiv}
{
\begin{equation}\label{eq:def_i_settlement_complement}
    \mathcal{E}^c_{i\text{-settlement}} = \{\exists \ h_1, h_2 \in \mathcal{I} \text{ such that } \mathcal{C}^{h_1}_{i}[1:s] \neq \mathcal{C}^{h_2}_{i}[1:s]\} \cup \{\exists \ h \in \mathcal{I} \text{ such that } \mathcal{C}^{h}_{i}[1:s] \neq \mathcal{C}^{h}_{i+1}[1:s]\} 
\end{equation}
}
{
\begin{multline}\label{eq:def_i_settlement_complement}
    \mathcal{E}^c_{i\text{-settlement}} = \{\exists \ h_1, h_2 \in \mathcal{I} \text{ such that } \mathcal{C}^{h_1}_{i}[1:s] \neq \mathcal{C}^{h_2}_{i}[1:s]\} 
    \\ \cup \{\exists \ h \in \mathcal{I} \text{ such that } \mathcal{C}^{h}_{i}[1:s] \neq \mathcal{C}^{h}_{i+1}[1:s]\} 
\end{multline}
}
Say that the settlement property with parameters $s, k, \mathcal{I}$ is \emph{violated at slot $i$} if $\mathcal{E}^c_{i\text{-settlement}}$ occurs. In words, this means that there exist two different honest parties who hold chains at slot $i$ that do not agree on slots up to $s$, or there exists an honest party whose chain at slot $i+1$ does not agree with its chain at slot $i$ on slots up to $s$. Suppose the $i$-settlement property is \textit{not} violated for any slot $i$ such that  $i \geq s + k$. Then all honestly held chains (among those in $\mathcal{I}$) agree up to slot $s$, from slot $s+k$ onward. This can be argued by induction.  Therefore,
$    \mathcal{E}_{\text{settlement}} = \bigcap_{i \geq s + k} \mathcal{E}_{i\text{-settlement}}$
or, equivalently,
\begin{equation}\label{eq:settlement_splitting}
\mathcal{E}^c_{\text{settlement}} = \bigcup_{i \geq s + k} \mathcal{E}^c_{i\text{-settlement}}
\end{equation}

We now state a relation between balanced forks and settlement violation.  
\begin{restatable}[Settlement Violation and Balanced Forks]{lemma}{settleBalanceFork}\label{lem:settle_balance_fork}
Suppose, in an execution, the settlement property for some $(s, k, \mathcal{I})$ is violated at slot $i$ (i.e., \emph{$\mathcal{E}^c_{i\text{-settlement}}$} occurs). Let \emph{$l = \LatestHeard_\mathcal{I}[i]$}. Then \emph{$ F \vdash \CharString[1:i]$} 
for some  $(s, l)$-balanced fork $F.$
\end{restatable}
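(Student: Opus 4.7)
The plan is to take $F = \mathcal{F}_i$ itself; as remarked in Section \ref{sec:forks}, $\mathcal{F}_i \vdash \CharString[1:i]$, so it suffices to exhibit two $l$-viable tines $t_1, t_2 \in \mathcal{F}_i$ satisfying $t_1 \nsim_s t_2$. By \eqref{eq:def_i_settlement_complement}, the hypothesis $\mathcal{E}^c_{i\text{-settlement}}$ splits into two cases, and I will construct the required pair in each. Note that in both cases, $s<i$ (since $i\geq s+k$ and $k\geq 1$).

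\emph{Case 1: there exist $h_1, h_2 \in \mathcal{I}$ with $\mathcal{C}^{h_1}_i[1:s] \neq \mathcal{C}^{h_2}_i[1:s]$.} Set $t_j = \mathcal{C}^{h_j}_i$; both are tines of $\mathcal{F}_i$. Because the two chains disagree at some slot in $\{1,\ldots,s\}$, their last common vertex has label strictly less than $s$; since a tree has unique paths, the two tines share no vertex of label $\geq s$, giving $t_1 \nsim_s t_2$. By the Adopt Phase rule, an honest party's chain is at least as long as any chain it has ever received, so $\mathcal{C}^{h_j}_i$ is $\LatestHeard_{h_j}[i]$-viable, and therefore $l$-viable since $l = \LatestHeard_\mathcal{I}[i] \leq \LatestHeard_{h_j}[i]$.

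\emph{Case 2: there exists $h \in \mathcal{I}$ with $\mathcal{C}^h_{i+1}[1:s] \neq \mathcal{C}^h_i[1:s]$.} The Adopt Phase rule forces $\mathcal{C}^h_{i+1}$ to be strictly longer than $\mathcal{C}^h_i$. Set $t_1 = \mathcal{C}^h_i$ and let $t_2$ be the prefix of $\mathcal{C}^h_{i+1}$ obtained by discarding the tip whenever it carries label $i+1$; since timestamps strictly increase along a chain, at most one vertex is removed, so $\text{length}(t_2) \geq \text{length}(\mathcal{C}^h_{i+1}) - 1 \geq \text{length}(t_1)$, and every vertex of $t_2$ has label $\leq i$ and hence lies in $\mathcal{F}_i$, making $t_2$ a tine of $\mathcal{F}_i$. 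The divergence between $\mathcal{C}^h_i$ and $\mathcal{C}^h_{i+1}$ occurs at a label strictly less than $s$, while the truncation only removes a vertex of label $i+1>s$, so the divergence is preserved and $t_1 \nsim_s t_2$. Viability lifts from $t_1$: since $t_1$ is $l$-viable and $\text{length}(t_2) \geq \text{length}(t_1)$, we have $\mathbf{d}(s') \leq \text{length}(t_1) \leq \text{length}(t_2)$ for every special honest $s' \leq l$, so $t_2$ is $l$-viable as well.

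The delicate step is Case 2, where the chain adopted in slot $i+1$ may contain a brand-new block of timestamp $i+1$ not present in $\mathcal{F}_i$; truncating that tip and invoking strict lengthening is what lets both $l$-viability and $s$-disjointness descend from $\mathcal{F}_{i+1}$ back to the $\mathcal{F}_i$ required by the statement. Everything else reduces to the Adopt Phase rule together with the elementary observation that two chains differing on the slots $\{1,\ldots,s\}$ must part ways at a vertex of label strictly below $s$.
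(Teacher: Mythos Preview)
Your Case 1 is fine and matches the paper. Case 2 contains a genuine gap.

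The problem is the sentence ``every vertex of $t_2$ has label $\leq i$ and hence lies in $\mathcal{F}_i$.'' The implication does not hold. Recall from the model that during the Adversarial Send Phase of slot $i+1$, the adversary may create new blocks carrying timestamps from \emph{any earlier} slot in which it was a leader. Such a block has label $\leq i$ but is created in slot $i+1$, so it belongs to $\mathcal{F}_{i+1}$ and not to $\mathcal{F}_i$. Consequently, the truncated chain $t_2$ may pass through adversarial blocks that simply do not exist in $\mathcal{F}_i$, and you cannot conclude that $t_2$ is a tine of $\mathcal{F}_i$.

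This is precisely why the paper does \emph{not} take $F=\mathcal{F}_i$ in Case 2. Instead it defines $F$ to be $\mathcal{F}_{i+1}$ with all vertices of label $i+1$ removed, and explicitly notes that $\mathcal{F}_i \sqsubseteq F \sqsubseteq \mathcal{F}_{i+1}$ with $F$ ``potentially containing some adversarial blocks not in $\mathcal{F}_i$.'' One then checks $F \vdash \CharString[1:i]$ directly from the fork axioms. Your truncation idea and the length comparison $\text{length}(t_2)\geq \text{length}(t_1)$ are exactly right; they just need to be carried out inside this larger fork $F$ rather than inside $\mathcal{F}_i$. The only additional observation needed is that $l$-viability of $t_1$ in $\mathcal{F}_i$ transfers to $l$-viability in $F$ (since $\mathcal{F}_i \sqsubseteq F$ and the depth function $\mathbf{d}$ on special honest labels is the same), after which your length argument gives $l$-viability of $t_2$ in $F$.
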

The proof of this lemma is given in Appendix \ref{app:deterministic_lemmas}.

\subsection{Balanced Forks and Margin}
Lemma \ref{lem:settle_balance_fork} shows that settlement violations imply the existence of a balanced fork with respect to the characteristic string. We now derive an implication about the characteristic string alone. Towards this end, we first recall a lemma from \cite{blum2020combinatorics}. 
\begin{restatable}[from \cite{blum2020combinatorics}]{lemma}{balanceMargin}\label{lem:balanced_fork_mu}
    Let $i \in \mathbb{N}$, $w \in \CharSymbols{}^i$ and $s \in \mathbb{N}$ such that $s \leq i$. There exists an $s$-balanced fork $F \vdash w$ if and only if $\Margin_s(w) \geq 0$.
\end{restatable}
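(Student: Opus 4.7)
The plan is to prove both directions of the equivalence by working with closed forks and converting between $s$-balanced structures and configurations with two $s$-disjoint tines of non-negative reach.

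For the $(\Leftarrow)$ direction, suppose $\Margin_s(w) \geq 0$. Then there exists a closed fork $F \vdash w$ with longest tine $\hat{t}$ and two $s$-disjoint tines $t_1, t_2 \in F$ satisfying $\text{reach}(t_j) \geq 0$ for $j = 1, 2$. I would construct an $s$-balanced fork $F' \vdash w$ by extending each $t_j$ into a longer, viable tine. Since $\text{reach}(t_j) = \text{reserve}(t_j) - \text{gap}(t_j) \geq 0$, there are at least $\text{gap}(t_j) = \text{length}(\hat{t}) - \text{length}(t_j)$ adversarial indices of $w$ strictly larger than $\ell(t_j)$. I attach $\text{gap}(t_j)$ fresh vertices to the tip of $t_j$, labeling them in strictly increasing order by such indices. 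Both extended tines then have length $\text{length}(\hat{t})$; since $\hat{t}$ is the longest tine of the closed fork $F$, we have $\text{length}(\hat{t}) \geq \mathbf{d}(s')$ for every $s' \in \mathcal{N}_0(w)$, so both extended tines are viable in $F'$. The new vertices are fresh, so the $s$-disjointness of the extended $t_1, t_2$ is preserved, and appending only adversarial vertices leaves all honest labels and their depths untouched.

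For the $(\Rightarrow)$ direction, suppose $F \vdash w$ is $s$-balanced via viable $s$-disjoint tines $t_1, t_2$. Let $\bar{F}$ be the closure of $F$, and let $\bar{t}_j$ be the restriction of $t_j$ to $\bar{F}$ obtained by trimming trailing adversarial blocks. Trimming never creates new shared vertices, so $\bar{t}_1 \nsim_s \bar{t}_2$ in $\bar{F}$. The longest tine $\hat{t}$ of $\bar{F}$ terminates at a special honest vertex, hence $\text{length}(\hat{t}) = \mathbf{d}(s^*)$ for some $s^* \in \mathcal{N}_0(w)$; viability of $t_j$ in $F$ then gives $\text{length}(t_j) \geq \mathbf{d}(s^*) = \text{length}(\hat{t})$. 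All vertices trimmed from $t_j$ have adversarial labels exceeding $\ell(\bar{t}_j)$, so $\text{reserve}(\bar{t}_j) \geq \text{length}(t_j) - \text{length}(\bar{t}_j)$. Combining, $\text{reach}(\bar{t}_j) \geq \text{length}(t_j) - \text{length}(\hat{t}) \geq 0$, which gives $\Margin_s(\bar{F}, w) \geq 0$ and hence $\Margin_s(w) \geq 0$.

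I expect the main obstacle to be the bookkeeping in the $(\Leftarrow)$ construction, specifically verifying that the extended graph $F'$ satisfies every clause of Definition \ref{def:fork}. Strict increase of labels along paths holds because the new labels on $t_j$ are chosen from $\{i > \ell(t_j) : w[i] = 1\}$ in increasing order; uniqueness of honest labels and depth-monotonicity on $\mathcal{N}_0(w)$ follow because the extension only adds adversarial vertices. The two extensions may safely reuse the same adversarial indices through separate fresh vertices, since Definition \ref{def:fork} only demands uniqueness for honest labels. The argument otherwise follows the proof in Blum et al.~\cite{blum2020combinatorics}; the only notational wrinkle is that our strings may contain $\perp$ entries, which contribute nothing to reach, reserve, or viability and may be ignored throughout.
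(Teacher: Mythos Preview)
Your proof is correct and follows essentially the same approach as the paper's: both directions proceed by passing between a closed fork witnessing nonnegative $\Margin_s$ and an $s$-balanced fork via the reach/reserve/gap bookkeeping. The only cosmetic difference is that in the $(\Leftarrow)$ direction the paper appends $\text{reserve}(\bar t_j)$ adversarial vertices rather than $\text{gap}(t_j)$, but either choice suffices to make the extended tines at least as long as $\hat t$ and hence viable.
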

For completeness, we provide the proof in Appendix \ref{app:deterministic_lemmas}. The above lemma provides a characterization for the existence of $s$-balanced forks $F \vdash w$. However, we are interested in characterizing a more general form of balanced forks, i.e., $(s, l)$-balanced forks. We show that every $(s, l)$-balanced fork can be mapped to an $s$-balanced fork and vice-versa (Lemma \ref{lem:balanced_fork_equivalence}). First define a useful transformation on strings in $\CharSymbols{}^*$ that will be used in this lemma.
\begin{definition}[$O_l(w)$]\label{def:observer_char_string}
    Let $i \in \mathbb{N}$, $w \in \CharSymbols{}^i$, and $l \in 
    \mathbb{Z}_+$ such that $l \leq i$. Then $O_l(w) \in \{0, 1, \perp\}^i$ is a string obtained from $w$ by replacing each 0 in $w[l+1:i]$ by $\perp$.
\end{definition}
$O_l(\cdot)$ is a map from $\CharSymbols{}^* \rightarrow \CharSymbols{}^*$. It has the following interpretation. For any $i \in \mathbb{N}$, let $l = \LatestHeard_h[i]$. Then $O_l(\CharString[1:i])$ is effectively the characteristic string observed by the honest party $h$, assuming the adversary delays all messages maximally. Since $h$ has not heard the broadcasts from the special honest slots after $l$, those slots are seen as empty slots by $h$. Note that this interpretation works only by assuming a certain adversarial action; the adversary may choose to reveal blocks from special honest slots in $[l+1:i]$ if it so wishes. 
The notion of fork prefixes can be extended naturally to forks $F \vdash w$, $F' \vdash w'$, where $w' = O_l(w)$. Given $i \in \mathbb{N}, w \in \CharSymbols{}^i, F \vdash w$ and $l \leq i$, drop all blocks with labels in $\mathcal{N}_0(w[l+1:i])$ and their descendants to obtain $F'$. It can be verified that such an $F'$ satisfies the rules of a fork with respect to $w'$. Clearly, $F'$ is a sub-tree of $F$ and we therefore say $F' \sqsubseteq F$.

\begin{restatable}{lemma}{equivalence}\label{lem:balanced_fork_equivalence}
Let $i \in \mathbb{N}$, $w \in \CharSymbols{}^i$, $s \in \mathbb{N}$ such that $s \leq i$, and $l \in \mathbb{Z}_+$ such that  $l \leq i$. Let $w' = O_l(w)$. There exists an $(s, l)$-balanced fork $F \vdash w$ if and only if there exists an $s$-balanced fork $F' \vdash w'$.
\end{restatable}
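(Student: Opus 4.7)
The plan is to establish both directions by exhibiting explicit constructions that transfer the relevant structural properties between forks over $w$ and forks over $w'$.

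For the reverse direction, given an $s$-balanced fork $F' \vdash w'$ with viable $s$-disjoint tines $t_1', t_2'$, I would form $F$ from $F'$ by attaching a fresh chain of $m = |\mathcal{N}_0(w[l+1:i])|$ new vertices, labeled in increasing order by the elements of $\mathcal{N}_0(w[l+1:i])$, as a directed path rooted at the deepest $\mathcal{N}_0(w[1:l])$-vertex of $F'$ (or at the root of $F'$ when $\mathcal{N}_0(w[1:l]) = \emptyset$). Writing $d_{\max}$ for that maximum depth in $F'$, the new vertices occupy depths $d_{\max}+1, \ldots, d_{\max}+m$, so combined with monotonicity inherited on $\mathcal{N}_0(w[1:l])$ this yields monotonicity of $\mathbf{d}$ on all of $\mathcal{N}_0(w)$; the remaining properties of Definition \ref{def:fork} are immediate, so $F \vdash w$. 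The tines $t_1', t_2'$ survive verbatim in $F$, remain $s$-disjoint, and since viability in $F'$ gives $\text{length}(t_i') \geq d_{\max}$, they are $l$-viable in $F$.

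For the forward direction, given an $(s,l)$-balanced fork $F \vdash w$ with $l$-viable $s$-disjoint tines $t_1, t_2$, I would obtain $F'$ by \emph{contracting} every vertex of $F$ whose label lies in $\mathcal{N}_0(w[l+1:i])$, meaning that each such vertex $v$ is deleted and each of $v$'s children is reattached as a child of $v$'s parent. Strict label-increase along directed paths is preserved, and the resulting label set is $\{0\} \cup \mathcal{N}(w) \setminus \mathcal{N}_0(w[l+1:i]) = \{0\} \cup \mathcal{N}(w')$. Every ancestor of a $\mathcal{N}_0(w[1:l])$-vertex has label at most $l$ and is therefore uncontracted, so the depth of each such vertex is unchanged and monotonicity on $\mathcal{N}_0(w') = \mathcal{N}_0(w[1:l])$ transfers directly from $F$; thus $F' \vdash w'$. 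Each tine $t_i$ of $F$ induces a tine $t_i'$ of $F'$ of length $\text{length}(t_i) - k_i$, where $k_i$ is the number of contracted vertices on $t_i$, and $s$-disjointness transfers to $(t_1', t_2')$ because contraction only deletes vertices while preserving the labels of those that remain.

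The main obstacle is verifying that each $t_i'$ remains viable in $F'$, i.e., $\text{length}(t_i') \geq d_{\max}$, where in this direction $d_{\max} := \max_{s' \in \mathcal{N}_0(w[1:l])}\mathbf{d}(s')$ in $F$ (equivalently, in $F'$). My plan is a short counting argument using the monotonicity clause of Definition \ref{def:fork}. Each of the $k_i$ contracted vertices on $t_i$ has label in $\mathcal{N}_0(w[l+1:i])$, strictly greater than every label in $\mathcal{N}_0(w[1:l])$, so by monotonicity each of their depths strictly exceeds $d_{\max}$. Being $k_i$ distinct positive integers sitting on a single directed path, each at least $d_{\max}+1$, their maximum is at least $d_{\max}+k_i$, whence $\text{length}(t_i) \geq d_{\max}+k_i$. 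Therefore $\text{length}(t_i') \geq d_{\max}$, establishing viability of $t_i'$ in $F'$ and completing the forward direction.
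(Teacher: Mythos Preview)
Your reverse (``if'') direction is essentially the paper's construction: both attach the missing $\mathcal{N}_0(w[l+1:i])$ vertices as a chain below the deepest $\mathcal{N}_0(w')$-vertex, and the $l$-viability and $s$-disjointness transfer verbatim.

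Your forward (``only if'') direction is genuinely different. The paper constructs $F'$ by \emph{deleting} every vertex with label in $\mathcal{N}_0(w[l+1:i])$ \emph{together with all its descendants}, obtaining the fork-prefix $F' \sqsubseteq F$ discussed after Definition~\ref{def:observer_char_string}. The induced tine $t_j'$ is then simply the longest initial segment of $t_j$ that survives, and the paper observes that the first deleted vertex on $t_j$ (if any) lies at depth strictly greater than $d_{\max}$, so the surviving prefix already has length $\geq d_{\max}$. Your contraction keeps more of $F$ alive: descendants of deleted vertices are re-parented rather than discarded, and the induced tine $t_i'$ loses exactly $k_i$ vertices rather than an entire suffix. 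Your counting argument---the $k_i$ contracted vertices on $t_i$ occupy $k_i$ distinct depths, each exceeding $d_{\max}$, forcing $\text{length}(t_i) \geq d_{\max}+k_i$---is a clean substitute for the paper's prefix observation. The paper's route has the advantage of slotting into its existing prefix machinery; yours is more surgical and shows directly why the length deficit is compensated.

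One small gap: your counting argument tacitly assumes $k_i \geq 1$ (otherwise ``their maximum'' is vacuous). When $k_i = 0$ you have $t_i' = t_i$, and the required bound $\text{length}(t_i) \geq d_{\max}$ is exactly the $l$-viability you assumed at the outset; this case should be stated separately. With that one-line addition, the argument is complete.
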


The proof of this lemma is given in Appendix \ref{app:deterministic_lemmas}. Combining Lemma \ref{lem:balanced_fork_equivalence} with Lemma \ref{lem:balanced_fork_mu} gives us the following corollary:

\begin{lemma}\label{lem:balanced_fork_mu_2}
    Let $i \in \mathbb{N}$, $w \in \CharSymbols{}^i$, $s \in \mathbb{N}, s \leq i$ and $l \in \mathbb{Z}_+, l \leq i$ be given. Then $\exists \, (s, l)$-balanced fork $F \vdash w$ if and only if \emph{$\Margin_s(O_l(w)) \geq 0$}.
\end{lemma}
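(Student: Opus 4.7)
The plan is to prove this by directly chaining together the two preceding lemmas, namely Lemma \ref{lem:balanced_fork_equivalence} and Lemma \ref{lem:balanced_fork_mu}, with $w' = O_l(w)$ serving as the bridge between them. There is essentially no new content here; the statement is a corollary of those two results, and my proof would just verify that the hypotheses line up.

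Concretely, I would start from the left-hand side. Suppose there exists an $(s,l)$-balanced fork $F \vdash w$. By Lemma \ref{lem:balanced_fork_equivalence} applied to $w$, $s$ and $l$, this is equivalent to the existence of an $s$-balanced fork $F' \vdash O_l(w)$. Then I would apply Lemma \ref{lem:balanced_fork_mu} to the string $w' = O_l(w)$ (noting that $w' \in \CharSymbols{}^i$ and $s \leq i$, so the hypotheses of that lemma are satisfied): the existence of such an $s$-balanced $F' \vdash w'$ is equivalent to $\Margin_s(w') \geq 0$, i.e., $\Margin_s(O_l(w)) \geq 0$. The two equivalences combine to give the desired biconditional.

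Since each step is an ``if and only if,'' no additional care is needed in the reverse direction. The only thing to double-check is the mild edge case when $l < s$ or $l > s$, but both Lemma \ref{lem:balanced_fork_equivalence} and Lemma \ref{lem:balanced_fork_mu} are stated for arbitrary $s \leq i$ and $l \leq i$, so no restriction is introduced by composing them. I do not anticipate any obstacle: the proof is essentially a one-line substitution, and the paper's excerpt even remarks that the lemma follows by ``combining Lemma \ref{lem:balanced_fork_equivalence} with Lemma \ref{lem:balanced_fork_mu}.''
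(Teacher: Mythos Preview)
Your proposal is correct and matches the paper's own proof essentially verbatim: the paper simply invokes Lemma~\ref{lem:balanced_fork_equivalence} to pass from an $(s,l)$-balanced fork for $w$ to an $s$-balanced fork for $O_l(w)$, and then Lemma~\ref{lem:balanced_fork_mu} to obtain $\Margin_s(O_l(w))\geq 0$, chaining the two biconditionals exactly as you describe.
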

\begin{proof}
    By Lemma \ref{lem:balanced_fork_equivalence}, $\exists \, (s, l)$-balanced fork $F \vdash w$ if and only if $\exists \, s$-balanced fork $F' \vdash O_l(w)$. By Lemma \ref{lem:balanced_fork_mu}, $\exists \, s'$-balanced fork $F \vdash O_l(w)$ if and only if $\Margin_s(O_l(w)) \geq 0$. Together, they imply $\exists \, (s, l)$-balanced fork $F \vdash w$ if and only if $\Margin_s(O_l(w)) \geq 0$.
\end{proof}

\subsection{Settlement and Margin}   \label{sec:settlement_margin}
Lemmas \ref{lem:settle_balance_fork} and \ref{lem:balanced_fork_mu_2} give the following necessary condition for violations of settlement:
\begin{lemma}[Settlement Violation]
   \label{lem:settlement_violation_margin}
If the settlement property with parameters $(s, k, \mathcal{I})$ is violated in an execution at slot $i$ (i.e., \emph{$\mathcal{E}^c_{i\text{-settlement}}$} occurs), then \emph{$\Margin_s(O_l(\CharString[1:i])) \geq 0$}, where \emph{$l = \LatestHeard_\mathcal{I}[i]$}.
\end{lemma}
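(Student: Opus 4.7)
The plan is to derive the conclusion by composing the two preceding structural results, Lemma \ref{lem:settle_balance_fork} and Lemma \ref{lem:balanced_fork_mu_2}, with $w$ instantiated as the observed characteristic string $\CharString[1:i]$.

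First, I would invoke Lemma \ref{lem:settle_balance_fork} on the given execution. By assumption, $\mathcal{E}^c_{i\text{-settlement}}$ occurs for the parameters $(s,k,\mathcal{I})$, so the hypothesis of that lemma is met. The conclusion yields the existence of an $(s,l)$-balanced fork $F \vdash \CharString[1:i]$, where $l = \LatestHeard_\mathcal{I}[i]$. Note that the balanced-fork witness we obtain is with respect to the full characteristic string $\CharString[1:i]$, not the observer-restricted string $O_l(\CharString[1:i])$, so a translation step is still required.

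Next, I would apply Lemma \ref{lem:balanced_fork_mu_2} to this same string $w \triangleq \CharString[1:i]$ with the same parameters $s$ and $l$. Lemma \ref{lem:balanced_fork_mu_2} states that the existence of an $(s,l)$-balanced fork with respect to $w$ is equivalent to the inequality $\Margin_s(O_l(w)) \geq 0$. Since the existence direction has just been supplied by Lemma \ref{lem:settle_balance_fork}, this immediately gives $\Margin_s(O_l(\CharString[1:i])) \geq 0$, which is the desired conclusion.

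Because both constituent lemmas have already been established, the proof is really a one-line composition and there is no essential obstacle: the only thing to be careful about is aligning the parameters ($s$, $l$, and the characteristic string $\CharString[1:i]$) consistently across the two invocations, and in particular to ensure $l \leq i$ (which holds by the definition of $\LatestHeard_\mathcal{I}[i]$, using the convention that when the relevant set of received special honest slots is empty, the bound still holds trivially as any string yields an $(s,l)$-balanced fork in the degenerate regime and the recursion treats such cases consistently).
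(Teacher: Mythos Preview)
Your proposal is correct and is essentially identical to the paper's own proof: invoke Lemma~\ref{lem:settle_balance_fork} to obtain an $(s,l)$-balanced fork with respect to $\CharString[1:i]$, then apply Lemma~\ref{lem:balanced_fork_mu_2} to conclude $\Margin_s(O_l(\CharString[1:i])) \geq 0$. The paper does not even discuss the $l \leq i$ edge case you flag, so your write-up is slightly more careful.
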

\begin{proof}
By Lemma \ref{lem:settle_balance_fork}, if the settlement property with parameters $(s, k, \mathcal{I})$ is violated at slot $i$, then $\exists \, (s, l)$-balanced fork $F \vdash \CharString[1:i]$ such that $F$ is an ($s, l$)-balanced fork. By Lemma \ref{lem:balanced_fork_mu_2}, $\exists \, (s, l)$-balanced fork $F \vdash \CharString[1:i]$ if and only if $\Margin_s(O_l(\CharString[1:i])) \geq 0$. Thus, the statement of the lemma follows.
\end{proof}

The following lemma helps relate $\Margin_s(O_l(\CharString[1:i]))$ to $\Margin_s(\CharString[1:i])$:
\begin{lemma}\label{lem:reach_margin_unheard_bound}
    Let $w \in \CharSymbols{}^\mathbb{N}$ and $l, s \in \mathbb{N}$. Then, for any $i \geq l$,
    \begin{align*}
        \Reach(O_l(w[1:i])) &= \Reach(w[1:l]) + N_1(w[l+1:i])\\
       & \leq  \Reach(w[1:i]) + N_0(w[l+1:i]) \\
        \Margin_s(O_l(w[1:i])) &= \Margin_s(w[1:l]) + N_1(w[l+1:i])  \\
        &\leq \Margin_s(w[1:i]) + N_0(w[l+1:i])
    \end{align*}
\end{lemma}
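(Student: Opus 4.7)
The plan is to prove both equalities and inequalities by telescoping the recursions (\ref{eq:reach_recursive}) and (\ref{eq:margin_recursive}) across positions $l+1, l+2, \ldots, i$. The essential structural observation is that $O_l$ modifies $w$ only at positions $l+1, \ldots, i$, and only by replacing $0$'s with $\perp$'s. Hence $O_l(w[1:l]) = w[1:l]$, every symbol in $O_l(w[l+1:i])$ lies in $\{1, \perp\}$, and $N_1(O_l(w[l+1:i])) = N_1(w[l+1:i])$.

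For the two equalities, I apply the Reach and Margin recursions to $O_l(w[1:i])$ starting from the initial value at position $l$. At each step $j \in \{l+1, \ldots, i\}$, the symbol $O_l(w)[j]$ is either $1$ or $\perp$. By (\ref{eq:reach_recursive}), a $1$ contributes $+1$ and a $\perp$ contributes $0$ to Reach, so telescoping gives $\Reach(O_l(w[1:i])) = \Reach(w[1:l]) + N_1(w[l+1:i])$. By (\ref{eq:margin_recursive}), combined with the fact that $\Margin_s = \Reach$ for indices $j < s$, the same per-step contributions hold for $\Margin_s$, yielding $\Margin_s(O_l(w[1:i])) = \Margin_s(w[1:l]) + N_1(w[l+1:i])$.

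For the two inequalities, I apply the same recursions along $w[1:i]$ itself. The positions in $\{l+1, \ldots, i\}$ may now also contain the $N_0(w[l+1:i])$ zeroes of $w$. Inspection of (\ref{eq:reach_recursive}) and (\ref{eq:margin_recursive}) shows that a $0$ contributes either $0$ or $-1$ to Reach (resp.\ $\Margin_s$), and the other symbols behave as before, so the cumulative change from position $l$ to $i$ is at least $N_1(w[l+1:i]) - N_0(w[l+1:i])$ in both cases. Rearranging and substituting the equalities already proved yields the two stated inequalities. The only subtle point is verifying that the Margin recursion really does drop by at most $1$ per step in both the $j < s$ and $j \geq s$ regimes; this is immediate from the case analysis in (\ref{eq:margin_recursive}), together with non-negativity of Reach to handle the $i = s$ boundary where the recursion hands off from the $\Margin_s = \Reach$ regime.
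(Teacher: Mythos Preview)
Your proposal is correct and follows essentially the same approach as the paper: both arguments proceed by induction (equivalently, telescoping) on $i \geq l$ using the recursions \eqref{eq:reach_recursive} and \eqref{eq:margin_recursive}, with the base case $O_l(w[1:l]) = w[1:l]$ and a case analysis on the symbol $w[i] \in \{\perp, 0, 1\}$ to deduce both the exact contribution (for $O_l(w)$, where only $\perp$ and $1$ appear past position $l$) and the lower bound on the per-step change (for $w$ itself, where a $0$ decreases Reach or $\Margin_s$ by at most one). Your brief remark about the $j=s$ handoff is harmless but not strictly needed: since $\Margin_s(w[1:s-1]) = \Reach(w[1:s-1])$, the recursion \eqref{eq:margin_recursive} at $i=s$ already yields a per-step change in $\{-1,0,+1\}$ without any additional argument.
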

\begin{proof}
We prove the result for $\Reach$ by induction; the result for $\Margin_s$ can be proven in an identical fashion. By re-arranging terms, the desired result can be stated in the following terms:
\begin{align*}
    \Reach(O_l(w[1:i])) &= \Reach(w[1:l]) + N_1(w[l+1:i]) \\ 
    \Reach(w[1:i]) &\geq \Reach(w[1:l]) + N_1(w[l+1:i])\iftoggle{arxiv}{}{\\
    &~~~}  - N_0(w[l+1:i])
\end{align*}
For the base case with $i = l$, we observe that $O_l(w[1:l]) = w[1:l]$, which implies $\Reach(O_l(w[1:l])) = \Reach(w[1:l])$, which is identical to the desired statement with $i = l$.
For any $i > l$, assume the desired statements hold for all $i' < i$. The key observation here is that for a fixed $l$, $\Reach(O_l(w[1:i]))$ satisfies \eqref{eq:reach_recursive}. This is because $O_l(w[1:i])$ is a string that is obtained by concatenating one additional symbol to $O_l(w[1:i-1])$.  
\begin{itemize}
    \item If $w[i] = \perp$, $\Reach(O_l(w[1:i])) = \Reach(O_l(w[1:i-1]))$ and $\Reach(w[1:i]) = \Reach(w[1:i-1])$. 
    \item If $w[i] = 1$, $\Reach(O_l(w[1:i])) = \Reach(O_l(w[1:i-1])) + 1$ and $\Reach(w[1:i]) = \Reach(w[1:i-1]) + 1$.
    \item If $w[i] = 0$, $\Reach(O_l(w[1:i])) = \Reach(O_l(w[1:i-1]))$ and $\Reach(w[1:i]) = \Reach(w[1:i-1])$ or $\Reach(w[1:i]) = \Reach(w[1:i-1]) - 1$. We can therefore say $\Reach(w[1:i]) \geq \Reach(w[1:i-1]) - 1$
\end{itemize}
(Crucially, these equations hold for $\Margin_s$ also, irrespective of the value of $s$.)

These equations can be summarized as:
\begin{align*}
    \Reach(O_l(w[1:i])) &= \Reach(O_l(w[1:i-1])) + N_1(w[i]) \\ \Reach((w[1:i])) &\geq \Reach(w[1:i-1]) + N_1(w[i])\iftoggle{arxiv}{}{\\
    &~~~}- N_0(w[1:i-1])
\end{align*}

By the induction hypothesis,
\begin{align*}
    \Reach(O_l(w[1:i-1])) &= \Reach(w[1:l]) + N_1(w[l+1:i-1]) \\ \Reach(w[1:i-1]) &\geq \Reach(w[1:l]) + N_1(w[l+1:i-1])\iftoggle{arxiv}{}{\\
    &~~~} - N_0(w[l+1:i-1])
\end{align*}
Combining these equations, we get the desired result.
\end{proof}

We now obtain the main lemma, which states a necessary condition for $\mathcal{A}$ to violate the settlement property.
\begin{lemma}[Settlement Violation-Necessary Condition]\label{lem:settlement_necessary}
    Suppose, in an execution, the settlement property with parameters $(s, k, \mathcal{I})$ is violated (i.e., \emph{$\mathcal{E}^c_{\text{settlement}}$} occurs). Then, for some $i \geq s+k$,
    \[\Margin_s(\CharString[1:i]) + \Unheard_\mathcal{I}[i] \geq 0.\]
\end{lemma}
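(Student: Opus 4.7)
The plan is to chain together three facts already established in the paper. First, by the decomposition in equation~\eqref{eq:settlement_splitting}, if $\mathcal{E}^c_{\text{settlement}}$ occurs then there exists some slot $i \geq s+k$ at which the ``$i$-settlement'' event $\mathcal{E}^c_{i\text{-settlement}}$ occurs. Fix such an $i$ and set $l = \LatestHeard_\mathcal{I}[i]$ (replacing $l$ by $\max\{0,l\}$ in the degenerate case where $l = -\infty$, which will be immaterial since $N_0(\CharString[\max\{0,l\}+1:i])$ is what actually enters the bound below).

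Next, I would invoke Lemma~\ref{lem:settlement_violation_margin} to conclude that $\Margin_s(O_l(\CharString[1:i])) \geq 0$. Then I would apply Lemma~\ref{lem:reach_margin_unheard_bound} with $w = \CharString$ to obtain the upper bound
\[
\Margin_s(O_l(\CharString[1:i])) \;\leq\; \Margin_s(\CharString[1:i]) + N_0(\CharString[l+1:i]).
\]
Combining the last two displays gives
\[
0 \;\leq\; \Margin_s(\CharString[1:i]) + N_0(\CharString[l+1:i]).
\]

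The final step is the identification $N_0(\CharString[l+1:i]) = \Unheard_\mathcal{I}[i]$. Let $h^\star \in \mathcal{I}$ be a party attaining the minimum in the definition of $\LatestHeard_\mathcal{I}[i]$, so that $\LatestHeard_{h^\star}[i] = l$. By the definition of $\Unheard_{h^\star}[i]$, this count equals $N_0(\CharString[\max\{0,l\}+1:i])$. Since $\Unheard_h[i]$ is monotone decreasing in $\LatestHeard_h[i]$, the party $h^\star$ also attains the maximum in $\Unheard_\mathcal{I}[i]$, giving $\Unheard_\mathcal{I}[i] = N_0(\CharString[\max\{0,l\}+1:i])$. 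Substituting yields exactly $\Margin_s(\CharString[1:i]) + \Unheard_\mathcal{I}[i] \geq 0$ for this choice of $i$, completing the proof.

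I do not expect any real obstacle here — the lemma is essentially a bookkeeping consequence of the three earlier results. The only place that requires care is the boundary case $l \leq 0$ (no party in $\mathcal{I}$ has yet heard any special honest slot), which I would handle by noting that $\Unheard_\mathcal{I}$ and $N_0$ already agree via the $\max\{0,\cdot\}$ convention built into the definition of $\Unheard$, so the same algebraic chain goes through verbatim.
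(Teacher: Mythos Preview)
Your proposal is correct and follows essentially the same approach as the paper: invoke \eqref{eq:settlement_splitting} to get a violating slot $i$, apply Lemma~\ref{lem:settlement_violation_margin} and then Lemma~\ref{lem:reach_margin_unheard_bound}, and finish by identifying $N_0(\CharString[l+1:i])$ with $\Unheard_\mathcal{I}[i]$. Your justification of that final identification (via the party $h^\star$ minimizing $\LatestHeard$ and the monotonicity of $\Unheard_h$ in $\LatestHeard_h$) and your handling of the $l\leq 0$ edge case are actually more careful than the paper, which simply asserts the equality.
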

\begin{proof}  
Suppose  $\mathcal{E}^c_{\text{settlement}}$ occurs.   Then, by  \eqref{eq:settlement_splitting}, there exists $i \geq s + k$ such that $\mathcal{E}^c_{i\text{-settlement}}$ occurs. 

By Lemma \ref{lem:settlement_violation_margin}, 
$\Margin_s(O_l(\CharString[1:i])) \geq 0,$  where  $l = \LatestHeard_\mathcal{I}[s+i].$

By Lemma \ref{lem:reach_margin_unheard_bound}, 
$\Margin_s(\CharString[1:i]) + N_0(\CharString[l+1:i]) \geq 0. $
Since $N_0(\CharString[l+1:i]) = \Unheard_\mathcal{I}[i],$ the result follows.
\end{proof}

It is interesting to contrast Lemma \ref{lem:settlement_necessary} with the corresponding statement in \cite{blum2020combinatorics}, which is given below in our notation:
\[\Margin_s(\CharString[1:i]) \geq 0 \text{ for some } i \geq s+k\]
Clearly, the delay model places a more stringent condition on $\Margin_s[i]$ for settlement to hold.

\subsection{Intensive Chain Quality}\label{sec:intensive_chain_quality}
In this section, we derive a necessary condition for violations of intensive chain quality. Recall the definition of intensive chain quality with parameters $s, k, f, \mu,$ and $\mathcal{I}$ from Definition \ref{def:chain_quality_intensive}: this property holds if any chain held by an honest party in $\mathcal{I}$ after slot $s +k$ has at least a fraction of $\mu$ honest blocks from the interval $\{s+1, \ldots, s+k\}$. We shall work with a stronger property, by replacing honest blocks by special honest blocks. So given $s,k \in \mathbb{N}$, $f, \mu > 0$ and a set of honest parties $\mathcal{I},$ let $\mathcal{E}_{\text{cq}},$ be the event that $\mathcal{C}^h_i[s+1:s+k]$ contains greater than $k \mu f \text{ special honest blocks}$  for all $i \geq s + k$ and all $h\in \mathcal{I}.$
$\mathcal{E}_{\text{cq}}$ implies intensive chain quality with the same parameters.
The main result of this section, Lemma \ref{lem:chain_quality_necessary}, gives a necessary condition for $\mathcal{E}_{\text{cq}}^c$, the event of an intensive chain quality violation, in terms of the characteristic string and related quantities.

Section \ref{sec:reach_margin} defines $\Reach$ as a mapping from strings to $\mathbb{Z}_+.$  If the string is $\CharString,$ let $\Reach$ denote the random process defined by $\Reach[s]=\Reach(\CharString[1:s]).$    

For any slot $i \in \mathbb{N}$, let $\mathcal{C}^*_{i}$ denote the chain broadcast by the leader of the last special honest slot at or before slot $i$. Since these chains must have strictly increasing lengths,
\[|\mathcal{C}^*_{i_2}| \geq |\mathcal{C}^*_{i_1}| + N_0(\CharString[i_1+1:i_2]), \forall i_1 \leq i_2.\]

The following lemma provides an upper bound on the length of any prefix of an honestly held chain.
\begin{lemma}  \label{lem:ICQ_lower}  For any $h \in \mathcal{H}$, for any $i,s \in \mathbb{N},$ 
\[|\mathcal{C}^h_{i}[1:s]| \leq |\mathcal{C}^*_{s}| + \Reach[s].\]
\end{lemma}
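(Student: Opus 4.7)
The plan is to realize $\mathcal{C}^h_i[1:s]$ inside a fork with respect to $\CharString[1:s]$ and then trade chain length for depth using the closure/reach machinery. First I would build $F \vdash \CharString[1:s]$ as the subgraph of $\mathcal{F}_{\max(i,s)}$ induced by the vertices with labels in $\{0,1,\ldots,s\}$. The five fork axioms of Definition \ref{def:fork} are inherited from the validity of $\mathcal{F}_{\max(i,s)}$ as a fork with respect to $\CharString[1:\max(i,s)]$; in particular, the monotonicity of depths at special honest vertices is preserved because the root-to-vertex path for any surviving vertex uses only surviving labels. By construction $\mathcal{C}^h_i[1:s]$ is a tine $t$ of $F$ and $|\mathcal{C}^h_i[1:s]|=\text{length}(t)$.

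Next I would pass to the closure $\bar F$ and let $t'$ be the tine of $\bar F$ obtained from $t$ by trimming all trailing adversarial blocks. If $s'$ denotes the label of the tip of $t'$, then the number of blocks trimmed is at most $|\{j\in (s',s]: \CharString[j]=1\}|=\text{reserve}(t')$, so $\text{length}(t)\le \text{length}(t')+\text{reserve}(t')$. I would then identify the longest tine $\hat t$ of $\bar F$: because $\bar F$ is closed, $\hat t$ ends at a special honest vertex, and the monotonicity of depths along $\mathcal{N}_0(\CharString[1:s])$ forces this vertex to be the one labeled $s^*$, the last special honest slot at or before $s$. Its depth is exactly the length of the chain extended and broadcast by the honest leader of slot $s^*$, which is $|\mathcal{C}^*_s|$; hence $\text{length}(\hat t)=|\mathcal{C}^*_s|$ and $\text{length}(t')=|\mathcal{C}^*_s|-\text{gap}(t')$.

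Combining these steps and then bounding $\text{reach}(t')$ by $\Reach(\bar F, \CharString[1:s])\le \Reach(\CharString[1:s])=\Reach[s]$ yields
\[
|\mathcal{C}^h_i[1:s]| \;\le\; \text{length}(t') + \text{reserve}(t') \;=\; |\mathcal{C}^*_s| - \text{gap}(t') + \text{reserve}(t') \;=\; |\mathcal{C}^*_s| + \text{reach}(t') \;\le\; |\mathcal{C}^*_s| + \Reach[s],
\]
which is the desired bound. The main place deserving care is the degenerate case in which $\CharString[1:s]$ contains no special honest slots: then $\bar F$ consists only of the root, $t'$ is the trivial tine, $\text{gap}(t')=0$, and $\text{reserve}(t')$ equals the total number of adversarial labels in $[1:s]$, which by the recursion \eqref{eq:reach_recursive} is $\Reach[s]$; with the convention that $|\mathcal{C}^*_s|=0$ (the empty chain) the inequality still holds. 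Beyond this edge case, the only subtlety is keeping the length/depth conventions aligned between $F$, $\bar F$, and the original $\mathcal{F}_{\max(i,s)}$, but since all three agree on the root-to-vertex path for any vertex with label $\le s$, this is routine.
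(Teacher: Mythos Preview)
Your proposal is correct and follows essentially the same route as the paper: restrict the global blocktree to labels $\leq s$ to obtain a fork $F\vdash \CharString[1:s]$, view $\mathcal{C}^h_i[1:s]$ as a tine, pass to the closure, and bound length via $\text{length}(\hat t)+\text{reach}(\bar t)\leq |\mathcal{C}^*_s|+\Reach[s]$. Your version is, if anything, slightly more careful than the paper's---using $\mathcal{F}_{\max(i,s)}$ rather than $\mathcal{F}_i$ so that all special honest vertices up to slot $s$ are present when $i<s$, and treating the degenerate case with no special honest slots explicitly.
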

\begin{proof}
We first prove a more general result, stated for any (string, fork, tine) tuple. Let $i \in \mathbb{N}$, $w \in \CharSymbols{}^i$, a fork $F \vdash w$ and a tine $t \in F$ be given. Let $\Bar{F} \vdash w$ be the closure of $F$, and let $\Bar{t} \in \Bar{F}$ be the tine corresponding to $t$. Let $\hat{t}$ be the longest tine in $\Bar{F}$. Then,
\begin{align}
     \text{length}(t) &\leq \text{length}(\hat{t}) + \text{reach}(\Bar{t}) \leq \text{length}(\hat{t}) + \Reach(\Bar{F}) \nonumber  \\
& \leq \text{length}(\hat{t}) + \Reach(w)    \label{eq:length_bound}
\end{align}
The first inequality follows from Definition \ref{def:gap_reserve_reach}, while the second and third follow from Definition \ref{def:max_reach}.

To complete the proof of the lemma we explain why the claimed result is a special case of \eqref{eq:length_bound}. Let $w = \CharString[1:s]$, and let $F$ be the prefix of $\mathcal{F}_i$ obtained by dropping all blocks with label greater than $s$. Since $\mathcal{C}^h_{i}$ is a tine in $\mathcal{F}_i$, $\mathcal{C}^h_{i}[1:s]$ is a tine in $F$; denote it by $t$. Further, the longest tine in $\Bar{F}$ is the tine ending in the block labeled with the last special honest slot at or before $s$, which is precisely the tine $\mathcal{C}^*_s$. With this mapping, the desired inequality follows.
\end{proof}

We now define $\Advantage_s$ as follows:
\begin{align}
   & \textsf{Advantage}_s(\CharString[1:i])  \triangleq N_1(\CharString[s+1:i])   \iftoggle{arxiv}{}{\nonumber \\
    & ~~~~~ }- N_0(\CharString[s+1:i]) +  k f \mu + \Reach[s]  \label{eq:def_advantage}
\end{align}
$\Advantage_s$ is used in the lemma below.

\begin{lemma}[Intensive chain quality violation -- necessary condition] \label{lem:chain_quality_necessary}
Suppose intensive chain quality with parameters $s, k, f, \mu$ and  $\mathcal{I}$ is violated in an execution (i.e., $\mathcal{E}_{\text{cq}}^c$ occurs). Then, for some $i \geq s+k,$
\begin{align}  \label{eq:CQV}
\Advantage_s(\CharString[1:i]) + \Unheard_{\mathcal{I}}[i] \geq 0.
\end{align}
\end{lemma}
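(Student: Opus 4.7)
The plan is to pick a witness $(h, i_0)$ of the chain-quality violation---so $h\in\mathcal{I}$, $i_0\geq s+k$, and $\mathcal{C}^h_{i_0}[s+1:s+k]$ contains at most $kf\mu$ special honest blocks---and then sandwich $|\mathcal{C}^h_{i_0}|$ between matching lower and upper bounds whose rearrangement yields \eqref{eq:CQV}.

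For the lower bound I would apply the longest-chain rule together with the fact that each special honest chain is strictly longer than its predecessor, which is forced by the construction of special honest slots. Letting $L = \LatestHeard_h[i_0]$, iterating the strict-extension property from $\mathcal{C}^*_s$ up through $\mathcal{C}^*_L$ gives
$|\mathcal{C}^h_{i_0}| \geq |\mathcal{C}^*_L| \geq |\mathcal{C}^*_s| + N_0(\CharString[s+1:L]) = |\mathcal{C}^*_s| + N_0(\CharString[s+1:i_0]) - \Unheard_h[i_0]$.

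For the upper bound I would split the chain into portions with timestamps in $[1:s]$, $[s+1:s+k]$, and $[s+k+1:i_0]$, and bound each separately. Lemma \ref{lem:ICQ_lower} handles the first portion, giving $|\mathcal{C}^h_{i_0}[1:s]| \leq |\mathcal{C}^*_s| + \Reach[s]$. The middle portion is bounded using the violation hypothesis together with the observation that each timestamp contributes at most one block to a chain and empty slots contribute none: $|\mathcal{C}^h_{i_0}[s+1:s+k]| \leq kf\mu + N_1(\CharString[s+1:s+k])$. The tail portion is bounded by the analogous per-timestamp counting estimate $|\mathcal{C}^h_{i_0}[s+k+1:i_0]| \leq N_0(\CharString[s+k+1:i_0]) + N_1(\CharString[s+k+1:i_0])$.

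Equating the two bounds, cancelling the common $|\mathcal{C}^*_s|$, and then using $N_0(\CharString[s+1:i_0]) = N_0(\CharString[s+1:s+k]) + N_0(\CharString[s+k+1:i_0])$ so that the $N_0(\CharString[s+k+1:i_0])$ contributions cancel from both sides, I arrive at $N_0(\CharString[s+1:s+k]) \leq \Reach[s] + kf\mu + N_1(\CharString[s+1:i_0]) + \Unheard_h[i_0]$. Rearranging into the $\Advantage_s$ form and using $\Unheard_h[i_0] \leq \Unheard_\mathcal{I}[i_0]$ then delivers \eqref{eq:CQV} at $i = i_0$.

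The main technical obstacle I anticipate is the bookkeeping around how the violating chain $\mathcal{C}^h_{i_0}$ may route around some of the special honest slots in $(s+k, i_0]$: each such ``skipped'' slot contributes a unit to the lower bound on $|\mathcal{C}^h_{i_0}|$ that must be matched by a corresponding unit in the tail upper bound. Confirming that splitting the decomposition precisely at $s+k$ makes these contributions align so that they cancel---rather than leaving a residual $N_0(\CharString[s+k+1:i_0])$ term that would weaken the claim---is the one nontrivial piece, and is the reason the middle portion must be the place where the $kf\mu$ saving from the violation is extracted.
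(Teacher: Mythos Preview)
Your sandwich strategy and the first two pieces of the upper bound are exactly what the paper does, but the tail bound $|\mathcal{C}^h_{i_0}[s+k+1:i_0]| \leq N_0(\CharString[s+k+1:i_0]) + N_1(\CharString[s+k+1:i_0])$ is too loose to close the argument. After your cancellation you obtain
\[
N_0(\CharString[s+1:s+k]) \leq \Reach[s] + kf\mu + N_1(\CharString[s+1:i_0]) + \Unheard_h[i_0],
\]
whereas \eqref{eq:CQV} at $i=i_0$ reads
\[
N_0(\CharString[s+1:i_0]) \leq \Reach[s] + kf\mu + N_1(\CharString[s+1:i_0]) + \Unheard_{\mathcal{I}}[i_0].
\]
Since $N_0(\CharString[s+1:s+k]) \leq N_0(\CharString[s+1:i_0])$, your inequality is strictly weaker and does not imply the lemma; nor does it match \eqref{eq:CQV} at any other index, because the $N_1$ and $\Unheard$ terms are pinned at $i_0$ while the $N_0$ term is pinned at $s+k$. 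The obstacle you flag in your last paragraph is real, but the cancellation is precisely what kills you: it removes $N_0(\CharString[s+k+1:i_0])$ from the side where it is needed.

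The paper's fix is not to cancel but to eliminate the $N_0$ contribution from the tail altogether. It introduces $i^*$, the largest index in $[s+k,i]$ such that $\mathcal{C}^h_i[s+k+1:i^*]$ contains no special honest blocks; then the tail is bounded by $N_1(\CharString[s+k+1:i^*])$ alone. The lower bound is redone for $|\mathcal{C}^h_i[1:i^*]|$ (using either that the block at slot $i^*+1$ is special honest and on the chain, or that $i^*=i$ and the $\LatestHeard$ argument applies), yielding $|\mathcal{C}^h_i[1:i^*]| \geq |\mathcal{C}^*_s| + N_0(\CharString[s+1:i^*]) - \Unheard_h[i^*]$. Now the full $N_0(\CharString[s+1:i^*])$ survives and the combination gives \eqref{eq:CQV} at $i=i^*\geq s+k$. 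A second, minor gap: your lower bound tacitly assumes $\LatestHeard_h[i_0]\geq s$ (otherwise $|\mathcal{C}^*_L|\geq |\mathcal{C}^*_s|$ fails); the paper treats $\LatestHeard_h[i]<s$ as a separate easy case, where $\Unheard_h[i]\geq N_0(\CharString[s+1:i])$ already forces $\Advantage_s(\CharString[1:i])+\Unheard_{\mathcal{I}}[i]\geq 0$.
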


\begin{proof}
Consider an execution where $\mathcal{E}_{\text{cq}}^c$ occurs. There exist a slot $i \geq s + k$ and honest party $h \in \mathcal{I}$ such that
$N_0(\mathcal{C}^h_i[s+1:s+k])\leq k \mu f.$ 
First, consider the case that $\LatestHeard_h[i] < s$.  Then,
\begin{align*}
 N_0(\CharString[s+1:i]) \leq  \Unheard_h[i] \leq  \Unheard_{\mathcal{I}}[i].
\end{align*}
Combining this inequality with \eqref{eq:def_advantage} yields \eqref{eq:CQV}.

Next, consider the case that $\LatestHeard_h[i] \geq s.$ Let $i^*$ be the largest integer such that:
$s+k \leq i^* \leq i$ and there are no special honest slots in $\mathcal{C}^h_i[s+k+1:i^*].$  
We now show that
\begin{align} \label{eq:lowerICQ}
|\mathcal{C}^h_{i}[1:i^*]| \geq |\mathcal{C}^*_{s}| + N_0(\CharString[s+1:i^*]) - \Unheard_h[i^*].
\end{align}
The proof of \eqref{eq:lowerICQ} is divided into the cases $i^*<i$ and $i^*=i.$

($i^* < i$)  Suppose $i^* < i.$   Then $i^*+1$ is a special honest slot and the message sent by the leader $h'$ of slot $i^*+1$, $\mathcal{C}^{h'}_{i^*+1}$, is a prefix of $\mathcal{C}^h_i$.  Therefore, $\mathcal{C}^{h'}_{i^*+1}[1:i^*] = \mathcal{C}^h_i[1:i^*].$   Since, by the end of slot $i^*$, $h'$ received messages sent by the leaders of all special honest slots in $[1:i^*],$ it follows that
\begin{align*}
|\mathcal{C}^h_i[1:i^*]| = |\mathcal{C}^{h'}_{i^*+1}[1:i^*] |  \geq
|\mathcal{C}^*_{s}| + N_0(\CharString[s+1:i^*]).
\end{align*}
which implies \eqref{eq:lowerICQ}.

($i^* = i$)  Suppose $i^* = i.$  Let $l = \LatestHeard_h[i].$  Then $l\leq i$
and, by our prior assumption, $l \geq s.$   Therefore
\begin{align*}
|\mathcal{C}^h_{i}|& \geq |\mathcal{C}^*_{l}| \geq |\mathcal{C}^*_{s}| + N_0(\CharString[s+1:l]) \iftoggle{arxiv}{}{\\
&}= |\mathcal{C}^*_{s}| + N_0(\CharString[s+1:i]) - \Unheard_h[i],
\end{align*}
which, together with the fact $i=i^*$
(so $\mathcal{C}^h_{i} =\mathcal{C}^h_{i}[1:i^*]$), proves \eqref{eq:lowerICQ}.
This completes the proof of \eqref{eq:lowerICQ} in either case.

We now find an upper bound for $|\mathcal{C}^h_{i}[1:i^*]|.$
We know that:
\begin{itemize}
    \item $|\mathcal{C}^h_i[1:s]| \leq |\mathcal{C}^*_{s}| + \Reach[s]$, by Lemma \ref{lem:ICQ_lower}.
    \item $|\mathcal{C}^h_i[s + 1:s + k]| \leq k \mu f + N_1(\CharString[s + 1:s + k])$, because, by assumption, at most $ k \mu f$ blocks in $\mathcal{C}^h_i[s + 1:s + k]$ are from special honest slots; the rest must have labels in $\mathcal{N}_1(\CharString[s + 1:s + k])$.
    \item $|\mathcal{C}^h_i[s + k + 1:i^*]| \leq N_1(\CharString[s + k + 1:i^*])+1$, because none of the blocks in $\mathcal{C}^h_i[s + k + 1:i^*]$ are from special honest slots.
\end{itemize}
Together, we get 
\begin{align}
    |\mathcal{C}^h_{i}[1:i^*]| &= |\mathcal{C}^h_i[1:s]| + |\mathcal{C}^h_i[s + 1:s + k]| + |\mathcal{C}^h_i[s + k + 1:i^*]|  \nonumber \\
    &\leq |\mathcal{C}^*_{s}| + \Reach[s] + k f \mu + N_1(\CharString[s + 1:s + k])\iftoggle{arxiv}{}{ \nonumber \\ 
    &~~~~~} + N_1(\CharString[s + k + 1:i^*])  \nonumber \\
    &= |\mathcal{C}^*_{s}| + \Reach[s] + k f \mu + N_1(\CharString[s + 1:i^*])   \label{eq:upperICQ}
\end{align}
Combining \eqref{eq:def_advantage}, \eqref{eq:lowerICQ}, and \eqref{eq:upperICQ} yields \eqref{eq:CQV}. Thus, the lemma holds.
\end{proof}

\section{Proof Sketch of Theorem \ref{thm:main}}\label{sec:probabilistic_lemmas}
We provide a proof sketch in this section and defer the full proof to Appendix \ref{app:probabilistic_lemmas}. The proof of Theorem \ref{thm:main} relies primarily on the properties of $\CharString$ (Lemmas \ref{lem:renewal_prop_CharString} and \ref{lem:CCS}) and the bounds on $\Unheard$ (Lemmas \ref{lem:dist_unheard} and \ref{lem:Uheard_line_bnd}). 

As Lemmas \ref{lem:settlement_necessary} and \ref{lem:chain_quality_necessary} provide necessary conditions for violations of settlement and chain quality, bounding their probabilities is sufficient to prove security. In other words, it suffices to prove the following two statements:
\iftoggle{arxiv}
{
\begin{align*}
    \mathbb{P} \left( \Margin_s(\CharString[1:i]) + \Unheard_\mathcal{I}[i] \geq 0 \text{ for some } i \geq s+k \right)
    &\leq p_{\textsf{settlement}} + |\mathcal{I}|p_{\textsf{unheard}}  \\   \mathbb{P}\left(\Advantage_s(\CharString[1:i]) +  \Unheard_{\mathcal{I}}[i] \geq 0 \text{ for some } i \geq s+ k\right)
    &\leq p_{\textsf{CQ}} + |\mathcal{I}|  \tilde{p}_{\textsf{unheard}}
\end{align*}
}
{
\begin{align*}
    \mathbb{P} \left(
    \begin{array}{c}
    \Margin_s(\CharString[1:i]) + \Unheard_\mathcal{I}[i] \geq 0 \\
    \text{ for some } i\geq s+k 
    \end{array} \right) \nonumber \\
    \leq p_{\textsf{settlement}}
    + |\mathcal{I}|p_{\textsf{unheard}}  \\ 
    \mathbb{P}\left(  \begin{array}{c}
    \Advantage_s(\CharString[1:i]) +  \Unheard_{\mathcal{I}}[i] \geq 0 \\ \text{ for some } i \geq s+ k
    \end{array} \right)  \nonumber \\
    \leq p_{\textsf{CQ}} + |\mathcal{I}|  \tilde{p}_{\textsf{unheard}}
\end{align*}
}
In Appendix \ref{sec:time_reduction}, we derive events on the compressed time scale that are implied by the events on the left-hand sides of these two statements. Analyzing these new events is therefore sufficient to prove security. In Appendix \ref{sec:reach}, we use Lemma \ref{lem:renewal_prop_CharString} to show that $\Reach[s]$ is stochastically dominated by a geometric random variable.

By Lemma \ref{lem:CCS}, $\CompressedCharString_s$ is (nearly) a Bernoulli process. The difference between the number of adversarial blocks and special honest blocks as a function of time behaves, therefore, like a random walk with negative drift. In Appendix \ref{sec:random_walk}, we bound such a process from above by an affine function with negative slope.

The results of Appendices \ref{sec:reach} and \ref{sec:random_walk} translate into affine bounds for the compressed time scale analogues of $\Margin_s$ and $\Advantage_s$. In the case of $\Margin_s$, we extend a result of \cite{blum2020combinatorics}. In Appendices \ref{sec:margin} and \ref{sec:advantage}, we combine these affine bounds with Lemma \ref{lem:Uheard_line_bnd} to prove the desired statements on settlement and chain quality.
\iftoggle{arxiv}{}{\clearpage}

\bibliographystyle{alpha}
\bibliography{blockchain_papers}

\appendix

\section{Proof of Lemma \ref{lem:dist_unheard}}   \label{app:lemma_unheard}
\distUnheard*
\begin{proof}
Fix $i\geq 1.$ It is possible that $i$ itself is a special honest slot and $h$ has not heard it by slot $i$. In any case, $\Unheard_h[i] \leq 1 + N_0(\CharString[\LatestHeard[i]:i])$, i.e., $\Unheard_h[i]$ is less than or equal to one plus the number of consecutive special honest slots from strictly before slot $i$ that $h$ has not heard by slot $i$.

The non-empty slots of $\CharString$ form both a Bernoulli process with parameter $f$ and a renewal process. Let $D_1, D_2, \ldots $ denote the lifetimes of the renewal process going backwards from slot $i$.
Thus, $i-D_1 - \cdots - D_j$ is the $j\textsuperscript{th}$ non-empty slot of $\CharString$ (strictly) before $i$.
The random variables $D_i$ are independent with the $\Geom(f)$ distribution.
The last special honest slot before slot $i$ must be at least $D_1$ slots before slot $i$, so the probability $h$ has heard that special honest slot is at least $q$. In general, for $j\geq 1$, the $j\textsuperscript{th}$ from the last special honest slot before slot $i$ must be at least $D_j$ slots before slot $i.$  (Here $D_j$ is used as a lower bound on $D_1 + \cdots + D_j.$)  Thus, no matter which of the last $j-1$ special honest slots before slot $i$ that $h$ has heard, the probability $h$ hears the $j\textsuperscript{th}$ from last special honest slot before $i$ is at least $q.$  Therefore, $\Unheard_h[i]$ can be viewed as at most one plus the number of consecutive failures in a sequence of trials, such that each successive trial is successful with probability at least $q$. Thus, $\Unheard_h[i]$ is stochastically dominated by the $\Geom(q)$ distribution, which is the conclusion of (a).

The proof of (b) is similar. Fix $s\geq 1$ and $j\geq 1.$ By the nature of the same renewal process considered in the previous paragraph,  the lifetime that begins at the last renewal point less than or equal to $s$, has the sampled lifetime distribution, equivalent to the sum of two $\Geom(f)$ random variables minus one. Such sampled lifetime distribution is stochastically greater that the typical lifetime distribution, $\Geom(f).$
All the other lifetimes of the renewal process going forwards or backwards from $s$ have the $\Geom(f)$ probability distribution.   Thus, if we consider the renewal process from the perspective of slot $s + T^s_j,$ which is the $j\textsuperscript{th}$ renewal point after slot $s$, the $j\textsuperscript{th}$ lifetime going backwards has the sampled lifetime distribution and all the other lifetimes have the $\Geom(f)$ distribution. Furthermore, these lifetimes are mutually independent. Thus, the same proof as in part (a), with $i$ there replaced by $s + T^s_j$, holds to prove (b).
\end{proof}

\section{Proofs of Lemmas \ref{lem:settle_balance_fork}, \ref{lem:balanced_fork_mu}, and \ref{lem:balanced_fork_equivalence}}  \label{app:deterministic_lemmas}

\settleBalanceFork*
\begin{proof}
By equation \eqref{eq:def_i_settlement_complement}, we know that $\mathcal{E}^c_{i\text{-settlement}}$ implies one of two events. We show that the lemma holds in each case. Let us first consider the case where $\exists \ h_1, h_2 \in \mathcal{I}$ such that $\mathcal{C}^{h_1}_{i}[1:s] \neq \mathcal{C}^{h_2}_{i}[1:s]$. Both $\mathcal{C}^{h_1}_{i}$ and  $\mathcal{C}^{h_2}_{i}$ are tines in $\mathcal{F}_i$. By the definition of $l$, both parties have heard of a special honest broadcast at slot $l$ or later. That is, $\LatestHeard_{h_1}[i] \geq l$ and $\LatestHeard_{h_2}[i] \geq l$. Therefore, both tines are $l$-viable. Finally, since these tines diverge at a block with label $< s$, they must have completely different blocks with labels (timestamps) $s$ onwards. Therefore these tines are $s$-disjoint ($t_1 \nsim_s t_2$). Together, we deduce that $\mathcal{F}_i \vdash \CharString[1:i]$ is an ($s, l$)-balanced fork.
    
Now, consider the case that $\mathcal{C}^{h}_{i}[1:s] \neq \mathcal{C}^{h}_{i+1}[1:s]$
for some $h \in \mathcal{I}.$  Consider the fork $\mathcal{F}_{i+1} \vdash \CharString[1:i+1]$. Let $t_1$ and  $t_2$ be the tines in $\mathcal{F}_{i+1}$ that represent the chains $\mathcal{C}^{h}_{i}$ and $\mathcal{C}^{h}_{i+1}$ respectively. Let $F$ be the directed tree obtained by dropping all blocks with label $i+1$ from $\mathcal{F}_{i+1}$. We now show that $F$ is an ($s, l$)-balanced fork. To prove this, we first note the following properties of $F$.
    \begin{itemize}
        \item $F \vdash \CharString[1:i]$. This follows from the construction of $F$ from $\mathcal{F}_{i+1}$ and $\mathcal{F}_{i+1} \vdash \CharString[1:i+1]$.
        \item $\mathcal{F}_i \sqsubseteq F \sqsubseteq \mathcal{F}_{i+1}$ ($F$ potentially contains some adversarial blocks not in $\mathcal{F}_i$).
        \item If a tine $t \in \mathcal{F}_i$ is $l$-viable in $\mathcal{F}_i$, then $t \in F$ is $l$-viable in $F$.
        \item If $t \in \mathcal{F}_{i+1}$ there is a corresponding tine $\Tilde{t} \in F$ that includes all but possibly the last block of $t$. This is because $t$ may contain at most one block with label $i+1$, which would be the only block not common between $\Tilde{t}$ and $t$. $\text{length}(\Tilde{t}) \geq \text{length}(t) - 1$. 
    \end{itemize}

We know that $t_1$ is an $l$-viable tine in $\mathcal{F}_i$, because it was held by an honest party in slot $i$ and $l \leq \LatestHeard_h[i]$. By the properties of $F$ above, $t_1$ is an $l$-viable tine in $F$. Further, there is a tine $\Tilde{t}_2 \in F$ corresponding to $t_2$. $t_2$ is a tine in $\mathcal{F}_{i+1}$ that is strictly longer than $t_1 \in \mathcal{F}_i$, and therefore $t_2 \in F$ must be at least as long as $t_1 \in F$. Therefore $t_2$ is also an $l$-viable tine in $F$. Lastly, $t_1 \nsim_s t_2$, because they represent chains that diverge prior to slot $s$ (here, $t_1, t_2$ are tines in $\mathcal{F}_{i+1}$). Therefore, $t_1 \nsim_s \Tilde{t}_2$ in the fork $F$. Thus, $F \vdash \CharString[1:i]$ is an ($s, l$)-balanced fork.
\end{proof}

\balanceMargin*
\begin{proof}
(if)
The proof relies on the definitions of margin, reach, reserve and gap (Definitions \ref{def:gap_reserve_reach} and \ref{def:margin}). Suppose $\Margin_s(w) \geq 0$. Then there exists a closed fork $\Bar{F} \vdash w$ such that $\Margin_s(\Bar{F}) \geq 0$. We shall construct $F \vdash w$ such that $\Bar{F} \sqsubseteq F$ and $F$ is $s$-balanced. Note that $\Margin_s(\Bar{F}) \geq 0$ implies $\Bar{F}$ has two tines $\Bar{t}_1$, $\Bar{t}_2$ such that $\Bar{t}_1 \nsim_s \Bar{t}_2$ and reach($\Bar{t}_j$) $\geq 0$, $j \in \{1, 2\}$. (In what follows, any statement with subscript $j$ holds for $j \in \{1, 2\}$). It follows that reserve($\Bar{t}_j$) $\geq$ gap($\Bar{t}_j$).

Recall that reserve($\Bar{t}_j$) are the number of adversarial slots in $w$ whose label is strictly greater than $\ell(\Bar{t}_j)$. This implies we can construct a fork $F \vdash w$ from $\Bar{F}$ by extending each tine $\Bar{t}_j$ by reserve($\Bar{t}_j$) adversarial blocks. Let $t_j$ denote the tine in $F$ extending $\Bar{t}_j$. Then $t_1 \nsim_s t_2$. By the definition of gap, tine $\Bar{t}_j$ is shorter than the longest tine in $\Bar{F}$ by gap($\Bar{t}_j$). Since reserve($t_j$) $\geq$ gap($t_j$), both tines $t_j$ are now longer than the longest tine in $\Bar{F}$. From the third observation made following Definition \ref{def:viable_tine}, and the fact that $\Bar{F}$ is the closure of $F$, we conclude that both $t_j$ are viable in $F$. Thus $F$ is an $s$-balanced fork.

(only if)
For this portion, we work with the definition of viable tines and $s$-balanced forks (Definitions \ref{def:viable_tine} and \ref{def:balanced_forks}).
Let $F \vdash w$ be an $s$-balanced fork. Then there exists two tines $t_1, t_2 \in F$ such that $t_1 \nsim_s t_2$ and they are both viable. Let $\Bar{F}$ be the closure of $F$, and let $\Bar{t}_1, \Bar{t}_2$ be the trimmed versions of $t_1$ and $t_2$ in $\Bar{F}$. It is sufficient to show that
$\Bar{t}_1 \nsim_s \Bar{t}_2$ and  reach($\Bar{t}_1$), reach($\Bar{t}_2$) $\geq 0$. Together, they imply
\begin{align*}
\Margin_s(w) & \geq \Margin_s(\Bar{F}) = \max_{t' \nsim_s t''} \min \left\{\text{reach}(t'), \text{reach}(t'') \right\} \\
& \geq \min \{\text{reach}(\Bar{t}_1), \text{reach}(\Bar{t}_2) \} \geq 0.
\end{align*}

The first point, $\Bar{t}_1 \nsim_s \Bar{t}_2$, follows from the observation after Definition \ref{def:fork_prefixes}. We now show reach($\Bar{t}_j$) $\geq 0$, $j = \{1, 2\}$. First, we note that $\text{length}(t_j) \leq \text{length}(\Bar{t}_j) + \text{reserve}(\Bar{t}_j)$; this follows from the definition of reserve. Rearranging this inequality, we get $\text{reserve}(\Bar{t}_j) \geq \text{length}(t_j) - \text{length}(\Bar{t}_j)$. Second, let $t$ be the longest tine in $\Bar{F}$. By definition, $\text{gap}(\Bar{t}_j) = \text{length}(t) - \text{length}(\Bar{t}_j)$. Third, we note that $t$ is also the longest tine in $F$ that ends in a vertex with a label in $\mathcal{N}_0(w)$. By the definition of viability, $\text{length}(t_j) \geq \text{length}(t), j = \{1, 2\}$. Putting these terms together, we get reach($\Bar{t}_j$) $=$ reserve($\Bar{t}_j$) $-$ gap($\Bar{t}_j$) $\geq \text{length}(t_j) - \text{length}(\Bar{t}_j) - (\text{length}(t) - \text{length}(\Bar{t}_j)) = \text{length}(t_j) - \text{length}(t) \geq 0$.
\end{proof}

\equivalence*
\begin{proof}
(if)
Let $F' \vdash w'$. Let $t$ denote the longest tine in $F$ ending at a block with label in $\mathcal{N}_0(w')$. We create a fork $F \vdash w$ by extending $t$ with a string of special honest nodes corresponding to slots in $\mathcal{N}_0(w[l+1:i])$. If $t'_1 \nsim_{s} t'_2$ are two viable tines in $F'$, then length($t'_j$) $\geq$ length($t$). Since $t$, $t'_1$, and $t'_2$ remain valid tines in $F$, these inequalities holds in $F$ as well. This implies $t'_1$, $t'_2$ are $l$-viable tines in $F$. The property $t'_1 \nsim_{s} t'_2$ trivially extends from $F'$ to $F$. Thus $F \vdash w$ is an $(s, l)$-balanced fork. 

(only if)
Let $F \vdash w$ be an $(s, l)$-balanced fork. We know there exist tines $t_1$ and $t_2 \in F$ such that $t_1 \nsim_s t_2$ and both $t_1$ and $t_2$ are $l$-viable in $F$. Let $t$ be the longest tine in $F$ that ends at a block with a label in $\mathcal{N}_0(w[1:l])$ (such a tine is unique in each $F$). Then $\text{length}(t_1)$, $\text{length}(t_2)$ $\geq$ $\text{length}(t)$. Let $F' \vdash w'$ be a prefix of $F$, obtained by dropping all blocks with labels in $\mathcal{N}_0(w[l+1:i])$ and their descendants. Let $t'_1$ and $t'_2$ be the tines in $F'$ corresponding to $t_1$ and $t_2$. To show $F'$ is an $s$-balanced fork, it is sufficient to show that $t'_1 \nsim_s t'_2$ and $t'_1$ and $t'_2$ are viable tines in $F'$.

The first point, $t'_1 \nsim_s t'_2$, follows from the first observation after Definition \ref{def:fork_prefixes}. Note that $t$ is the longest tine in $F'$ ending at a block with label in $\mathcal{N}_0(w')$. To establish viability, it suffices to show that length($t'_j$) $\geq$ length$(t)$ for $j \in \{1, 2\}$. Note that if any block from tine $t_j$ was dropped to obtain $t'_j$, it must have been at a depth strictly greater than length($t$). This is because any block with a label in $\mathcal{N}_0(w[l+1:i])$ must be at a depth strictly greater than length($t$), by the fifth property of forks (see Definition \ref{def:fork}). Therefore, length($t'_j$) $\geq$ length$(t)$ for $j \in \{1, 2\}$, which is what we wish to prove.
\end{proof}

\section{Proof of Theorem \ref{thm:main}}\label{app:probabilistic_lemmas}

By Lemmas \ref{lem:settlement_necessary} and \ref{lem:chain_quality_necessary}, it suffices to prove the following:
\iftoggle{arxiv}
{
\begin{align}
    \mathbb{P} \left( \Margin_s(\CharString[1:i]) + \Unheard_\mathcal{I}[i] \geq 0 \text{ for some } i \geq s+k \right)
    &\leq p_{\textsf{settlement}} + |\mathcal{I}|p_{\textsf{unheard}}  \label{eq:bound_settlement} \\   \mathbb{P}\left(\Advantage_s(\CharString[1:i]) +  \Unheard_{\mathcal{I}}[i] \geq 0 \text{ for some } i \geq s+ k\right)
    &\leq p_{\textsf{CQ}} + |\mathcal{I}|  \tilde{p}_{\textsf{unheard}} \label{eq:bound_chain_quality}
\end{align}
}
{
\begin{align}
    \mathbb{P} \left(
    \begin{array}{c}
    \Margin_s(\CharString[1:i]) + \Unheard_\mathcal{I}[i] \geq 0 \\
    \text{ for some } i\geq s+k 
    \end{array} \right) \nonumber \\
    \leq p_{\textsf{settlement}}
    + |\mathcal{I}|p_{\textsf{unheard}} \label{eq:bound_settlement}\\
    \mathbb{P}\left(  \begin{array}{c}
    \Advantage_s(\CharString[1:i]) +  \Unheard_{\mathcal{I}}[i] \geq 0 \\ \text{ for some } i \geq s+ k
    \end{array} \right)  \nonumber \\
    \leq p_{\textsf{CQ}} + |\mathcal{I}|  \tilde{p}_{\textsf{unheard}} \label{eq:bound_chain_quality}
\end{align}
}

\subsection{Reduction to compressed time scale}  \label{sec:time_reduction}
We defined compressed time-scale processes in Section \ref{sec:compressed_time_scale}. In this section, we specify events on the compressed time scale implied by the events on the left-hand sides of \eqref{eq:bound_settlement} and \eqref{eq:bound_chain_quality}. First, we establish some notation.

Recall that $\Reach$ denotes both a mapping of strings to $\mathbb{Z}_+$ and the random process $\Reach[i]=\Reach (\CharString [1:i]).$ We define similar random processes for $\Margin_s$ and $\Advantage_s.$ Fix $s\geq 1.$  Then $\Margin_s$ is a mapping of strings to $\mathbb{Z}_+$, where $s$ relates to $s$-disjoint tines.  We now apply this mapping to $\CharString$ and define a random process with the same name: $$\Margin_s[i] \stackrel{\triangle}{=} \Margin_s(\CharString[1:i]).$$
We now define a random process on the compressed time scale based on $\Margin_s$ by using the same value $s$ for both the parameter in defining disjoint tines and the reference slot for the compressed process. Thus, $\CompressedMargin_s[0]=\Margin_s[s]$ and for $j\geq 1,$
\begin{align*}
\CompressedMargin_s[j]  \stackrel{\triangle}{=}
\Margin_s(\CharString[1:s+T_j^s]).
\end{align*}
We define $\Advantage_s$ and $\CompressedAdvantage_s$ similarly.  Finally, recall the processes $\CompressedUnheard_{h,s}$ and $\CompressedUnheard_{\mathcal{I},s}$ from Section \ref{sec:unheard}.

Now, given $k' \geq 1$, consider the following three events:
\begin{align*}
F_0 & = \{ T^s_{k'} > k  \} \\
F_1 & = \{ \CompressedMargin_s[j] + \CompressedUnheard_{\mathcal{I},s}[j] \geq 0 \iftoggle{arxiv}{}{\\& \hspace{10pt}} \text{ for some } j\geq k'   \} \\
F_2 & = \{ \CompressedAdvantage_s[j] + \CompressedUnheard_{\mathcal{I},s}[j] \iftoggle{arxiv}{}{\\& \hspace{10pt}} \geq 0 \text{ for some } j  \geq k' \} 
\end{align*}

We claim that the event on the left-hand side of \eqref{eq:bound_settlement} implies $F_0 \cup F_1$.  The event on the left-hand side of \eqref{eq:bound_settlement} implies that $\Margin_s[i'] +\Unheard_\mathcal{I}[i'] \geq 0$ for some $i' \geq s+k$. The process $\Margin_s$ is constant over intervals of the form $[T^s_j: T^s_{j+1}-1]$ and the process $\Unheard_\mathcal{I}[i]$ is non-increasing over such intervals. So if $j'$ is such that $s+ T_{j'}$ is the last renewal time less than or equal to $i'$, then $\CompressedMargin_s[j']+\CompressedUnheard_{\mathcal{I},s}[j'] \geq 0.$  If $F_0$ does not hold, then $s+ T^s_{k'} \leq s+ k$, implying that $j' \geq k'$, and hence $F_1$ is true. This completes the proof of the claim. Similarly, the event on the left-hand side of \eqref{eq:bound_chain_quality} implies $F_0 \cup  F_2.$  Thus, to prove \eqref{eq:bound_settlement} and \eqref{eq:bound_chain_quality}, it suffices to obtain upper bounds on $\mathbb{P}(F_0 \cup F_1)$ and $\mathbb{P}(F_0 \cup F_2),$ respectively.

The following lemma will be used to help bound $\mathbb{P}(F_0).$
\begin{lemma}  \label{lemma:time_scales}
Suppose $k' = \lceil rkf \rceil$ such that $0 < r < 1.$  Then $\mathbb{P}(T^s_{k'} > k) \leq \exp(-kf(1-r)^2/2).$
\end{lemma}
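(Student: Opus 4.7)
The plan is to reduce $\{T^s_{k'} > k\}$ to a lower-tail event for a Binomial random variable and then apply a standard multiplicative Chernoff bound. This is straightforward because the distribution of $T^s_{k'}$ is completely determined by counting non-empty slots in $[s+1:s+k]$.

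First, I would observe that by the definitions in Section \ref{sec:compressed_time_scale}, $T^s_{k'}$ is the index (measured from slot $s$) of the $k'$-th non-empty slot after $s$. Since $\LeaderString$ is i.i.d.\ with probability $f$ of being non-empty in each slot, the count $B \triangleq N(\LeaderString[s+1:s+k])$ is $\mathrm{Binomial}(k,f)$-distributed, and the event $\{T^s_{k'} > k\}$ is precisely $\{B < k'\}= \{B \leq k'-1\}$. Using $k' = \lceil rkf\rceil$, which implies $k'-1 \leq rkf$, we have
\[
\mathbb{P}(T^s_{k'} > k) \;=\; \mathbb{P}(B \leq k'-1) \;\leq\; \mathbb{P}(B \leq rkf).
\]

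Next, I would invoke the standard multiplicative Chernoff lower-tail bound for a $\mathrm{Binomial}(k,f)$ variable with mean $\mu = kf$: for any $\delta \in (0,1)$,
\[
\mathbb{P}\bigl(B \leq (1-\delta)\mu\bigr) \;\leq\; \exp\!\bigl(-\mu \delta^2 / 2\bigr).
\]
Setting $\delta = 1-r$ gives $(1-\delta)\mu = rkf$ and the desired bound
\[
\mathbb{P}(T^s_{k'} > k) \;\leq\; \mathbb{P}(B \leq rkf) \;\leq\; \exp\!\bigl(-kf(1-r)^2/2\bigr).
\]

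There is essentially no main obstacle here; the only small care needed is in handling the ceiling function $\lceil rkf \rceil$, which is dispatched by the elementary observation $k'-1 \leq rkf$ (checked separately when $rkf$ is or is not an integer). The appeal to the Chernoff bound is completely standard and requires no adaptation to the random-delay setting, since $T^s_{k'}$ depends only on the leader-election process, not on message delays.
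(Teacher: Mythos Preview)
Your proposal is correct and follows essentially the same approach as the paper: reduce $\{T^s_{k'}>k\}$ to the Binomial lower-tail event $\{N(\CharString[s+1:s+k])\leq k'-1\}$, use $k'-1\leq rkf$ from the ceiling, and apply the multiplicative Chernoff bound with $\delta=1-r$. The only cosmetic difference is that you write $N(\LeaderString[\cdot])$ while the paper writes $N(\CharString[\cdot])$, but these coincide since $\CharString$ and $\LeaderString$ have the same set of non-empty slots.
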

\begin{proof}
Note that $\{T^s_{k'} > k\} = \{N(\CharString[s+1:s+k])\leq k'-1\},$ and $N(\CharString[s+1:s+k])$ has the binomial distribution with parameters $k$ and $f.$   Thus
\begin{align*}
\mathbb{P}(T^s_{k'} > k) 
& = \mathbb{P}(\mathsf{binom}(k,f) \leq k'-1) \\
&\leq \mathbb{P}(\mathsf{binom}(k,f) \leq rkf ) \leq \exp(-kf(1-r)^2/2).
\end{align*}
where we use the bound $\mathbb{P}(\mathsf{binom}(n,p) \leq rnp)\leq \exp(np(1-r)^2/2).$
\end{proof}

\subsection{On \textsf{Reach}}\label{sec:reach}

In this section, we show that the marginal distribution of $\Reach$ is stochastically dominated by a geometric random variable. This result is used to bound both $\Margin_s$ and $\Advantage_s$ in later sections.

Let $B$ denote the backwards residual lifetime process for the locations of the non-empty slots in $\mathsf{CharString},$  counting from zero. In other words,  $B_t = \min\{i\geq 0: \mathsf{CharString}[t-i]\neq \perp\}.$ 

\begin{lemma}\label{lem:reach_stationary}
The process $(B_t, \mathsf{Reach}[t])$ is a discrete-time Markov process with equilibrium probability mass function given by $\pi(b,r) = f(1-f)^b \left(1-\frac{1-p}{p} \right)  \left( \frac{1-p}{p}\right)^r.$
In other words, under the equilibrium distribution, $B_t$ is independent of $\mathsf{Reach}[t],$ $B_t$ has the
$\Geom(f)-1$ distribution, and $\mathsf{Reach}[t]$ has the $\Geom\left(\frac{1-p}{p}\right)-1$ distribution.
\end{lemma}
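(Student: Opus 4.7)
The plan is to identify $(B_t,\Reach[t])$ as a time-homogeneous Markov chain, compute its one-step transitions explicitly, and verify that the candidate $\pi$ satisfies the global balance equations.

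First I would establish the transitions. Conditioned on the past, $(B_{t+1},\Reach[t+1])$ is a deterministic function of $(B_t,\Reach[t])$ and the symbol $\CharString[t+1]$, so it suffices to determine the conditional distribution of $\CharString[t+1]$. By the renewal structure of $\CharString$ (Lemma \ref{lem:renewal_prop_CharString}) together with the memoryless property of the $\Geom(f)$ inter-arrival distribution, the slot $t+1$ is empty with probability $1-f$ given $B_t=b$; if non-empty, the gap from the previous non-empty slot is exactly $b+1$, and by \eqref{eq:CharStringConditional} the label is $0$ with probability $q_{b+1}\triangleq \alpha\mathbb{P}(\Delta<b+1)$ and $1$ with probability $1-q_{b+1}$. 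Applying the reach recursion \eqref{eq:reach_recursive} then yields the transitions $(b,r)\to (b+1,r)$ with probability $1-f$, $(b,r)\to (0,r+1)$ with probability $f(1-q_{b+1})$, and $(b,r)\to (0,(r-1)_+)$ with probability $fq_{b+1}$. Since the conditional law of the update depends on the past only through $(b,r)$, the chain is Markov.

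The averaging identity that collapses the $b$-dependence when integrated against the $B_t$-marginal of $\pi$ is
\[ \sum_{b\geq 0} f(1-f)^{b}\, q_{b+1}\;=\;\alpha\sum_{g\geq 1} f(1-f)^{g-1}\mathbb{P}(\Delta<g)\;=\;\alpha\,\mathbb{P}(\Delta<G)\;=\;p,\]
with $G\sim\Geom(f)$ independent of $\Delta$. Writing $\pi(b,r)=f(1-f)^b(1-\rho)\rho^r$ with $\rho=(1-p)/p$, I then verify $\pi(b',r')=\sum_{(b,r)}\pi(b,r)P((b,r)\to(b',r'))$ case by case. For $b'\geq 1$ only the transition $(b'-1,r')\to(b',r')$ contributes and the identity $f(1-f)^{b'}=(1-f)\cdot f(1-f)^{b'-1}$ gives the equation immediately. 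For $b'=0$ and $r'\geq 1$, applying the moment identity to the two incoming contributions reduces the balance equation to $\rho^{r'}=(1-p)\rho^{r'-1}+p\rho^{r'+1}$, i.e.\ $p\rho^2-\rho+(1-p)=0$, whose unique root in $(0,1)$ (using $p>1/2$) is $\rho=(1-p)/p$. For $b'=r'=0$, both $r=0$ (via $(0-1)_+=0$) and $r=1$ contribute through label-$0$ transitions, and the balance equation collapses to $p(1+\rho)=1$, giving the same $\rho$. Uniqueness of the invariant distribution is standard from irreducibility and positive recurrence, which hold since $\rho<1$.

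The main subtlety is that, unlike in an i.i.d.\ setting, the distribution of the label at the next non-empty slot is not independent of the backward residual $B_t$; this gap-dependent label is precisely why $B_t$ must be carried as part of the state. The moment identity above is the critical averaging step that makes all of this gap-dependence collapse to the single constant $p$ once integrated, so that the $\Reach$ coordinate exhibits a clean geometric equilibrium and the product form $\pi(b,r)=\pi_B(b)\,\pi_R(r)$ emerges. A minor bookkeeping point that I would be careful about is the truncation at zero in the $\Reach$ recursion: the case $b'=r'=0$ collects contributions from both $r=0$ and $r=1$, and it is exactly this doubled contribution that produces the factor $(1+\rho)$ that pins down $\rho=(1-p)/p$.
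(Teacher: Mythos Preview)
Your proposal is correct and follows essentially the same approach as the paper: identify the Markov transitions from the renewal structure of $\CharString$ and the recursion \eqref{eq:reach_recursive}, then verify the balance equations using the averaging identity $\sum_{b\geq 0} f(1-f)^b \alpha\mathbb{P}(\Delta\leq b)=p$. The paper's proof is nearly identical, writing $F_b=\mathbb{P}(\Delta\leq b)$ in place of your $q_{b+1}$ and checking ``flow in equals flow out'' at each state; your added remark on uniqueness via irreducibility and positive recurrence is correct but not needed for the lemma as stated.
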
   

\begin{proof}
The Markov property follows from (i) the recursion \eqref{eq:reach_recursive} for determining $\Reach$ from $\CharString$ and (ii) the renewal structure of $\mathsf{CharString}$ described in Lemma \ref{lem:renewal_prop_CharString}.
The nonzero transition probabilities out of any given state $(b,r) \in \mathbb{Z}_+^2$ are given by  (with $F_b=\mathbb{P}(\Delta \leq b)$):
\begin{align*}
\mathbb{P}((b,r)\to(b+1,r)) &= 1-f  \\  \mathbb{P}((b,r)\to(0,(r-1)_+))&=f\alpha F_b \\  \mathbb{P}((b,r)\to(0,r+1)) &= f(1-\alpha F_b)
\end{align*}
To verify $\pi$ is the equilibrium distribution, it suffices to check that if the state of the process at one time has distribution $\pi$, then in one step of the process, the probability of jumping out of any given state is equal to the probability of jumping into the state.  For a state of the form $(b,r)$ with $b\geq 1$, the probability of jumping into the state is $\pi(b-1,r)(1-f),$ which is equal to $\pi(b,r),$ the probability of jumping out of the state.   For a state of the form $(0,r)$ with $r\geq 1$, the probability of jumping into the state satisfies the following:
\begin{align*}
&\sum_{b=0}^\infty \pi(b,r-1)f(1-\alpha F_b)  + \sum_{b=0}^\infty \pi(b,r+1)f\alpha F_b \\
&~~~  = \pi(0,r) \left[   
\sum_{b=0}^\infty \frac{p}{1-p}(1-f)^b f(1-\alpha F_b)  + 
\sum_{b=0}^\infty  \frac{1-p}{p}(1-f)^b    f\alpha F_b 
\right] \\ & = \pi(0,r),
\end{align*}
where we used the fact $\alpha \sum_{b=0}^\infty  (1-f)^b f F_b  = p.$ Thus, the probability of jumping into the state $(0,r)$ is equal to $\pi(0,r)$, which is the probability of jumping out of state $(0,r).$   It remains to show probabilities of jumping into and out of state (0,0) are the same, but that follows from the fact it is true for all other states.
\end{proof}

\begin{lemma}\label{lem:reach_dist_bound}
For all integers $i \geq 0$,  $\mathbb{P}(\mathsf{Reach}[i] \geq a) \leq \left(\frac{1-p}{p}\right)^a$ for all $a\in \mathbb{R}_+.$
\end{lemma}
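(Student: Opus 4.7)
The plan is to couple $\Reach$ with a stationary version of the Markov chain $(B_t, \Reach[t])$ via a monotone coupling, and then invoke Lemma \ref{lem:reach_stationary} to read off the tail bound.

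First I observe that the transition kernel of $(B_t, \Reach[t])$ admits the following joint realization: given $B_t = b$, draw $X_{t+1} \in \{\perp, 0, 1\}$ independently with probabilities $1-f$, $f\alpha F_b$, and $f(1-\alpha F_b)$ respectively, where $F_b = \mathbb{P}(\Delta \leq b)$; then $B_{t+1}$ is a deterministic function of $(B_t, X_{t+1})$ alone and $\Reach[t+1]$ is a deterministic function of $(\Reach[t], X_{t+1})$ alone via the recursion \eqref{eq:reach_recursive}. The key structural fact is that the $\Reach$-update is non-decreasing in the current $\Reach$-value, since $(r-1)_+$ is non-decreasing in $r$ and the other two updates ($r$ and $r+1$) are trivially monotone.

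I would then couple two copies of the chain: the actual one $(B_t, \Reach[t])$ starting at $(B_0, 0)$, and a second copy $(B_t, R_t^\star)$ starting at $(B_0, R_0^\star)$, where $R_0^\star$ is drawn independently of $B_0$ with marginal pmf $\left(1 - \tfrac{1-p}{p}\right)\left(\tfrac{1-p}{p}\right)^{r}$ for $r\in \mathbb{Z}_+$; by Lemma \ref{lem:reach_stationary} this joint initial distribution is the equilibrium $\pi$. Drive both chains by the same $B_0$ and a common sequence of symbols $X_{t+1}$ generated as above; since the $B$-components follow identical rules from the same start, the coupling is well-defined and $B_t$ is common to both chains for all $t$. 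Then, by monotonicity of the $\Reach$-update and $\Reach[0] = 0 \leq R_0^\star$, an easy induction in $t$ gives $\Reach[t] \leq R_t^\star$ almost surely for every $t \geq 0$.

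Because the second copy is started in equilibrium, $R_i^\star$ has the equilibrium $\Reach$-marginal for every $i \geq 0$, so $\mathbb{P}(R_i^\star \geq m) = \left(\tfrac{1-p}{p}\right)^m$ for each integer $m \geq 0$. As $\Reach[i]$ is integer-valued, $\mathbb{P}(\Reach[i] \geq a) = \mathbb{P}(\Reach[i] \geq \lceil a \rceil) \leq \left(\tfrac{1-p}{p}\right)^{\lceil a \rceil} \leq \left(\tfrac{1-p}{p}\right)^a$, where the last step uses $\tfrac{1-p}{p} \leq 1$, which holds under $\epsilon$-honest majority. The only subtle step is justifying the monotone coupling, but this is immediate from inspecting the transitions; the rest is a routine consequence of Lemma \ref{lem:reach_stationary}.
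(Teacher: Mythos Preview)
Your proposal is correct and follows essentially the same approach as the paper: both arguments couple the actual chain $(B_t,\Reach[t])$ with a stationary copy sharing the same $B$-component, exploit the monotonicity of the $\Reach$-recursion in its second coordinate, and then read off the geometric tail from Lemma~\ref{lem:reach_stationary}. The paper phrases the monotonicity as ``for $\CharString$ fixed, the variables are nondecreasing functions of the initial state,'' whereas you spell out the common-randomness coupling explicitly, but the underlying idea is identical.
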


\begin{proof}
By the renewal structure of $\mathsf{CharString}$ described in Lemma \ref{lem:renewal_prop_CharString}, the sequence of non-empty slots is a Bernoulli process with parameter $f$, so the distribution of $B_0$ is $\Geom(f)-1.$   The initialization of $\mathsf{Reach}$ is $\mathsf{Reach}[0]=0.$
Consider a comparison system such that $\mathsf{Reach}[0]$ is a random variable independent of  $\mathsf{CharString}$ with the $\Geom\left(\frac{1-p}{p}\right)-1$ distribution.  Then in the comparison system, $((B_t, \mathsf{Reach}[0]): t\geq 0)$ is a stationary Markov process, and in particular,  $\mathsf{Reach}[t]$ has the $\Geom\left(\frac{1-p}{p}\right)-1$  distribution for all $t.$  

Note that, for $\mathsf{CharString}$ fixed, all the variables $((B_t, \mathsf{Reach}[t]): t\geq 0)$ are nondecreasing functions of the initial state $(B_0,\mathsf{Reach}[0])$, as can be readily shown by induction on $t.$  Since the actual initial state of the original system is less than the initial state of the comparison system, it follows that $\mathsf{Reach}[t]$ in the original system is stochastically dominated by the $\Geom\left(\frac{1-p}{p}\right)-1$ distribution, as promised by the lemma.
\end{proof}

\subsection{A bound on a random walk}\label{sec:random_walk}

Let $\textsf{W}$ denote a simple integer valued random walk with a drift $-\epsilon,$   In other words, $\textsf{W}[0] = 0$ and 
\begin{equation}\label{eq:random_walk_defn}
    \textsf{W}[j + 1] = \begin{cases}
    \textsf{W}[j] + 1 & \text{w.p. } \frac{1-\epsilon}{2} \\
    \textsf{W}[j] - 1 & \text{w.p. } \frac{1+\epsilon}{2} \\
    \end{cases}
\end{equation}
Here, $\epsilon$ can be any value in $[-1,1]$, but in our application, $0 < \epsilon < 1.$
The purpose of this section is to prove the following lemma.

\begin{lemma}\label{lem:random_walk_bound}
Let $W[j]$ be a simple random walk defined in \eqref{eq:random_walk_defn}. For any $c < \epsilon$, for any $k \in \mathbb{N}$,
\begin{align} \label{eq:to_prove}
\mathbb{P}(W[j] \geq -cj \text{ for some } j \geq k) \leq 2\exp\left(-k(\epsilon - c)^2/3 \right)
\end{align}
\end{lemma}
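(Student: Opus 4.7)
The plan is to reduce the claim to a tail bound for the shifted process $Y[j] \triangleq W[j] + cj.$ Note that $\{W[j] \geq -cj\} = \{Y[j] \geq 0\},$ so the target event is $\{\sup_{j \geq k} Y[j] \geq 0\}.$ Because the increments of $W$ are $\pm 1$ with mean $-\epsilon,$ the process $Y$ is a random walk starting at $0$ with drift $-(\epsilon - c) < 0$ and bounded increments in $[-1+c,\,1+c].$ My strategy is to exhibit an exponential supermartingale that dominates the indicator of this event, then apply Doob's maximal inequality once, rather than perform a union bound over $j \geq k.$

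Concretely, let $\lambda \triangleq \epsilon - c > 0$ and define $M[j] \triangleq e^{\lambda Y[j]}.$ Writing $\phi(\lambda) \triangleq \mathbb{E}[e^{\lambda X}]$ for a single increment $X \in \{-1,+1\}$ with $\mathbb{E}[X] = -\epsilon,$ Hoeffding's lemma for bounded random variables (range $2$) gives $\phi(\lambda) \leq \exp(-\lambda \epsilon + \lambda^2/2).$ The one-step ratio satisfies
\[
\frac{\mathbb{E}[M[j+1]\mid \mathcal{F}_j]}{M[j]} = e^{\lambda c}\phi(\lambda) \leq \exp\!\bigl(-\lambda(\epsilon-c) + \lambda^2/2\bigr) = \exp\!\bigl(-(\epsilon-c)^2/2\bigr) \leq 1,
\]
so $(M[j])_{j\geq 0}$ is a nonnegative supermartingale. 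Applying the maximal inequality to the shifted process $(M[k+m])_{m\geq 0}$ yields
\[
\mathbb{P}\!\left(\sup_{j \geq k} Y[j] \geq 0\right) = \mathbb{P}\!\left(\sup_{j \geq k} M[j] \geq 1\right) \leq \mathbb{E}[M[k]] = \bigl(e^{\lambda c}\phi(\lambda)\bigr)^k \leq \exp\!\bigl(-k(\epsilon-c)^2/2\bigr),
\]
where $\mathbb{E}[M[k]] = (e^{\lambda c}\phi(\lambda))^k$ follows from the i.i.d.\ increments of $W.$ Since $\exp(-k(\epsilon-c)^2/2) \leq 2\exp(-k(\epsilon-c)^2/3),$ the lemma follows.

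The only delicate step is justifying the supermartingale inequality, i.e., the moment-generating-function estimate at the chosen $\lambda.$ I would do this by Hoeffding's lemma as above, but a direct calculation from $\phi(\lambda) = \tfrac{1-\epsilon}{2}e^{\lambda} + \tfrac{1+\epsilon}{2}e^{-\lambda}$ also works. An alternative route via the union bound $\sum_{j\geq k}\mathbb{P}(Y[j]\geq 0) \leq \sum_{j\geq k}\exp(-j(\epsilon-c)^2/2)$ does give a bound of the same qualitative form, but extracting the clean constants $2$ and $1/3$ from the resulting geometric series is awkward because the factor $1/(1-e^{-(\epsilon-c)^2/2})$ blows up as $\epsilon - c \to 0$; the supermartingale argument avoids this bookkeeping and even yields a slightly stronger exponent.
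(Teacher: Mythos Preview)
Your proof is correct and in fact yields a sharper bound than the paper's, $\exp(-k(\epsilon-c)^2/2)$ rather than $2\exp(-k(\epsilon-c)^2/3)$. The route, however, is genuinely different. The paper decomposes the event into $G_1=\{W[k]\geq -ck-b\}$ and $G_2=\{\max_{i\geq 0}(W[i+k]-W[k]+ci)\geq b\}$, bounds $G_1$ by Hoeffding's inequality and $G_2$ by Kingman's tail bound (using $\theta^*\geq 2(\epsilon-c)$), and then optimizes over the free threshold $b$. You instead build a single exponential supermartingale $M[j]=e^{\lambda Y[j]}$ with $\lambda=\epsilon-c$, verify the one-step contraction via Hoeffding's lemma, and apply Doob's maximal inequality once to $(M[k+m])_{m\geq 0}$. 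Your approach avoids the parameter optimization over $b$, requires no splitting into two events, and directly produces the optimal Chernoff exponent; the paper's decomposition is somewhat more modular (it separates the ``where am I at time $k$'' contribution from the ``future excursions'' contribution), but here that modularity only costs constants.
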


\begin{proof}
Let $b>0,$ to be determined below.
Observe that the event on the left-hand side of \eqref{eq:to_prove} is contained
in $G_1\cup G_2$ where $G_1=\{W[k] \geq -ck - b\}$ and
$G_2= \{ \max_{i \geq 0} (W[i+k] - W[k] + ci) \geq b\}.$

Since $W[k] + k\epsilon$ is the sum of $k$ i.i.d. random variables with $0$ mean, each taking values
in an interval of length two, Hoeffding's inequality implies that for any $\delta > 0$,
$ \mathbb{P}\left(W[k] + k\epsilon  \geq k \delta \right) \leq \exp(-k\delta^2/2)$.
Setting  $\delta = \epsilon - c -(b/k)$ yields $P(G_1)  \leq \exp(-k\delta^2/2).$

Let $Y$ be a random variable such that
\begin{equation*}
    Y = \begin{cases}
    ~~ 1 + c & \text{ w.p. } \frac{1 - \epsilon}{2} \\
    -1 + c & \text{ w.p. } \frac{1 + \epsilon}{2},
    \end{cases}
\end{equation*}
and let $Y_1, Y_2, \ldots$ be i.i.d. copies of $Y$.
Kingman's tail bound \cite{Kingman64} is that, for
$\theta^* = \sup\{\theta > 0 : \mathbb{E}\left[e^{\theta Y}\right] \leq 1\},$ 
\begin{align*}
 \mathbb{P}\left(\max_{i \geq 0} \sum_{i' = 1}^{i} Y_{i'} \geq b \right)    \leq e^{-\theta^* b } 
\end{align*}
To obtain a bound on $\theta^*,$ note that
Hoeffding's lemma for bounded random variables \cite{Hoeffding} implies that
$\mathbb{E}\left[e^{\theta (Y-(c-\epsilon))}\right] \leq e^{\theta^2/2}$.  Taking $\theta=-2(c-\epsilon)$ shows that $\mathbb{E}\left[e^{2(\epsilon - c) Y}\right] \leq 1.$
Therefore $\theta^* \geq 2(\epsilon - c),$
Thus, for any $b \geq 0$,
\begin{equation}\label{eq:kingman_bound}
    \mathbb{P}\left(\max_{i \geq 0} \sum_{i' = 1}^{i} Y_{i'} \geq b \right) \leq e^{-\theta^* b } \leq e^{-2(\epsilon - c) b},
\end{equation}
For any $k \in \mathbb{N}$, we note that the random processes
$(\sum_{i' = 1}^i Y_{i'}: i\geq 0)$ and $(W[i + k] - W[k] + ci : i \geq 0)$ have the same
distribution. Therefore, \eqref{eq:kingman_bound} 
implies $\mathbb{P}(G_2) \leq \exp\left(-2(\epsilon - c) b\right).  $

Thus $\mathbb{P}(G_1\cup G_2) \leq   \exp\left(-k\delta^2/2\right) + \exp\left(-2(\epsilon - c) b\right).$
Setting $b = k(\epsilon - c)\left(1-\sqrt{\frac 2 3}\right)$ gives  $\delta^2/2 = (\epsilon - c)^2/3$ and using
$2\left(1-\sqrt{\frac 2 3} \right) \geq 0.367$ yields
\begin{align*}
\mathbb{P}(G_1\cup G_2)& \leq \exp\left(-k(\epsilon - c)^2/3\right) + \exp\left(-k(0.367)(\epsilon - c)^2 \right) \\
&\leq 2\exp\left(-k(\epsilon - c)^2/3\right) 
\end{align*}
which proves the lemma.
\end{proof}

\subsection{On \textsf{Margin} and proof of settlement bound} \label{sec:margin}

We prove bound \eqref{eq:main_settlement_bnd} in this section. By Section \ref{sec:time_reduction}, it suffices to prove $\mathbb{P}(F_0\cup F_1) \leq p_{\textsf{settlement}} + |\mathcal{I}| p_{\textsf{unheard}}.$
Recall that
$\Unheard_{\mathcal{I}}$ is the maximum over the $|\mathcal{I}|$ processes
 $\Unheard_h$ with $h\in\mathcal{I}.$
It thus suffices to prove the following bounds, where $c$ is a constant determined below such that $0 < c < \epsilon$,
$k' = \lceil 3kf/4 \rceil,$ and $h$ denotes an arbitrary special honest user.
\begin{align}
\iftoggle{arxiv}{}{&}\mathbb{P}(T^s_{k'} > k )    +   \mathbb{P}(\CompressedMargin_s[j] \geq -cj  + \frac{ck'} 2  \text{ for some } j \geq k')  \iftoggle{arxiv}{}{\nonumber \\}
    & \leq p_{\textsf{settlement}} \label{eq:bound_margin} \\
\iftoggle{arxiv}{}{&}\mathbb{P}(\CompressedUnheard_{h,s}[j]  \geq \iftoggle{arxiv}{}{~~~} cj - ck'/2   \text{ for some } j \geq k')\iftoggle{arxiv}{}{\nonumber \\}
&\leq p_{\textsf{unheard}}   \label{eq:bound_unheard}
\end{align}
Lemma \ref{lemma:time_scales} with $r=3/4$ yields that $ \mathbb{P}(T^s_{k'} > k ) \leq \exp(-kf/32).$

The recursions \eqref{eq:reach_recursive} and \eqref{eq:margin_recursive}  imply
$\Margin_s[i] = \Reach[i]$ for $1 \leq i \leq s-1,$ and
$\Margin_s[i] \leq \Reach[i]$ for all $i\geq s.$ In particular, $\CompressedMargin_s[0] \leq \CompressedReach_s[0] = \Reach[s]$. 

The following lemma is adapted from \cite{blum2020combinatorics}:
\begin{lemma}\label{lem:margin_ineq2}
For any $s, k \in \mathbb{N}$,
\begin{equation}\label{eq:margin_ineq2}
    \mathbb{P}(\CompressedMargin_s[j] \geq 0 \text{ for some } j \geq k) \leq \exp{(-k \epsilon^3/3)}
\end{equation}
\end{lemma}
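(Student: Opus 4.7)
The plan is to adapt the proof of the corresponding result in Blum et al.\ \cite{blum2020combinatorics} to our compressed time scale. Two structural facts drive the argument: by Lemma~\ref{lem:CCS}, $\CompressedCharString_s[j]$ for $j \geq 1$ is an essentially i.i.d.\ sequence with $\mathbb{P}(\CompressedCharString_s[j] = 0) = p = (1+\epsilon)/2$; and by the recursion \eqref{eq:margin_recursive}, each symbol of $\CompressedCharString_s$ induces a $\pm 1$ step of $\CompressedMargin_s$, with the exception that a $0$ symbol keeps $\CompressedMargin_s$ at $0$ (instead of pushing it to $-1$) whenever $\CompressedMargin_s = 0$ and $\CompressedReach_s > 0$.

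First I would couple $\CompressedMargin_s$ with the random walk $W$ of Section~\ref{sec:random_walk}, built from $\CompressedCharString_s$ by assigning $+1$ to each symbol $1$ and $-1$ to each symbol $0$. A direct induction on \eqref{eq:margin_recursive} gives the identity
\begin{equation*}
\CompressedMargin_s[j] \;=\; \CompressedMargin_s[0] + W[j] + E_j,
\end{equation*}
where $E_j \geq 0$ counts the exceptional steps in $[1,j]$. The initial condition satisfies $\CompressedMargin_s[0] \leq \Reach[s]$, which by Lemma~\ref{lem:reach_dist_bound} is stochastically dominated by a geometric random variable with tail $\bigl((1-\epsilon)/(1+\epsilon)\bigr)^a \leq \exp(-2\epsilon a)$. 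The subtle point is that $E_j$ can grow with $j$; however, each exceptional step consumes one unit of $\CompressedReach_s$ while $\CompressedMargin_s$ plateaus at $0$, so the running count of exceptions is controlled by the reach available. Crucially, once $\CompressedMargin_s$ first goes strictly below $0$ (the ``breakthrough''), the exceptional rule no longer applies and $\CompressedMargin_s$ evolves as a pure random walk driven by $\CompressedCharString_s$.

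Given this structure, the event $\{\CompressedMargin_s[j] \geq 0 \text{ for some } j \geq k\}$ is contained in the union of two rare events: (a) an atypically large value of $\Reach[s]$, handled by Lemma~\ref{lem:reach_dist_bound}; and (b) an atypically large positive excursion of $W$ after time $k$, handled by Lemma~\ref{lem:random_walk_bound} with a slope parameter $c \in (0, \epsilon)$ giving $2\exp(-k(\epsilon - c)^2 / 3)$. Balancing these two bounds as functions of $\epsilon$, and absorbing the contribution of the exception counter $E_j$ into the reach term, yields the claimed exponential decay $\exp(-k\epsilon^3/3)$.

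The main obstacle is the careful handling of the exceptional case, which prevents a direct stochastic domination of $\CompressedMargin_s$ by a biased random walk. My plan is to introduce the first breakthrough as a stopping time, separating the analysis into a pre-breakthrough phase (controlled by the initial reach $\Reach[s]$ together with the reflecting behaviour at $0$) and a post-breakthrough phase (a pure random walk with drift $-\epsilon$ to which Lemma~\ref{lem:random_walk_bound} applies directly). A union bound over the two phases then yields the claimed estimate.
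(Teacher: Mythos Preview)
Your proposal contains a concrete error that undermines the decomposition. You claim that ``once $\CompressedMargin_s$ first goes strictly below $0$, the exceptional rule no longer applies and $\CompressedMargin_s$ evolves as a pure random walk.'' This is false. The exceptional branch of \eqref{eq:margin_recursive} fires whenever $\CompressedMargin_s=0$ and $\CompressedReach_s>0$, regardless of history. After the first breakthrough, a run of $1$-symbols can bring $\CompressedMargin_s$ back up to $0$ (and simultaneously replenish $\CompressedReach_s$), at which point further exceptions occur. So the post-breakthrough process is not a pure biased walk; your stopping-time split does not cleanly separate a ``reflecting'' phase from a ``free'' phase.

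Relatedly, the statement that the exception count $E_j$ can be ``absorbed into the reach term'' is not justified: each exception does consume one unit of reach, but reach is replenished by every $1$-symbol, so $E_j$ is not bounded by $\Reach[s]$. Controlling $E_j$ is exactly the delicate point, and it is why the joint two-dimensional $(\CompressedReach_s,\CompressedMargin_s)$ analysis is needed. Even if you patched the decomposition (e.g.\ via excursion arguments), it is not clear your route yields the specific exponent $\epsilon^3/3$; naive random-walk estimates give $\epsilon^2$-type exponents.

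For comparison, the paper does not re-prove this lemma from scratch. It observes that the pair $(\CompressedReach_s[j],\CompressedMargin_s[j])$ obeys exactly the recursions of $(\rho(xy),\mu_x(y))$ in \cite{blum2020combinatorics}, is driven by a $\{0,1\}$ process satisfying the $\epsilon$-martingale condition (Lemma~\ref{lem:CCS}), and has initial conditions dominated as required (Lemma~\ref{lem:reach_dist_bound}). It then invokes the two-dimensional supermartingale argument of \cite{blum2020combinatorics} directly, noting that the explicit exponent $\epsilon^3/3$ satisfies the inequality that emerges from that analysis. If you want a self-contained argument, you would need to reproduce that joint analysis rather than the one-dimensional split you propose.
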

\begin{proof}
The lemma is a slight modification of the first corollary at the beginning of Section 6 of \cite{blum2020combinatorics}, which in turn is based on the theorem in that section. We explain why these results can be adapted to our model, and some differences in the form of the bound on the right-hand side of \eqref{eq:margin_ineq2}. In \cite{blum2020combinatorics}, these results are stated for the quantity $\mu_x(y)$, which roughly maps to the quantity $\CompressedMargin_s[j]$. A subtle difference between the two quantities is that $\CompressedMargin_s[j]$ is a metric concerning tines diverging prior to a reference slot $s$ on the \textit{original time scale}, whereas $\mu_x(y)$ corresponds to a reference slot $|x|$ on the \textit{compressed time scale}. Nevertheless, $(\CompressedReach_s[j], \CompressedMargin_s[j])$ satisfy the same recursions as $(\rho(xy), \mu_x(y))$ (see \eqref{eq:margin_recursive} and the two Lemmas in Section 5 of \cite{blum2020combinatorics}), and are `driven' by a $\{0, 1\}$ valued process satisfying the $\epsilon$-martingale condition. Moreover, the initial values satisfy: $\CompressedMargin_s[0] \leq \CompressedReach_s[0] = \Reach[s] \preceq R^*$; the same holds for $(\rho(x), \mu_x(\varepsilon))$ (see Lemmas in Section 5 and 6.2 of \cite{blum2020combinatorics}). The proof of the theorem of \cite{blum2020combinatorics} depends only on the fact that $(\rho(xy), \mu_x(y))$ satisfy these properties, and thus can be adapted to $\CompressedMargin_s$ as well.

The right-hand side of the inequality of the result in \cite{blum2020combinatorics} (the corollary) is stated as $O(1) \exp(-\Omega(k))$, while we use the expression $\exp(-\epsilon^3k/3)$. The difference in the expression comes from two factors. Firstly, the proof of the theorem in Section 6 involves analyzing the (random) time after which $\mu_x(\cdot)$ is negative forever. Put differently, the proof of the theorem actually proves the stronger statement of the corollary in \cite{blum2020combinatorics}. Thus, the bound presented in the corollary can be obtained without a union bound argument. We therefore omit the factor of $O(1)$ of \cite{blum2020combinatorics}. Secondly, the proofs in Section 6 of \cite{blum2020combinatorics} provide exact expressions for the constants in the error exponent. In particular, any bound of the form $\exp(-ak)$ can be used, if $a$ satisfies
\[1 + a < \sqrt{\frac{1}{1+\epsilon}\left(\frac{2}{\sqrt{1-\epsilon^2}} - \frac{1}{1+\epsilon}\right)}\]
The proof of \cite{blum2020combinatorics} concludes with the expression $a = \epsilon^3(1 - O(\epsilon))/2$; however, it can be analytically verified, using the Maclaurin series for $1/\sqrt{1-\epsilon^2}$ and $1/(1+\epsilon)$, that $a = \epsilon^3/3$ satisfies the above inequality for all $\epsilon \in (0, 1)$.
\end{proof}

Let $T$ be a stopping time with respect to $\CompressedMargin_s$, defined as follows:
\begin{equation}\label{eq:margin_stoppingtime}
    T = \min\{j \geq k: \CompressedMargin_s[j] \geq 0\},
\end{equation}
with the convention that the minimum of the empty set is $\infty.$
 Therefore $T < \infty$ is equivalent to the event $\CompressedMargin_s[j] \geq 0$ for some $j \geq k$. From Lemma \ref{lem:margin_ineq2},  $\mathbb{P}(T < \infty) \leq \exp(-k \epsilon^3/3).$

Over the period $[k, T)$, the behavior of $\CompressedMargin_s$ is identical to that of the simple random walk $\textsf{W}$ defined in \eqref{eq:random_walk_defn}. More precisely, writing $N_0-N_1(w)$ as
short for $N_0(w) - N_1(w)$,  for any $j \in \{k, \ldots, T\}$,
\begin{align*}
   & \CompressedMargin_s[j] - \CompressedMargin_s[k] \iftoggle{arxiv}{}{\\
   &} = N_0 - N_1 (\CompressedCharString_s[k+1:j])
\end{align*}
and, as random processes,
\begin{align*}
N_0-N_1 (\CompressedCharString_s[k+1:j]) 
\stackrel{d.}{=} (W[j-k]: j\geq k).  
\end{align*}
If $T = \infty$, $\CompressedMargin_s[k] < 0$. Putting the above facts together, we get the following result due to the union bound and Lemma \ref{lem:random_walk_bound}:
\begin{align*}
    &\mathbb{P}(\CompressedMargin_s[j] \geq -c(j - k) \text{ for some } j \geq 2k) \\ 
    &~~~~~~ \leq \mathbb{P}(T < \infty) + \mathbb{P}(W[j-k] \geq -c(j - k) \text{ for some } j \geq 2k)\\
    &~~~~~~ \leq \exp{(-k\epsilon^3/3)} + 2\exp{(-k(\epsilon - c)^2/3)}
\end{align*}
Replacing $k$ by $k'/2$ in the above equation yields:
\begin{align*}
&\mathbb{P}(\CompressedMargin_s[j] \geq -cj + ck'/2 \text{ for some } j \geq k') \iftoggle{arxiv}{\leq}{\\
 \leq &}\exp{(-k' \epsilon^3/6)} + 2\exp{(-k'(\epsilon - c)^2/6)}
\end{align*}
By Lemma \ref{lem:Uheard_line_bnd},
\begin{align*}
&\mathbb{P}(\textsf{CompressedUnheard}_{h,s}[j]  \geq   cj - \frac{ ck'} 2 \text{ for some } j \geq k')  \\
 = &\mathbb{P}(\textsf{CompressedUnheard}_{h,s}[j]  \geq \frac{ ck'} 2 + c(j-k')  \text{ for some } j \geq k')  \\
 \leq &\left[\frac{1}{(1-q)(1-(1-q)^c)}\right] \exp(-k'cq/2)  
\end{align*}
Combining the bounds in this section shows that \eqref{eq:bound_margin}  and \eqref{eq:bound_unheard}  and thus also \eqref{eq:main_settlement_bnd} hold if
\begin{align*}
  \exp(-kf/32)  + \exp{(-k' \epsilon^3/6)} + 2\exp{(-k'(\epsilon - c)^2/6)} & \leq p_{\textsf{settlement}}  \\
   \left[\frac{1}{(1-q)(1-(1-q)^c)}\right] \exp(-k'cq/2) & \leq  p_{\textsf{unheard}} 
\end{align*}
for some choice of $c.$   Let $c=\epsilon/2$ and use the fact $k' \geq 3kf/4$ to get that the following is sufficient.
\begin{align*}
  \exp(-kf/32)  + \exp{(-kf \epsilon^3/12)} + 2\exp{(-kf\epsilon^2/32)} & \leq p_{\textsf{settlement}}  \\
   \left[\frac{1}{(1-q)(1-(1-q)^{\epsilon/2})}\right] \exp(-kf \epsilon q/8) & \leq p_{\textsf{unheard}} 
\end{align*}
Also, $q \geq p >0.5.$   Thus, \eqref{eq:main_settlement_bnd} holds for
\begin{align*}
   p_{\textsf{settlement}} &=   \exp{(-kf \epsilon^3/12)} + 3\exp{(-kf\epsilon^2/32)}     \\
   p_{\textsf{unheard}}  & =  \left[\frac{2}{ 1-(1/2)^{\epsilon/2}}\right] \exp(-k f \epsilon/16)
\end{align*}

\subsection{On \textsf{Advantage} and proof of chain quality bound} \label{sec:advantage}

We prove bound \eqref{eq:main_quality_bnd} in this section. By Section \ref{sec:time_reduction}, it suffices to prove
$\mathbb{P}(F_0\cup F_2) \leq p_{\textsf{CQ}} + |\mathcal{I}| \Tilde{p}_{\textsf{unheard}}.$
Let $\gamma, r,$ and $c$ be positive constants, to be specified below, such that
$\gamma+\mu < cr < c < \epsilon.$  We use the fact that
$\Unheard_{\mathcal{I}}$ is the maximum over the $|\mathcal{I}|$ processes
 $\Unheard_h$ with $h\in\mathcal{I}.$
It suffices to prove the following bounds, where
$k' = \lceil rkf \rceil,$ and $h$ denotes an arbitrary special honest user.
\begin{align}
\iftoggle{arxiv}{}{&} \mathbb{P}(T^s_{k'} > k )  \iftoggle{arxiv}{}{\nonumber  \\}  +           \iftoggle{arxiv}{}{&}
\mathbb{P}(\CompressedAdvantage_s[j] \geq -cj  + kf (\gamma + \mu)  \text{ for some } j \geq k')  \iftoggle{arxiv}{}{\nonumber \\ }&
\leq p_{\textsf{CQ}} \label{eq:bound_advantage} \\
\iftoggle{arxiv}{}{&}\mathbb{P}(\CompressedUnheard_{h,s}[j]  \geq \iftoggle{arxiv}{}{~~~} cj - kf (\gamma + \mu)   \text{ for some } j \geq k') \iftoggle{arxiv}{}{ \nonumber \\}
&\leq \Tilde{p}_{\textsf{unheard}}   \label{eq:bound_unheard_a}
\end{align}
Lemma \ref{lemma:time_scales} shows that $ \mathbb{P}(T^s_{k'} > k ) \leq \exp(-kf(1-r)^2/2).$
Next, note that \iftoggle{arxiv}
{
\[\CompressedAdvantage_s[j] = N_0 - N_1(\CompressedCharString_s[1:j]) + kf \mu + \Reach[s].\]
}
{
$ \CompressedAdvantage_s[j]$ is equal to:
\[N_0 - N_1(\CompressedCharString_s[1:j]) + kf \mu + \Reach[s].\]
}
Therefore,
\begin{align*}
&\mathbb{P}(\CompressedAdvantage_s[j] \geq -cj  + kf (\gamma + \mu) \text{ for some } j \geq k') \\
\leq  & \mathbb{P}(N_0 - N_1(\CompressedCharString_s[1:j]) \geq - cj \text{ for some } j \geq k')  \iftoggle{arxiv}{}{\\
&~~~~~~ }   + \mathbb{P}(\Reach[s] \geq kf \gamma )
\end{align*}
Lemma \ref{lem:CCS} implies that for any $s \in \mathbb{N}$ and any $j \geq 2$, 
\[N_0 - N_1(\CompressedCharString_s[2:j]) \stackrel{d.}{=} \textsf{W}[j] - \textsf{W}[1] \]
where $\textsf{W}$ is a simple random walk as defined in \eqref{eq:random_walk_defn}. Moreover, $N_0(\CompressedCharString_s[1:j]) - N_1(\CompressedCharString_s[1:j])$
stochastically dominates $\textsf{W}$, because its value at $j=1$ is one with probability greater than $p.$   By Lemma \ref{lem:random_walk_bound},
\iftoggle{arxiv}
{
\begin{equation*}
    \mathbb{P}(N_0 - N_1(\CompressedCharString_s[1:j] \geq - cj \text{ for some } j \geq k') \leq 2\exp(-k'(\epsilon - c)^2/3)
\end{equation*}
}
{
\begin{multline*}
    \mathbb{P}(N_0 - N_1(\CompressedCharString_s[1:j] \geq - cj \text{ for some } j \geq k') \\ \leq 2\exp(-k'(\epsilon - c)^2/3)
\end{multline*}
}
We next bound $\Reach[s]$ as follows:
\begin{align*}
    \mathbb{P}(\Reach[s] \geq kf \gamma  ) 
    &\leq \left(\frac{1 - p}{p}\right)^{kf \gamma } \quad \text{by Lemma \ref{lem:reach_dist_bound}} \\
    &= \left(\frac{1 + \epsilon}{1 - \epsilon}\right)^{-kf \gamma}\quad \text{by Definition \ref{def:eps_honest_maj}} \\
    &\leq \exp(-2kf \gamma \epsilon) ,
\end{align*}
where the final step follows because $\log((1+\epsilon)/(1-\epsilon)) \geq 2 \epsilon$  for   $ \epsilon \in [0, 1)).$   By Lemma \ref{lem:Uheard_line_bnd},
\iftoggle{arxiv}
{
\begin{align*}
&\mathbb{P}(\textsf{CompressedUnheard}_{h,s}[j]  \geq   cj - kf (\gamma + \mu) \text{ for some } j \geq k')  \\
= &\mathbb{P}(\textsf{CompressedUnheard}_{h,s}[j]  \geq  kf(cr-\gamma-\mu) + c(j-k') \text{ for some } j \geq k')  \\
 \leq  &\left[\frac{1}{(1-q)(1-(1-q)^c)}\right] \exp(-kf(cr-\gamma-\mu)q)  
\end{align*}
}
{
\begin{align*}
&\mathbb{P}(\textsf{CompressedUnheard}_{h,s}[j]  \geq   cj - kf (\gamma + \mu) \text{ for some } j \geq k')  \\
&= \mathbb{P}(\textsf{CompressedUnheard}_{h,s}[j]  \geq  kf(cr-\gamma-\mu) + c(j-k') \\
&\hspace{1in} \text{ for some } j \geq k')  \\
& \leq    \left[\frac{1}{(1-q)(1-(1-q)^c)}\right] \exp(-kf(cr-\gamma-\mu)q)  
\end{align*}
}
Combining the bounds in this section shows that  \eqref{eq:bound_advantage}  and \eqref{eq:bound_unheard_a}  and thus also \eqref{eq:main_quality_bnd} holds if
\begin{align*}
 & \exp(-kf(1-r)^2/2) + 2\exp(-k'(\epsilon - c)^2/3) + \exp(-2kf \epsilon \gamma ) \leq p_{\textsf{CQ}}  \\
&   \left[\frac{1}{(1-q)(1-(1-q)^c)}\right] \exp(-kf(cr-\gamma - \mu)q) \leq  \tilde{p}_{\textsf{unheard}} 
\end{align*}
for some choice of $\gamma$, $c$, and $r.$  Select these constants so that the five values,
$\mu, \gamma+ \mu, cr, c, \epsilon,$ form an arithmetic sequence, i.e. the consecutive values each differ by $\gamma = \frac{\epsilon - \mu} 4.$   Observe that
$1 -  r =  1 - \frac{c - \gamma } c = \gamma/c \geq  \gamma$  and use the fact $k' \geq rk.$  So it is sufficient that:
\begin{align*}
&  \exp(-kf \gamma^2/2)  + 2\exp(-kf \gamma^2 /3) + \exp(-2kf \gamma  \epsilon ) \leq p_{\textsf{CQ}}  \\
&  \left[\frac{1}{(1-q)(1-(1-q)^c)}\right] \exp(-kf \gamma q) \leq  \tilde{p}_{\textsf{unheard}} 
\end{align*}
Also, $q \geq p >0.5,$ $c\geq \epsilon/2,$ and $\gamma = \frac{\epsilon - \mu} 4$  can be used.   Thus, \eqref{eq:main_quality_bnd} holds for
\begin{align*}
   p_{\textsf{CQ}} &= 4 \exp(-kf (\epsilon-\mu)^2 /48)   \\
   \tilde{p}_{\textsf{unheard}} & =   \left[\frac{2}{ 1-(1/2)^{\epsilon/2}}\right] \exp(-kf(\epsilon- \mu) /8)
\end{align*}

\end{document}